\tikzset{crossmark/.style={thick,inner sep=1.5pt}}
	\crefname{algocf}{Algorithm}{algs.}
	\Crefname{algocf}{Algorithm}{Algorithms}
\newcommand{\sk}{\mathsf{sk}}
\newcommand{\dd}{.\,.}
\newcommand{\MI}{\mathsf{MI}}
\newcommand{\rot}{\mathsf{rot}}
\newcommand{\eps}{\varepsilon}%
\newcommand{\hd}[2]{\ham( #1, #2)}
\newcommand{\hdk}[3]{\ham_{\le #3}( #1, #2)}
\newcommand{\ed}{\mathsf{ed}}
\newcommand{\edd}[2]{\ed( #1, #2)}
\newcommand{\ham}{\mathsf{hd}}
\newcommand{\eval}{\mathsf{exp}}
\newcommand{\pref}{\mathrm{Pref}}
\newcommand{\suf}{\mathrm{Suf}}
\newcommand{\per}{\mathcal{P}}
\newcommand{\GG}{\mathcal{G}}
\newcommand{\skh}{\sk}
\newcommand{\Ot}{\tilde{O}}
\newcommand{\FL}{F_{\mathrm{CVL}}}
\newcommand{\compress}{\mathrm{Compress}}
\newcommand{\splitl}{\mathrm{Split}}
\newcommand{\process}{\mathrm{Process}}
\newcommand{\score}{w}
\DeclareMathOperator{\polylog}{polylog}
\newcommand{\enc}{\mathrm{enc}}
\newcommand{\koucky}{Kouck\'y\xspace}
\pgfplotsset{compat=1.18}
\newtheorem{fact}[theorem]{Fact}
\title{Streaming periodicity with mismatches, wildcards, and edits}
\author{Taha El Ghazi}{DIENS, \'{E}cole normale sup\'{e}rieure de Paris, PSL Research University, France}{}{}{}
\author{Tatiana Starikovskaya}{DIENS, \'{E}cole normale sup\'{e}rieure de Paris, PSL Research University, France}{tat.starikovskaya@gmail.com}{0000-0002-7193-9432}{}
\authorrunning{T.~El~Ghazi and T.~Starikovskaya}
\keywords{approximate periods, pattern matching, streaming algorithms} 
\begin{document}
\thispagestyle{empty}
\maketitle 
\begin{abstract}
In this work, we study the problem of detecting periodic trends in strings. While detecting exact periodicity has been studied extensively, real-world data is often noisy, where small deviations or mismatches occur between repetitions. This work focuses on a generalized approach to period detection that efficiently handles noise. Given a string $S$ of length $n$, the task is to identify integers~$p$ such that the prefix and the suffix of $S$, each of length $n-p+1$, are similar under a given  distance measure. Ergün et al. [APPROX-RANDOM 2017] were the first to study this problem in the streaming model under the Hamming distance. In this work, we combine, in a non-trivial way, the Hamming distance sketch of Clifford et al. [SODA 2019] and the structural description of the $k$-mismatch occurrences of a pattern in a text by Charalampopoulos et al. [FOCS 2020] to present a more efficient streaming algorithm for period detection under the Hamming distance. As a corollary, we derive a streaming algorithm for detecting periods of strings which may contain wildcards, a special symbol that match any character of the alphabet. Our algorithm is not only more efficient than that of Ergün et al. [TCS 2020], but it also operates without their assumption that the string must be free of wildcards in its final characters. Additionally, we introduce the first two-pass streaming algorithm for computing periods under the edit distance by leveraging and extending the Bhattacharya--\koucky's grammar decomposition technique [STOC 2023].  
\end{abstract}

\newpage
\section{Introduction}
\pagenumbering{arabic} 
\setcounter{page}{1}
 In this work, we consider the problem of computing periodic trends in strings. Informally, a string has period $p$ if its prefix of length $n-p+1$ equals its suffix of length $n-p+1$. Alternatively, a string has period $p$ if it equals its prefix of length $p$ repeated a (possibly, fractional) number of times. The problem of detecting periodic trends in strings has numerous practical applications, including fields like astronomy, financial analytics, and meteorology (see e.g.~\cite{4053047}). 
The nature and the volume of the data in the applications ask for algorithms able to process the input in very little space and in (almost) real time, such as streaming algorithms. In the streaming setting, we assume that the input string arrives one character at a time and account for all space used, even for the space used to store information about the input data, which results in ultra-efficient algorithms. 

The first streaming algorithm for detecting exact periods of a stream of length $n$ was given by Erg\"{u}n, Jowhari, and Sa\u{g}lam~\cite{10.1007/978-3-642-15369-3_41}, who presented an algorithm with $O(\log^2 n)$ space.\footnote{Hereafter, the space is measured in bits.}

While the problem of detecting exact periodicity is fundamental to string processing, in practice data is rarely exactly periodic. For example, a string $abcabcabaabaaba$ is clearly repetitive, but does not have a small exact period in the sense above. To account for these variations in natural data, Erg\"{u}n, Jowhari, and Sa\u{g}lam~\cite{10.1007/978-3-642-15369-3_41} proposed another streaming algorithm, which allows for approximately computing the Hamming distance (the number of mismatches) between a stream and a string with period $p$, for a given integer $p$. Their algorithm computes the distance with an approximation factor $2+\varepsilon$ in space $\tilde{O}(1/\varepsilon^2)$. A disadvantage of this algorithm is that $p$ must be known in advance, and knowing $p$ approximately will not suffice. 

Later, Erg\"{u}n, Grigorescu, Azer, and Zhou~\cite{DBLP:conf/approx/ErgunGAZ17} suggested a different version of this problem: given a string of length $n$ and an integer $k$, they asked for all integer $p$ such that the Hamming distance between the string and its copy shifted by $p$ is at most $k$ (see~\cref{sec:prelim} for a definition). They called such $p$ \emph{$k$-mismatch periods}. They proved that for computing all $k$-mismatch periods of a streaming string of length $n$ one needs $\Omega(n)$ space and that for computing all $k$-mismatch periods in $[1,n/2]$, one needs $\Omega(k \log n)$ space. On the other hand, prior work~\cite{DBLP:conf/approx/ErgunGAZ17, ergun2017streamingperiodicitymismatches} claimed a streaming algorithm that, given a string of length~$n$, computes all its $k$-mismatch periods in~$[1,n/2]$ using only $O(k^4 \log^9 n)$ space. However, we found significant gaps in the proof of correctness. Many key arguments are only briefly outlined in both the conference version~\cite{DBLP:conf/approx/ErgunGAZ17} and the longer arXiv manuscript~\cite{{ergun2017streamingperiodicitymismatches}}, and as confirmed in private communication with the authors, a full journal version of the paper has not been published. In particular, \cite[Theorem 28]{ergun2017streamingperiodicitymismatches} is proved only for the case when the string has two distinct $k$-mismatch periods, and is stated to generalize easily to the case when the string has more than two distinct periods. We believe this generalization does not hold; see \cref{sec:counterexample} for a detailed explanation.

Finally, in a different work~\cite{DBLP:journals/mst/ErgunGAZ20} Erg\"{u}n, Grigorescu, Azer, and Zhou considered the problem of computing periods of incomplete streaming data, where some characters are replaced with wildcards, a special symbol that matches each character of the alphabet. We say that a wildcard-containing string $T$ has an integer period $p$ if $T$ shifted by $p$ positions matches itself. They showed that a streaming algorithm which computes all periods of an $n$-length string with wildcards requires $\Omega(n)$ space. Conversely, they demonstrated a streaming algorithm which given a string $T$ with $k$ wildcards computes all its periods $p \le n/2$ under condition that there are no wildcards in the last $p$ characters of $T$. The algorithm uses $O(k^3 \polylog n)$ space. They also showed that for $k = o(\sqrt n)$, such an algorithm must use $\Omega(k \log n)$ space. 

\paragraph*{Our results. }
In this work, we continue the study initiated by Erg\"{u}n, Grigorescu, Azer, and Zhou~\cite{DBLP:conf/approx/ErgunGAZ17}. We first give a streaming algorithm for computing the $k$-mismatch periods of a string (\cref{th:ham_main}), providing full details and improving over the claimed space bound of Erg\"{u}n, Grigorescu, Azer, and Zhou~\cite{DBLP:conf/approx/ErgunGAZ17} by a factor of $k^2 \log^5 n$.

As an almost immediate corollary, and our second contribution, we obtain a similar result for computing periods of a string with few wildcards (\cref{th:wildcards_main}). Specifically, we both improve the complexity of the algorithm by Ergün, Grigorescu, Azer, and Zhou~\cite{DBLP:journals/mst/ErgunGAZ20} by a factor of $k \polylog n$, narrowing the gap between the upper and lower bounds, and remove the assumption that there are no wildcards in the last characters of the input string. 

As our third and final contribution, we extend our results to the edit distance. Informally, we say that a string has a \emph{$k$-edit period} $p \le n/2$ if the edit distance (the number of deletions, insertions, and substitutions required to transform one string into another) between the input string and its copy shifted by $p$ characters is at most $k$ (see \cref{sec:prelim} for a formal definition). We apply the Bhattacharya--Kouck{\'{y}}'s grammar decomposition~\cite{DBLP:conf/stoc/Bhattacharya023} and then show that we can compute the $k$-edit periods of the input by computing the $k$-mismatch periods of the stream of suitably encoded grammars. As our algorithm for $k$-mismatch periods needs to know the length of the stream (assumption already present in~\cite{DBLP:conf/approx/ErgunGAZ17}), we make two passes: first, we compute the decomposition and the length of the stream, and in the second pass we compute the periods. As a result, we develop the first two-pass streaming algorithm for computing $k$-edit periods of a string in $\Ot(k^2 n)$ time and $\Ot(k^4)$ space (\cref{th:edit_main}).\footnote{Hereafter, $\Ot$ means that we hide factors polylogarithmic in $n$. ``Two-pass'' means that the algorithm reads $T$ as a stream twice.} We emphasize that in this work, we deliberately chose to treat our algorithm for $k$-mismatch periods as a black box, using it as a direct application in the context of $k$-edit periods. While it is conceivable that opening this black box and combining its internal components with the Bhattacharya--Kouck{\'{y}} technique could lead to a one-pass streaming algorithm for $k$-edit periods, pursuing this direction would require significant new insights. We leave this as a promising open question for future research.

\paragraph*{Related work. }
Other repetitive string structures that have been considered in the literature include palindromes (strings that read the same forward and backwards) and squares (strings equal to two copies of a string concatenated together). Berebrink et al.~\cite{BEM+14} followed by Gawrychowski et al.~\cite{GawrychowskiMSU19} studied the question of computing the length of a maximal palindromic substring of a stream that is a palindrome. Merkurev and Shur~\cite{MS22} considered a similar question for squares. Bathie et al.~\cite{DBLP:conf/isaac/BathieKS23} considered the problem of recognising prefixes of a streaming string that are at Hamming (edit) distance at most $k$ from palindromes and squares.

\section{Results and technical overview}
\label{sec:prelim}
In this work, we assume the word-RAM model of computation and work in a particularly restrictive streaming setting. In this setting, we assume that the input string arrives as a stream, one character at a time. We must account for all the space used, including the space used to store information about the input. Throughout, the space is measured in bits.

\paragraph*{Notations and terminology.} We assume to be given an alphabet $\Sigma$, the elements of which, called \textit{characters}, can be stored in a single machine word of the word-RAM model.
For an integer $n\geq 0$, we denote the set of all length-$n$ strings by $\Sigma^n$, and we set $\Sigma^{\le n} = \bigcup_{m=0}^n \Sigma^m$ as well as $\Sigma^* = \bigcup_{n= 0}^\infty \Sigma^n$. The empty string is denoted by $\eps$. For two strings $S,T \in\Sigma^*$, we use $ST$ or $S\cdot T$ indifferently to denote their concatenation.
For an integer $m \ge 0$, the string obtained by concatenating $S$ to itself $m$ times is denoted by~$S^m$; note that $S^0=\eps$. Furthermore, $S^\infty$ denotes an infinite string obtained by concatenating infinitely many copies of $S$. For a string $T \in\Sigma^n$ and $i\in [1\dd n]$,\footnote{For integers $i,j\in \mathbb{Z}$, denote $[i\dd j] =
	\{k \in \mathbb{Z} : i \le k \le j\}$, $[i
	\dd j)=\{k \in \mathbb{Z} : i \le k <
	j\}$, and $(i\dd j]={\{k \in \mathbb{Z}: i
		< k \le j\}}$.} the $i$th character of $T$ is denoted by $T[i]$.
We use $|T| = n$ to denote the length of~$T$.
For $1 \le i, j\le n$, 
$T[i\dd j]$ denotes the substring $T[i] T[{i+1}]\cdots T[j]$ of~$T$ if $i\le j$
and the empty string otherwise.   
When $i=1$ or $j=n$, we omit them, i.e., we write $T[\dd j] = T[1\dd j]$
and $T[i\dd ] = T[i\dd n]$. 
We say that a string~$P$ is a \emph{prefix} of $T$ if there exists $j\in [1\dd n]$
such that $P = T[\dd j]$, and a \emph{suffix} of $T$ if there exists $i\in [1\dd n]$
such that $P = T[i\dd ]$.  
A non-empty string $T\in \Sigma^n$ is \emph{primitive} if $T^2[i \dd j] = T$ implies $i = 1$ or $i = |T|+1$.

\subsection{\texorpdfstring{$k$-mismatch periods}{k-mismatch periods}}
The Hamming distance between two strings $S, T$ (denoted $\hd{S}{T}$) is defined to be equal to infinity if $S$ and $T$ have different lengths,
and otherwise to the number of positions where the two strings differ (mismatches).
We define the \emph{mismatch information} between two length-$n$ strings $S$ and $T$,
$\MI(S,T)$  as the set $\{(i, S[i], T[i]) : i\in [1\dd n]\text{ and } S[i] \neq T[i]\}$.
An integer $p$ is called a \emph{$k$-mismatch period} of a string $T$ of length $n$ if  $1 \le p \le n$ and~$\hd{T[1 \dd n-p+1]}{T[p \dd n]} \le k$.

\begin{theorem}[Informal statement of \cref{th:ham_main}]
\label{th:ham_main_simplified}
Given a string $T$ of length $n$ and an integer $k$, there is a streaming algorithm that computes all $k$-mismatch periods of $T$ in $[1,n/2]$. Furthermore, for each detected $k$-mismatch period $p$, it also returns $\MI(T[p \dd ], T[\dd n-p+1])$. The algorithm uses $O(n \cdot k \log^5 n)$ time and $O(k^2 \log^3 n)$ space, and is correct w.h.p.\footnote{Hereafter, w.h.p. stands for probability at least $1-1/n^c$, for a constant $c > 1$.} 
\end{theorem}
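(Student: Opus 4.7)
The plan is to combine, in a streaming fashion, the $k$-mismatch Hamming-distance sketch of Clifford et al.\ with the dichotomy-type structural result of Charalampopoulos et al., which states that the $k$-mismatch occurrences of a pattern $P$ in a text of length $O(|P|)$ either number $O(k)$ in total, or lie in $O(k)$ arithmetic progressions whose common difference is the shortest approximate period of $P$. The key observation is that $p\in[1,n/2]$ is a $k$-mismatch period of $T$ if and only if the shift by $p$ of the prefix $T[\dd n-p+1]$ realizes a $k$-mismatch match within $T$; all candidate periods therefore concern overlaps of length larger than $n/2$, which places us in the regime of the Charalampopoulos et al.\ theorem once instantiated on a suitable prefix of $T$.

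Concretely, I keep throughout the stream a compact description of the candidate set as $O(k)$ arithmetic progressions of shifts, each stored by first element, common difference, and length, for a total of $O(k^2\log n)$ bits. Alongside, for each progression I maintain one Clifford et al.\ sketch comparing $T$ with the shift of $T$ by the progression's common difference, plus $O(\log n)$ auxiliary sketches at synchronization points chosen so that, using only the sketches, I can decide the $k$-mismatch status of \emph{any} element of the progression in $\polylog n$ time \emph{without} materializing it. Upon arrival of a new character, each sketch is updated in $\polylog n$ time using Clifford et al.'s insertion procedure; a progression may need to be split, truncated, or merged with a neighbour when its mismatch count crosses $k$, but all such changes touch only endpoints and cost amortized $\Ot(k)$ per character. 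Once the stream ends, I enumerate the surviving progressions and, for each candidate $p$, verify $\hd{T[\dd n-p+1]}{T[p\dd]}\le k$ and recover $\MI(T[p\dd],T[\dd n-p+1])$ by decoding the sketch.

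The hardest step is this streaming maintenance of the $O(k)$-progression representation: individual candidates may never be enumerated (which would cost $\Omega(k^2)$ per character), and the algorithm must transition smoothly between the ``few candidates'' and ``many candidates'' regimes of the Charalampopoulos et al.\ dichotomy as the stream discovers or loses an approximate period. Fitting within the stated $O(k^2\log^3 n)$-bit space and $O(nk\log^5 n)$ total time is where the specific structure of Clifford et al.'s sketch---evaluable on arithmetic progressions of positions using only the common difference $q$---becomes essential, and constitutes the main technical content of the proof.
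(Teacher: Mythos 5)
Your plan correctly identifies the two key ingredients---the $k$-mismatch sketches of Clifford, Kociumaka, and Porat and the structural theorem of Charalampopoulos, Kociumaka, and Wellnitz---and the observation that $p$ is a $k$-mismatch period iff $T[p\dd]$ is a $k$-mismatch occurrence of a suitable prefix of $T$. However, the proposal, as written, has several gaps that prevent it from being an actual proof.

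\textbf{No single prefix suffices.} You speak of ``a suitable prefix of $T$'' (singular). The structural result of Charalampopoulos et al.\ requires the text to have length at most $\tfrac32$ times the pattern length; a single prefix cannot serve as a filter for all of $p\in[1,n/2]$ while meeting that constraint. The paper resolves this by working with $\Theta(\log n)$ geometrically decreasing prefix lengths $\ell_j=\lfloor n/(3/2)^j\rfloor$, each one assigned to a specific subinterval $[\lfloor n/2\rfloor-\ell_j\dd\lfloor n/2\rfloor-\ell_{j+1}-1]$ of candidate periods. This layering is not an optimization---it is what makes the dichotomy applicable at all, and also what addresses the timing issue below. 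Your proposal does not mention it.

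\textbf{The streaming timing obstacle is unaddressed.} To test a candidate $p$ you need both $\skh_k(T[\dd p-1])$ and $\skh_k(T[\dd n-p+1])$. If $p$ is only \emph{discovered} as a candidate (e.g.\ when a $k$-mismatch occurrence of a prefix ends) after position $n-p+1$ has gone by, the second sketch is lost. The paper chooses the text window $T_j$ so that the filtering occurrence of $P_j$ at position $p$ ends strictly before position $n-p+1$ (\cref{claim:completeness}), which is why $\pref_j$ is always populated before $\suf_j$ needs to be. Your proposal has no mechanism playing this role.

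\textbf{The ``$O(k)$ arithmetic progressions with dynamic split/merge'' data structure is not the right shape.} The dichotomy gives, for a fixed pattern, either $O(k)$ individual occurrences or a \emph{single} arithmetic progression with common difference $|Q|$ where $Q$ is the approximate period of the pattern; it is not a collection of $O(k)$ progressions whose structure evolves as the stream progresses. Correspondingly, the paper decides once per level $j$ which of the two cases holds (by counting occurrences of $P_j$ in $T_j$) and runs a dedicated algorithm for each: hash tables of $O(k)$ positions with attached sketches in the sparse case, and an explicit $Q$-anchored encoding (with the sketch of $Q$ recovered by Boyer--Moore majority vote and the sketch of the suffix $Q[\dd r]$ constructed on the fly) in the dense case. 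There is no ``smooth transition'' between regimes mid-stream, and no merging of progressions; the amortized $\Ot(k)$-per-character maintenance you invoke is precisely the step you concede is hard, and it is not clear it can be done---progressions with different common differences do not merge into a progression, and the decision of which regime $P_j$ is in cannot safely change after $T_j$ has passed.

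In short, the approach is pointed in the right direction and uses the right tools, but the main technical content---the logarithmic prefix hierarchy, the before/after bookkeeping that guarantees both sketches are captured in time, and the per-level periodic/non-periodic split with its two distinct data structures---is exactly what your sketch defers to ``the main technical content of the proof'' without supplying it.
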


To develop our result, we start with a simple idea already present in~\cite{DBLP:conf/approx/ErgunGAZ17}: If $p$ is a $k$-mismatch period of $T$, then for all $1 \le \ell \le n-p$, the position $p$ is a starting position of a $k$-mismatch occurrence of~$T[1 \dd \ell+1]$. This already allows to filter out candidate periods. To test whether an integer $p$ is a $k$-mismatch period, we check if the Hamming distance between $T[1\dd n-p+1]$ and~$T[p \dd n]$ is at most $k$. For this, we aim to use the Hamming distance sketches introduced by Clifford, Kociumaka, and Porat~\cite{DBLP:conf/soda/CliffordKP19} for strings $T[1\dd n-p+1]$ and $T[p \dd n]$. There are two main challenges: First, we might discover a candidate $k$-mismatch period $p$ after passing position $n-p+1$, which would make it impossible to compute the Hamming distance sketch of $T[1\dd n-p+1]$ due to streaming access limitations. To address this, as in~\cite{DBLP:conf/approx/ErgunGAZ17}, we divide the interval $[1,n/2]$ into a logarithmic number of subintervals, filtering positions in each subinterval using progressively shorter prefixes. The second challenge is that the number of candidate periods can be large, and we cannot store all the sketches. Here, we diverge significantly from~\cite{DBLP:conf/approx/ErgunGAZ17}: To store the sketches in small space, we leverage the concatenability of the Hamming distance sketches of~\cite{DBLP:conf/soda/CliffordKP19} and the structural regularity of $k$-mismatch occurrences as shown by Charalampopoulos, Kociumaka, and Wellnitz~\cite{DBLP:conf/focs/Charalampopoulos20}. Combining the two ideas together is non-trivial and this is where the main novelty of our result for the Hamming distance is. The proof of \cref{th:ham_main} is given in \cref{sec:hamming}.

Assume to be given a string containing wildcards. 
By replacing wildcards in a string with a new character, we immediately derive the following:

\begin{restatable}{theorem}{wildcardsmain}
\label{th:wildcards_main}
Given a string $T$ of length $n$ containing at most $k$ wildcards. There is a streaming algorithm that computes the set of periods of $T$ in $[1 \dd n/2]$ in $O(n \cdot k  \log^5 n)$ time and $O(k^2 \log^3 n)$ space. The algorithm is correct w.h.p.
\end{restatable}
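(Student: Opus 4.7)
The approach will be a black-box reduction to \cref{th:ham_main_simplified}. First, I would build on the fly a derived stream $T' \in (\Sigma \cup \{\#\})^n$ by replacing every wildcard of $T$ with a fresh sentinel character $\# \notin \Sigma$; since the substitution is performed character by character as the stream arrives, it preserves the streaming model and incurs no additional cost. Then I would invoke the algorithm of \cref{th:ham_main_simplified} on $T'$ with parameter $2k$ (rather than $k$), and finally filter its output using the mismatch information that the theorem returns.

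The correctness will rest on the following bookkeeping. A mismatch between $T'[1\dd n-p+1]$ and $T'[p\dd n]$ at a position $j$ falls into exactly one of two cases: either (i) exactly one of $T[j], T[j+p-1]$ is a wildcard, in which case the corresponding triple in $\MI(T'[p\dd], T'[\dd n-p+1])$ contains $\#$, or (ii) neither of $T[j], T[j+p-1]$ is a wildcard and $T[j] \neq T[j+p-1]$, which is a genuine obstruction to $p$ being a period of $T$ under the wildcard-matching convention. Each of the at most $k$ wildcards of $T$ can participate in at most two mismatches of type (i)---once as the ``left'' character $T[j]$ of the shifted comparison, and once as the ``right'' character $T[j+p-1]$---so the total type-(i) contribution is bounded by $2k$. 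Consequently, if $p$ is a period of $T$ then $\hd{T'[1\dd n-p+1]}{T'[p\dd n]} \le 2k$ and every returned triple involves $\#$; conversely, any $p$ reported as a $2k$-mismatch period of $T'$ whose entire mismatch information involves $\#$ carries no mismatch of type~(ii) and is therefore a genuine period of $T$.

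For the last step, I would perform the filtering online as each candidate $p$ and its associated mismatch information are emitted by the base algorithm: retain $p$ iff every triple in $\MI(T'[p\dd], T'[\dd n-p+1])$ contains the symbol $\#$. This uses only $O(k \log n)$ bits of extra working memory per candidate, comfortably within the $O((2k)^2 \log^3 n) = O(k^2 \log^3 n)$ space budget; the time bound $O(n \cdot 2k \log^5 n) = O(n k \log^5 n)$ and the high-probability correctness guarantee both transfer directly from \cref{th:ham_main_simplified}. I do not expect a real obstacle: the only subtle point is the factor-of-two blow-up in the mismatch budget caused by wildcards being detected as spurious mismatches, and everything else is a mechanical invocation of the black box.
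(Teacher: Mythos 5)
Your approach is essentially the same as the paper's: replace wildcards by a sentinel $\# \notin \Sigma$, run the $k$-mismatch-period algorithm of \cref{th:ham_main_simplified}, and filter the output by inspecting the returned mismatch information to discard positions with a $\#$-free mismatch. The one place where you differ is that you invoke the base algorithm with threshold $2k$ rather than $k$; this is in fact the correct constant, and your accounting (each of the at most $k$ wildcards may cause up to two spurious mismatches, once on each side of the shifted alignment) is exactly right, as one sees for instance with $T = a\,?\,a\,?\,a$ and $p=2$, which is a period but yields four mismatches in $T'$ against two wildcards. The paper's write-up asserts that a period of $T$ is a \emph{$k$}-mismatch period of $T'$, which is off by this factor of two; since $2k = O(k)$ the stated asymptotic bounds are unaffected, but your version closes a small gap the paper glosses over.
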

\begin{proof}
If we replace the wildcards in $T$ with a new character $\# \notin \Sigma$, then a period $p$ of $T$ is a $k$-mismatch period of the resulting string in the alphabet $\Sigma \cup \{ \# \}$, and we can find it via \cref{th:ham_main_simplified}.  To check that the positions returned by the algorithm are periods of the original string, for each detected $k$-mismatch period, we retrieve the relevant mismatch information and verify that for all mismatching pairs of characters, at least one of those is the special character $\#$.
\end{proof}

\subsection{\texorpdfstring{$k$-edit periods}{k-edit periods}}
\label{sec:edit}
The edit distance between two strings $S,T$ (denoted by $\edd{S}{T}$) is the minimum number of character insertions, deletions, and substitutions required to transform $S$ into~$T$. 
We say that an integer $p$ is a \emph{$k$-edit period} of a string $T$ of length $n$ if $1 \le p \le |T|$ and for some $1 \le i \le n$, $\edd{T[1 \dd i]}{T[p \dd n]} \le k$. 

\begin{restatable}{theorem}{editmain}
\label{th:edit_main}
Given a string $T$ of length $n$ and an integer $k$, there is a two-pass streaming algorithm that computes all $k$-edit periods of $T$ in $\Ot(k^2 n)$ time and $\Ot(k^4)$ space.  
The algorithm is correct w.h.p.
\end{restatable}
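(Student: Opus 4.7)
The strategy is the one previewed in the overview: reduce the computation of $k$-edit periods of $T$ to the computation of $\Ot(k)$-mismatch periods of a suitably encoded stream derived from $T$, and then invoke \cref{th:ham_main_simplified} as a black box. The reason for making two passes is precisely that \cref{th:ham_main_simplified} must know the length of its input stream in advance, while the length of the encoded stream is only known after the first pass completes. In the first pass, I therefore compute both the length $n$ of $T$ and the Bhattacharya--Kouck\'y grammar decomposition $G$~\cite{DBLP:conf/stoc/Bhattacharya023} of $T$, maintained incrementally as $T$ streams in. The decomposition $G$ is a locally consistent parsing with the following crucial property: for any two substrings $A,B$ of $T$ with $\edd{A}{B}\le k$, the parses of $A$ and $B$ by $G$ agree on all but $\Ot(k)$ of their blocks. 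In particular, edit distance on $T$ translates into Hamming distance on the sequence of short block identifiers (Karp--Rabin fingerprints of the block contents, say).

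In the second pass, I re-stream $T$ and, using $G$ computed in the first pass, emit on the fly a new stream $T'$ whose characters are the fingerprints of the consecutive $G$-blocks. Both $|T'|$ and the alphabet size of $T'$ (each symbol fits in $O(\log n)$ bits) are known after the first pass. I feed $T'$ into the streaming algorithm of \cref{th:ham_main_simplified} with parameter $k'=\Theta(k)$. Every $k'$-mismatch period $q$ of $T'$ reported by the algorithm is translated, via the block boundaries stored in $G$, back into a candidate $k$-edit period $p$ of $T$. Using the mismatch information for $q$ returned by \cref{th:ham_main_simplified}, which pinpoints the $\Ot(k)$ positions on which the parses of the prefix $T[\dd n-p+1]$ and the suffix $T[p\dd n]$ differ, I verify each candidate by recomputing the exact edit distance on just the differing blocks and their local context; this takes $\Ot(k^2)$ additional time and $\Ot(k)$ additional space per candidate.

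The main obstacle is proving correctness of the reduction, specifically that every true $k$-edit period $p$ of $T$ survives as an $\Ot(k)$-mismatch period of the encoded stream $T'$ (the other direction being handled by verification). This requires extending the Bhattacharya--Kouck\'y analysis, originally designed to compare two static strings, to our setting in which we simultaneously compare all prefixes of $T$ with all shifted suffixes of the same stream. The extension relies on local consistency: the parse of $T[p\dd n]$ induced by the global decomposition $G(T)$ must agree, up to $\Ot(k)$ blocks, with the parse of a matching prefix of length at most $n-p+1+k$, so that the insertions and deletions witnessing the $k$-edit period only shift block boundaries by amounts that can be absorbed into $\Ot(k)$ mismatches at the fingerprint level. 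I expect this to be the hardest and most delicate part of the argument; once it is in place, the complexity bounds follow directly, since the second pass dominates: \cref{th:ham_main_simplified} is invoked on a stream of length at most $n$ with parameter $\Theta(k)$, using $\Ot(k^2 n)$ time and $\Ot(k^4)$ space, and all other costs (maintaining $G$, emitting $T'$, and verifying $\Ot(k)$ candidates) are absorbed in these bounds.
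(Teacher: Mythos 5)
Your high-level plan matches the paper's: two passes, with the first pass computing the length and the Bhattacharya--Kouck\'y decomposition, and the second pass feeding an encoding of the grammar stream into a $k$-mismatch-period algorithm and verifying candidates using the returned mismatch information. But as written, the proposal leaves out precisely the ingredients that make the reduction actually work within the stated resource bounds, and these are where the paper's real work lies.

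First, your plan to translate a block-level period back to a string-level period ``via the block boundaries stored in $G$'' does not fit in $\Ot(k^4)$ space: the decomposition has $\Theta(n/\mu)$ blocks and storing their boundaries would cost $\Omega((n/k)\log n)$ bits. The paper resolves this by \emph{strengthening} the Hamming-period algorithm (\cref{th:ham_main}, not \cref{th:ham_main_simplified}) to additionally maintain and report a concatenable \emph{weight function} $\score$ alongside the sketches; in the edit setting $\score(\enc(G_1\cdots G_m))=|\eval(G_1\cdots G_m)|$ (\cref{prop:weight}), which is exactly the bookkeeping needed to recover the true position in $T$ without ever storing block boundaries. Using the unaugmented \cref{th:ham_main_simplified} as a black box, as you propose, is not enough.

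Second, you correctly flag the structural correctness of the reduction as ``the hardest and most delicate part,'' but you do not supply it. The existing Bhattacharya--Kouck\'y result (their Theorem~3.12, restated here as \cref{t-decomposition-rolling-prefix}) compares two strings that share a \emph{suffix} (extensions from the left). To compare $T[\dd i]$ with $T[p\dd]$ one also needs the analogous statement for strings sharing a \emph{prefix}; the paper proves this new symmetric version (\cref{t-decomposition-rolling-suffix}) and then combines the two to obtain \cref{cor:gram_both_ends}. That proof is the technical core of the edit-distance result and cannot be waved through by invoking ``local consistency.''

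Two smaller but real discrepancies: (i) A grammar of size $\Ot(k)$ does not fit into a single $O(\log n)$-bit ``fingerprint character,'' and fingerprints alone would not let you decode the mismatching grammars for verification. The paper encodes each grammar as a string of length $\mu=\Ot(k)$ (\cref{fact:prop-of-genc}) so that distinct grammars differ in all $\mu$ positions; consequently the Hamming-period parameter must be $\mu(k+2)=\Ot(k^2)$, not $\Theta(k)$ as you state. (ii) The corresponding $\mu(k+2)$-mismatch period of the encoded stream $\mathcal{T}$ can slightly exceed $|\mathcal{T}|/2$ (\cref{prop:t'-is-small}), which is why \cref{th:ham_main} carries the extra $\Delta$ parameter; you would need it too.
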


\subsubsection{Preliminaries: Grammar decomposition}
We assume familiarity with the notion of straight-line programs (SLP)~\cite{tit/KiefferY00}, which represent a subclass of context-free grammars. The size of an SLP $G$ is  the number of non-terminals and is denoted by $|G|$. Furthermore, $G$ represents a unique string, called the \emph{expansion} of $G$, $\eval(G)$. The length of $|\eval(G)|$ can be computed in $O(|G|)$ time and space~\cite{Lohrey2012AlgorithmicsOS}.

\begin{restatable}{fact}{alloccsuffix}
\label{cor:all-occ-suffix}
Let $G_X$ and $G_Y$ be SLPs representing strings $X$ and $Y$ respectively, and $m = |G_X| + |G_Y|$. Both of the following hold:
\begin{itemize}
\item We can compute $d := \ed(X,Y)$ in $O((m + d^2) \log |XY|)$ time and $O(m \log |XY|)$ space (see \cite[Proposition 2.1]{DBLP:conf/icalp/Bhattacharya023}).
\item Given an integer $k$, we can find all $1 \le i \le |Y|$ such that $\ed(X, Y[i \dd]) \le k$ and the corresponding edit distances in $O((m + k^2) \log |XY|)$ time and $O(m \log |XY|)$ space.
\end{itemize}
\end{restatable}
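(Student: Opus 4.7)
The first bullet is exactly \cite[Proposition 2.1]{DBLP:conf/icalp/Bhattacharya023}, so I focus on the second. The key observation is that $\ed(X, Y[i\dd]) \le k$ forces $||X| - (|Y|-i+1)| \le k$, so only the $O(k)$ positions in $I := [|Y|-|X|-k+1 \dd |Y|-|X|+k+1] \cap [1 \dd |Y|]$ are possible answers. My plan is to handle all of them in a \emph{single} Landau--Vishkin pass on the reversed pair of strings. Writing $X' := \tilde X$ and $Y' := \tilde Y$, the identity $\ed(X, Y[i\dd]) = \ed(X', Y'[1\dd |Y|-i+1])$ recasts the task as computing $\ed(X', Y'[1\dd j])$ for every $j$ with $d := j - |X| \in [-k \dd k]$.

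I would mimic the Landau--Vishkin template underlying the first bullet. First, reverse the productions of $G_X$ and $G_Y$ in $O(m)$ time to obtain SLPs $G_{X'}, G_{Y'}$ of size $m$ for the reversed strings. Second, equip the pair with a longest-common-extension oracle that answers, in $O(\log|XY|)$ time, queries asking for the largest $\ell$ such that $X'[a\dd a+\ell-1] = Y'[b\dd b+\ell-1]$; this is the standard SLP primitive used inside the first bullet's algorithm, and it is built in $O(m\log|XY|)$ time and space. Third, compute, for every $e \in [0\dd k]$ and $d \in [-k \dd k]$, the value $L[e, d] := \max\{\, j \in [0\dd |X|] : \ed(X'[1\dd j], Y'[1\dd j+d]) \le e\,\}$ by the classical recurrence: $L[e, d]$ is obtained from $\max(L[e-1, d-1],\ L[e-1, d]+1,\ L[e-1, d+1]+1)$ and then extended as far as possible along diagonal $d$ via one LCE query. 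This gives $O(k^2)$ table entries, each costing a single LCE query, for a cumulative $O(k^2 \log|XY|)$ running time.

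To read off the answers, I would use the equivalence $\ed(X', Y'[1\dd j]) \le e \iff L[e, j-|X|] = |X|$: for each $i \in I$ the reported edit distance is the smallest $e \in [0\dd k]$ with $L[e, (|Y|-i+1) - |X|] = |X|$, if any exists. Summing the preprocessing and the table computation yields $O((m + k^2)\log|XY|)$ time and $O(m\log|XY|)$ space, as claimed. The main point that requires care, and what I would verify first, is that the SLP LCE oracle really admits the stated bounds (i.e., $O(\log|XY|)$ per query after $O(m\log|XY|)$ preprocessing); once this primitive is in hand, the rest is the usual Landau--Vishkin argument restricted to the $O(k)$ diagonals that can possibly witness a $k$-edit match between $X$ and a suffix of $Y$.
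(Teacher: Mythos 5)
Your proof is correct and takes essentially the same route as the paper: reverse the strings, preprocess an SLP for longest-common-extension queries, and run the Landau--Vishkin diagonal dynamic program restricted to the $O(k)$ feasible diagonals, at a cost of $O(k^2)$ LCE queries. The SLP LCE primitive you flag as the thing to verify first is precisely what the paper invokes from I~\cite{i:LIPIcs.CPM.2017.18}, which gives $O(\log|Z|)$-time queries after $O(m\log|Z|)$-time and -space preprocessing, so the claimed bounds hold.
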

\begin{proof}
	By~\cite{LANDAU198863} (see also the remark at the end of~\cite[Section 5]{DBLP:conf/soda/ColeH98}), all such positions $i$ can be found in at most $(k+1)^2$ \emph{longest common extension} (LCE) queries on a string $Z$ equal to the reverse of $XY$. Given two positions $1 \le i, j \le |Z|$, an LCE query asks for the maximal $\ell$ such that $Z[i \dd i+\ell-1] = Z[j \dd j+\ell-1]$. The string $Z$ can be represented as the expansion of an SLP of size $O(|G_X| + |G_Y|)$. I~\cite{i:LIPIcs.CPM.2017.18} showed that after $O(m \log |Z|)$-time and -space preprocessing of the SLP\footnote{The result of I~\cite{i:LIPIcs.CPM.2017.18} gives tighter bounds, but this is sufficient for our purposes}, LCE queries on~$Z$ can be answered in $O(\log |Z|)$ time. The claim follows.
\end{proof}

One of the central tools of our solution is Bhattacharya--\koucky's grammar decomposition (BK-decomposition)~\cite{DBLP:conf/stoc/Bhattacharya023}. It is a randomised decomposition that uses as source of randomness two families of hash functions $C_1,\dots,C_L$ and $H_0,\dots,H_L$, where $L = O(\log n)$ is a suitably chosen parameter. The decomposition of a string $X$ is a sequence $\GG(X)$ of SLPs.\footnote{Strictly speaking, the decomposition~\cite{DBLP:conf/stoc/Bhattacharya023} outputs run-length SLPs. However, one can transform those grammars into SLPs with a size blow-up polylogarithmic in $n$~\cite{Lohrey2012AlgorithmicsOS}.} For a sequence of SLPs $\GG = G_1 \cdots G_m$, define $\eval(\GG) = \eval(G_1) \cdots \eval(G_m)$. 

\begin{fact}[{\cite[Theorem 3.1]{DBLP:conf/stoc/Bhattacharya023}}]\label{t-decomposition}
Let $X \in \Sigma^{\le n}$, $k \in \mathbb{N}$ be the input parameter of the grammar decomposition algorithm, and let $\GG(X) = G^X_1 \cdots G^X_s$. For all $n$ large enough, $X=\eval(\GG(X))$ and $|G^X_i|= \Ot(k)$ for $i \in \{1,\dots,s\}$ with probability $\ge 1-2/n$.
\end{fact}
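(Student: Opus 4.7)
Since the statement is attributed to Theorem 3.1 of Bhattacharya--\koucky, the plan is to reconstruct the core of their locally consistent hierarchical compression scheme and exhibit the probabilistic analysis that yields both the correctness and the size guarantees. First, I would describe the construction of $\GG(X)$ in $L = O(\log n)$ rounds, with the two hash families supplying all the randomness. Round $0$ views $X$ as the sequence of its individual characters, each treated as a trivial non-terminal. In round $\ell > 0$, given the current sequence of non-terminals, I would use $C_\ell$ to mark a set of break-points (for instance, by declaring a position a break-point whenever its $C_\ell$-value is a strict local minimum within a window of constant size) and group consecutive tokens between two neighbouring break-points into a new non-terminal. The auxiliary hash $H_\ell$ would be used to assign the same name to two groups whose expansions agree, so that recurring substrings share sub-non-terminals and the total SLP size stays controlled. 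The coarsest (top-level) break-points partition $X$ into the segments whose compressing SLPs are $G^X_1,\dots,G^X_s$.

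Correctness, meaning $X = \eval(\GG(X))$, would then follow immediately by induction on the level, since at every round each new non-terminal expands precisely to the concatenation of the expansions of its constituents, and the top-level sequence covers $X$ exactly.

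The main obstacle is the size bound $|G^X_i| = \Ot(k)$. The argument would proceed by first showing that with constant probability any fixed position is a break-point at round $\ell$, so the expected distance between consecutive break-points is $O(1)$; a Chernoff bound combined with a union bound over the $O(n \log n)$ (level, position) pairs then yields, with probability at least $1 - 1/n$, that every block at every level has length $\Ot(1)$. Summing over the $L = O(\log n)$ rounds inside a single top-level segment bounds the number of distinct sub-non-terminals contributing to $G^X_i$, and the $H_\ell$-based sharing prevents duplicates from being counted twice, so that the dependence on $k$ enters only through the threshold at which we stop refining blocks. A final union bound over $i \in \{1,\dots,s\}$ upgrades the per-block estimate to the simultaneous $1 - 2/n$ guarantee claimed in the statement. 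The most delicate point, and the step I expect to be the hardest, is rigorously checking that the $H_\ell$-based sharing really does cap each block's SLP at $\Ot(k)$ rather than cascading unfavourably with the grammar depth; this is precisely the content of the original argument in \cite{DBLP:conf/stoc/Bhattacharya023} and the reason the result is invoked here as a black box.
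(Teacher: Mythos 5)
This statement is a \emph{Fact} that the paper cites verbatim from \cite[Theorem~3.1]{DBLP:conf/stoc/Bhattacharya023} and does not prove itself, so there is no in-paper proof to compare against line by line. The paper does, however, describe the BK-decomposition in full in its appendix (\cref{sec:appendix}), and your reconstruction of that algorithm diverges from it in ways that would make your proof a proof of a different statement about a different construction.

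The most important mismatch is that you have swapped the roles of the two hash families. In the BK-decomposition, $C_\ell : \Gamma^2 \to \Gamma$ is a random \emph{renaming} function: once the deterministic coloring $\FL$ (a Cole--Vishkin/deterministic-coin-tossing scheme, see \cref{fact:color_reduction}) has chosen which length-2 and length-3 windows to merge, $C_\ell(ab)$ produces the new symbol naming a merged pair. $C_\ell$ does not select break-points and is not examined for local minima. The split points are instead chosen by $H_\ell : \Gamma^2 \to \mathbb{N}$: the function $\Split$ starts a new block at position $j$ exactly when $H_\ell(S[j\dd j+1])=0$, an event of probability $1/D$ with $D = 22(4R+28)(k+1)(L+1)$. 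So $H_\ell$ is not used for deduplicating equal subgrammars as you propose; that consistency comes for free because, once $C_\ell$ is fixed, $\Compress$ is deterministic and position-local.

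This in turn breaks the size analysis. Your plan derives block length $\Ot(1)$ at every level from constant break-point density, and then tries to smuggle $k$ back in ``through the threshold at which we stop refining blocks.'' In the actual construction there is no such stopping threshold involving $k$; the recursion terminates only when a block has length $1$. The $\Ot(k)$ bound on $|G^X_i|$ comes from the split probability $1/D = \Theta\bigl(1/(kR\log n)\bigr)$: pieces between consecutive split points have expected length $\Theta(D) = \Ot(k)$, a Chernoff/union bound makes this hold simultaneously for all $\Ot(n)$ (level, position) pairs, and the geometric shrinkage guaranteed by $\Compress$ (which removes at least one of every three symbols, cf.\ property~4 of \cref{fact:color_reduction}) keeps the cumulative rule count per $G^X_i$ at $\Ot(k)$ as the recursion descends. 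The architecture is also top-down (compress, then $\Split$, then recurse on each piece, with $G^X_i$'s emitted at the length-1 leaves), not the bottom-up single sweep you sketch, where the top-level breaks would define the $G^X_i$. Your correctness argument ($X=\eval(\GG(X))$ by level-wise induction) is fine, but as written the size argument is a proof about a different, ESP-style local-minima parsing, not the Bhattacharya--\koucky{} decomposition the Fact refers to.
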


The BK-decomposition can be maintained in a streaming fashion:

\begin{corollary}[{\cite[Lemma 4.2,Theorem 5.1]{bhattacharya2023locallyconsistentdecompositionstrings}}]\label{fact:p-grammarsuffix}
Given a streaming string $X \in \Sigma^\ast$, there is an algorithm that outputs a stream of SLPs $\GG_{\mathrm{def}} = G_1 \cdots G_s$ (referred to as \emph{definite}) and maintains a sequence $\GG_{\mathrm{active}} = G'_1 \cdots G'_t$ of $\Ot(1)$ SLPs (referred to as \emph{active}) such that after having read $X[1 \dd i]$, the concatenation of $\GG_{\mathrm{def}}$ and $\GG_{\mathrm{active}}$ equals $\GG(X[1 \dd i])$. The algorithm uses $\Ot(k)$ time per character and $\Ot(k)$ space (here, we account for all the space used except for the space required to store $\GG_{\mathrm{def}}$). The only operation the algorithm is allowed to perform on $\GG_{\mathrm{def}}$ is appending a grammar from the right. 
\end{corollary}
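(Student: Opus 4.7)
My plan is to establish the corollary by exposing the \emph{local consistency} structure underlying Bhattacharya--Kouck\'y's construction of $\GG(X)$ and then showing that any layered locally consistent parsing can be streamed by maintaining, at each of the $L = O(\log n)$ levels of the construction, a bounded-size frontier buffer of tokens whose phrase boundary at that level is not yet irrevocably fixed.

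I would begin by unpacking the non-streaming construction from \cref{t-decomposition}. The decomposition is built bottom-up in $L$ rounds: at level $\ell$, the hash $C_\ell$ determines which of the current token boundaries persist and which get absorbed, while $H_\ell$ collapses long runs of equal phrases into a single non-terminal. The crucial structural property to extract is that the decision made at level $\ell$ about position $j$ depends only on the level-$\ell$ tokens lying in an $O(1)$-wide neighborhood of $j$; equivalently, once a level-$\ell$ token has sufficient committed context to its right, its place in $\GG(X)$ is fixed and will not change as the stream grows. With this in hand, I would define $\GG_{\mathrm{active}}$ as the union of the per-level frontier buffers repackaged as partial SLPs, and $\GG_{\mathrm{def}}$ as the committed prefix, which is only ever appended to on the right as required.

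Next I would describe the streaming update. When $X[i]$ arrives it is pushed into the level-$0$ buffer; whenever a level-$\ell$ buffer accumulates enough look-ahead to run $C_\ell$ and $H_\ell$ on its leftmost unresolved token, that token is committed and promoted as an input to level $\ell+1$, possibly cascading up through levels $\ell+1,\ldots,L$. A committed top-level phrase becomes a new SLP, and once its left neighbour has likewise been committed, it is appended to $\GG_{\mathrm{def}}$ in a single right-append operation, never to be touched again. To bound resources, I would combine two facts: by local consistency, each per-level buffer holds $\Ot(1)$ tokens, so summed over the $O(\log n)$ levels the frontier has $\Ot(1)$ tokens total; and by \cref{t-decomposition}, each resulting SLP has size $\Ot(k)$. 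Hence $\GG_{\mathrm{active}}$ occupies $\Ot(k)$ bits. For the time bound, each new character triggers at most one commit per level, and each commit costs $\Ot(k)$ for hash evaluation and grammar-construction work on phrases of size $\Ot(k)$, for $\Ot(k)$ amortised per character.

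The main obstacle I anticipate is to pin down the local-consistency window precisely and show it composes across levels. Concretely, I would prove by induction on $\ell$ that (i) a look-ahead of $\Ot(1)$ level-$\ell$ tokens suffices to commit a level-$\ell$ boundary under $C_\ell$ and $H_\ell$, and (ii) because a level-$(\ell+1)$ token is formed from an $\Ot(1)$ number of level-$\ell$ tokens, this translates to only a constant multiplicative blow-up of the look-ahead at the next level, so that the total raw-character look-ahead across all $L$ levels is still $\Ot(1)$. Once this layered locality is in place, the remaining desiderata---agreement between the streamed output and $\GG(X[1 \dd i])$, the append-only interaction with $\GG_{\mathrm{def}}$, and the stated $\Ot(k)$ time and space bounds---all follow directly from the per-level estimates.
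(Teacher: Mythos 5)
The statement labeled \cref{fact:p-grammarsuffix} is a \emph{cited} result in this paper---it is taken directly from Bhattacharya and Kouck\'y's paper on locally consistent string decompositions (Lemma~4.2 and Theorem~5.1 there)---and the present paper contains no proof of its own. So there is no internal proof to compare your proposal against; your write-up is a reconstruction of what the external proof ought to say.

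As such a reconstruction, it has the right flavour (a layered, bounded-frontier simulation of a locally consistent parsing that commits definite grammars left-to-right and appends them to $\GG_{\mathrm{def}}$), but several details misdescribe the BK-decomposition and one step is simply false. First, you swap the roles of the two hash families: in the actual decomposition $H_\ell$ is used by $\Split$ to \emph{mark block boundaries} (a new block starts where $H_\ell$ of a pair is $0$), while $C_\ell$ is used inside $\Compress$ to \emph{replace pairs of characters} by single characters; and the run-length collapsing of $a^r$-blocks is a step of $\Compress$, not of $H_\ell$ as you state. Second, the relationship you posit between consecutive levels is inverted: a level-$(\ell+1)$ block is not ``formed from an $\Ot(1)$ number of level-$\ell$ tokens,'' it is obtained by splitting a \emph{single} (compressed) level-$\ell$ block; the recursion refines blocks, it does not merge them, so the per-level bookkeeping you would actually need is tied to the lengths of the still-open blocks on the recursion stack, not to a fan-in bound. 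Third, your culminating claim that ``the total raw-character look-ahead across all $L$ levels is still $\Ot(1)$'' does not follow even from your own stated premises: a constant multiplicative blow-up of look-ahead per level, compounded over $L = \Theta(\log n)$ levels, is $n^{\Theta(1)}$ raw characters, not $\Ot(1)$. The correct way to obtain the space bound is to argue that each of the $O(\log n)$ levels retains only an $\Ot(1)$-size \emph{compressed} frontier (tokens at that level represented succinctly, not as their raw expansions), which is indeed the local-consistency mechanism that Bhattacharya and Kouck\'y develop; your argument as phrased does not establish this.
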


We further make use of the following claim:

\begin{corollary}\label{cor:gram_both_ends}
	Let $k\le n$ be integers. Assume $X,Y,U,V\in \Sigma^*$, $|XU|,|VY|\le n$, and $\ed(X,Y) \le k$. Let $\GG(XU) = G_1 \cdots G_s$ and $\GG(VY) = G'_1 \cdots G'_{s'}$. With probability at least $1-1/5$, there exist integers $r,r',t, t'$ such that each of the following is satisfied (see \cref{fig:lem_grammars}):
	\begin{enumerate}
		\item $t+1 = s'-t'+1$.
		\item $X = \eval(G_{1} \cdots G_t) \cdot \eval(G_{t+1})[\dd r]$ and $Y=\eval(G'_{t'})[r' \dd] \cdot \eval(G'_{t'+1} \cdots G'_{s'})$.
		\item $G_{i} = G'_{t'+i-1}$ except for at most $k+2$ indices $1 \leq i \leq t+1$.
		\item $\ed(X,Y)$ equals the sum of $\ed(\eval(G_{1}), \eval(G'_{t'})[r' \dd])$, $\sum_{2\leq i\leq t} \ed(\eval(G_{i}),\eval(G'_{t'+i-1}))$, and $\ed(\eval(G_{t+1})[\dd r], \eval(G'_{s'}))$.
	\end{enumerate}
\end{corollary}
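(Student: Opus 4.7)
The plan is to exploit the local consistency of the BK-decomposition~\cite{DBLP:conf/stoc/Bhattacharya023,bhattacharya2023locallyconsistentdecompositionstrings}: each grammar of $\GG(Z)$ is determined (up to its immediate neighbors) by a short window of $Z$ around the characters it covers, so if two strings agree on a long enough common substring, the grammars of their decompositions lying strictly inside that substring coincide, and any single local edit affects only $\Ot(1)$ grammars.

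First I would define the split indices. Let $t\ge 0$ be largest with $\eval(G_1 \cdots G_t)$ a prefix of $X$, and set $r = |X| - |\eval(G_1 \cdots G_t)|$, so that $X = \eval(G_1 \cdots G_t) \cdot \eval(G_{t+1})[\dd r]$. Symmetrically, choose $t',r'$ on the right side of $\GG(VY)$ so that $Y = \eval(G'_{t'})[r' \dd] \cdot \eval(G'_{t'+1} \cdots G'_{s'})$. This makes condition~2 hold by construction; condition~1 ($t+1 = s'-t'+1$) then amounts to the claim that the number of grammars of $\GG(XU)$ spanning $X$ equals the number of grammars of $\GG(VY)$ spanning $Y$, which will follow from the alignment below.

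Next I would fix an optimal edit transformation of cost $d \le k$ from $X$ to $Y$, which induces an alignment decomposing $X$ and $Y$ into matched runs separated by $d$ edit operations. Within any matched run longer than the BK locality window, local consistency applied to $XU$ and $VY$ forces the grammars of $\GG(XU)$ and $\GG(VY)$ strictly inside the run to coincide. Each of the $d\le k$ edits disturbs at most a constant number of grammars on each side, and the two interfaces ($X/U$ and $V/Y$) contribute the two remaining exceptional grammars. Counting aligned grammars on both sides yields condition~1, and the $k+2$ exception bound gives condition~3. Condition~4 follows by restricting the edit transformation to each aligned grammar pair: identical pairs contribute zero, partial boundary pairs and edit-affected pairs contribute exactly their piecewise edit distance, and optimality of the transformation ensures the sum telescopes to $\ed(X,Y)$.

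The probability $1-1/5$ would come from a union bound over the success events of \cref{t-decomposition} applied to both $\GG(XU)$ and $\GG(VY)$ together with the local-consistency events at the $d\le k$ edit positions and at the two interfaces. The main obstacle I anticipate is the bookkeeping around the two partial boundary grammars $G_1 \leftrightarrow G'_{t'}$ and $G_{t+1} \leftrightarrow G'_{s'}$: in each case a full grammar on one side is aligned with a prefix or suffix of another on the other side, and one must verify that local consistency remains robust on the truncated side so that condition~1 is enforced despite two-sided boundary effects, and that restricting the optimal transformation to these partial pieces yields exactly the edit distances $\ed(\eval(G_1),\eval(G'_{t'})[r'\dd])$ and $\ed(\eval(G_{t+1})[\dd r],\eval(G'_{s'}))$ that appear in condition~4.
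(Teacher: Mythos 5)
Your proposal takes a genuinely different route from the paper, but it has a significant gap that you acknowledge partially at the end. Let me name it precisely.

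The paper does not prove the corollary by a direct two-sided locality argument. Instead, it introduces the intermediary decomposition $\GG(X)$ and chains two \emph{one-sided} ``rolling'' theorems through it: (a) the prefix-rolling \cref{t-decomposition-rolling-prefix} (BK's Theorem 3.12) applied to $X$ and $VY$, which gives that the tail grammars of $\GG(X)$ and $\GG(VY)$ match up to $k+1$ exceptions and yields $r'$, $t'$; and (b) a \emph{new} suffix-rolling \cref{t-decomposition-rolling-suffix}, applied to $XU$ and $X$ (with $\ed(X,X)=0$), which gives that the head grammars of $\GG(XU)$ and $\GG(X)$ match except at the single boundary grammar and yields $r$, $t$. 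Chaining through $\GG(X)$ then gives $t+1 = s''=s'-t'+1$ and the $k+2$ exception bound as $(k+1)+1$. The $1-1/5$ probability is the union bound over the two $1/10$ failure events.

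The gap in your proposal is the unsubstantiated claim that ``local consistency applied to $XU$ and $VY$ forces the grammars of $\GG(XU)$ and $\GG(VY)$ strictly inside the matched run to coincide.'' The BK machinery you cite only provides the \emph{one-sided} rolling guarantee of \cref{t-decomposition-rolling-prefix}: if $X$ and $Y$ are both suffixes of two strings, the decompositions of those strings agree on their tails. Here $X$ is a \emph{prefix} of $XU$ while $Y$ is a \emph{suffix} of $VY$, so a direct comparison of $\GG(XU)$ and $\GG(VY)$ would require a two-sided locality statement, which does not follow from the one-sided version without further argument. Establishing even the one-sided suffix version (\cref{t-decomposition-rolling-suffix}) requires a new adaptation of BK's block-partition analysis (a modified form of their Lemma 3.10, see \cref{l-compress-rolling-suffix}), an induction over the $L$ compression levels tracking matching and mismatching blocks, and a careful bound on the probability of ``undesirable splits.'' Your sketch treats this as a black-box local property of the decomposition, but the BK decomposition is \emph{not} deterministically determined by a fixed window — its boundaries depend on the hash functions $H_\ell$ and the multi-level recompression of an alphabet that grows at each level. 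The claim that ``each edit disturbs at most a constant number of grammars'' is exactly what the rolling theorems prove, and cannot simply be asserted. To make your route rigorous you would either have to prove a genuine two-sided rolling theorem, or — as the paper does — reduce to two one-sided applications by introducing $\GG(X)$. You also do not isolate the interface at $X/U$ correctly: without passing through $\GG(X)$, the boundary grammar $G_{t+1}$ in $\GG(XU)$ sees both the end of $X$ and the start of $U$, so relating it to $G'_{s'}$ in $\GG(VY)$ requires exactly the intermediate-decomposition step you omit.
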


Bhattacharya and \koucky~\cite{DBLP:conf/stoc/Bhattacharya023} proved a similar result in the case where $V'$ and $V$ are appended to $X$ and $Y$, respectively, from the left. By construction, the BK-decomposition is (almost) symmetric, which allows to adapt their argument to show an analogous result for the case where $U$ and $U'$ are appended to $X$ and $Y$, respectively, from the right. \Cref{cor:gram_both_ends} follows by first applying the result of Bhattacharya and \koucky~\cite{DBLP:conf/stoc/Bhattacharya023} with $V' = \eps$, and then using our analogous argument with $U' = \eps$. A full proof is provided in \cref{sec:appendix}. 

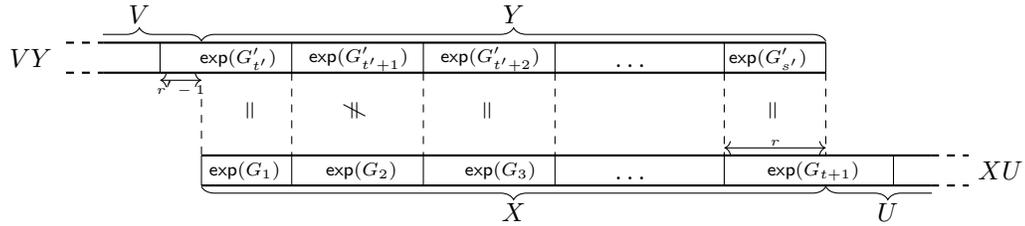
\begin{figure}

\begin{tikzpicture}
    \draw[thick] (2.3,0) -- (12,0);
    \draw[thick] (2.3,0.4) -- (12,0.4);
    \draw[dashed,thick] (12,0) -- (12.5,0);
    \draw[dashed,thick] (12,0.4) -- (12.5,0.4);
    
    \node[anchor=west] at (12.5,0.2) {$XU$};

    \foreach \x in {2.3, 11.5, 5*1.75+0.5} {
        \draw (\x,0) -- (\x,0.4);
    }
    
    \node at (8,0.1) {$\ldots$};
    
    \foreach \x in {2,3,4} {
       \draw (\x*1.75,0) -- (\x*1.75,0.4);
    }

    \draw [decorate,decoration={brace,amplitude=5pt,mirror,raise=.4}] (2.3,0) -- (10.6,0) node[midway,yshift=-1.em] {$X$};
    
    \begin{scope}
        \clip (10.6,-0.75) rectangle (12,2);
        \draw [decorate,decoration={brace,amplitude=5pt,mirror,raise=.4}] (10.6,0) -- (12.25,0) node[midway,yshift=-1.em] {$U$};
    \end{scope}

    \foreach \x [count=\i from 1] in {3.6, 5.15, 7} {
       \node[label={left: \scriptsize{$\eval(G_{\i})$}}] at (\x,0.19) {};
    }
    \node[label={left: \scriptsize{$\eval(G_{t+1})$}}] at (11.3,0.19) {};

    \begin{scope}[yshift=0.75cm]
        \foreach \x in {4.6} {
            \node[label={[rotate=90] \small{$\neq$}}] at (\x,0.125) {};
        }
        \foreach \x in {3.15,10.1,6.3} {
            \node[label={[rotate=90] \small{$=$}}] at (\x,0.125) {};
        }
    \end{scope}
	\draw[thin,<->] (9.25,0.5) -- (10.6,0.5) node[pos=0.5, yshift=2] {\tiny{$r$}};
	
    \begin{scope}[yshift=1.5cm]
        
        \draw[thick] (1,0) -- (10.6,0);
        \draw[thick] (1,0.4) -- (10.6,0.4);
        \draw[dashed,thick] (0.5,0) -- (1,0);
        \draw[dashed,thick] (0.5,0.4) -- (1,0.4);

        \node[label={right: $VY$}] at (-0.5,0.2) {};

        \draw [decorate,decoration={brace,amplitude=5pt,raise=12}] (2.3,0) -- (10.6,0) node[midway,yshift=2.2em] {$Y$};
        \begin{scope}
            \clip (1,0) rectangle (4,2);
            \draw [decorate,decoration={brace,amplitude=5pt,raise=12}] (0.65,0) -- (2.3,0) node[midway,yshift=2.2em] {$V$};
        \end{scope}

        \foreach \x in {1,2,3,4} {
            \draw[] (\x*1.75,0) -- (\x*1.75,0.4);
        }
        \draw[] (5*1.75+0.5,0) -- (5*1.75+0.5,0.4);
        \draw (10.6,0.4)--(10.6,0);
        
        \node[label={left: \scriptsize{$\eval(G'_{t'})$}}] at (5.25 - 1.7,0.185) {};

        \foreach \x [count=\i from 1] in {4,5} {
            \node[label={left: \scriptsize{$\eval(G'_{t'+\i})$}}] at (1.75*\x - 1.7,0.185) {};
        }
        \node[label={left: \scriptsize{$\eval(G'_{s'})$}}] at (10.6,0.185) {};

        \node at (8,0.1) {$\ldots$};
        
        \foreach \x in {2,3,4} {
            \draw[dashed] (\x*1.75,0) -- (\x*1.75,-1.1);

        }
        \draw[dashed] (5*1.75+0.5,0) -- (5*1.75+0.5,-1.1);
        \draw[dashed] (10.6,0)--(10.6,-1.1);
        \draw[dashed] (2.3,-0.35)--(2.3,-1.1);

		\draw[thin,<->] (1.75,-0.1) -- (2.3,-0.1) node[yshift=-3, pos=0.5] {\tiny{$r'-1$}};
    \end{scope}
\end{tikzpicture}
\caption{Grammar decompositions for $VY$ and $XU$ for the case $\ed(X,Y) \le k$.}
\label{fig:lem_grammars}
\end{figure}

Finally, we use the following encoding of SLPs that allows reusing algorithms on strings for sequences of SLPs:

\begin{fact}[{\cite[Lemma 3.13]{DBLP:conf/stoc/Bhattacharya023}}]\label{fact:prop-of-genc}
	Let $\mu = \Ot(k)$ be a parameter. There is a mapping $\enc$ from the set of SLPs output by the BK-decomposition algorithm to the set of strings of length $\mu$ on an alphabet $\Gamma$ of size polynomial in $n$ (the maximum length of the input string of the BK-decomposition algorithm) that guarantees that the following is satisfied:
	\begin{enumerate}
		\item A grammar can be encoded and decoded in $O(\mu)$ time and space;
		\item Encodings of two equal grammars are equal;
		\item Encodings of two distinct grammars output by the decomposition algorithm differ in all $\mu$ characters with probability at least $1-1/n$.
	\end{enumerate}
\end{fact}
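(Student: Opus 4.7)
The plan is to combine a deterministic canonical serialisation of SLPs with a Reed--Solomon-type polynomial evaluation code, where the random evaluation points supply property~3. First I would argue that every SLP $G$ produced by the BK-decomposition admits a canonical serialisation $c_G \in \Gamma_0^\mu$, where $\Gamma_0 = \mathbb{F}_p$ for a prime $p$ of size $n^{\Theta(1)}$ chosen below, and $\mu = \Ot(k)$ is large enough to accommodate the size bound of \cref{t-decomposition}. Concretely, view $G$ in Chomsky normal form, list its non-terminals in the order in which they first appear in a BFS from the start symbol, encode each production as a constant number of elements of $\Gamma_0$, and pad the result with a designated sentinel to length exactly $\mu$. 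Whenever the size bound of \cref{t-decomposition} is violated (an event of probability at most $2/n$), set $c_G$ to an arbitrary fixed string; this adds only $O(1/n)$ to the overall failure probability. By construction, structurally identical SLPs receive identical serialisations.

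Second I would sample, once and for all, $\mu$ independent uniform points $x_1, \ldots, x_\mu \in \mathbb{F}_p$ (stored implicitly as the random seed of $\enc$) and define
\[
\enc(G)[i] \;:=\; \sum_{j=1}^{\mu} c_G[j]\cdot x_i^{\,j-1} \pmod{p}, \qquad i \in [1 \dd \mu],
\]
taking $\Gamma := \mathbb{F}_p$, which is of polynomial size as required. Property~1 then follows from fast multipoint polynomial evaluation: given $(x_i)_i$ and $c_G$, all $\mu$ evaluations are computed in $\Ot(\mu)$ time and space, and decoding inverts $\enc$ by Lagrange interpolation within the same budget, after which the canonical form is unravelled in $O(\mu)$ time. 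Property~2 is immediate: the serialisation is deterministic, so equal grammars yield equal encodings. For property~3, if $G_1 \ne G_2$ then $c_{G_1} \ne c_{G_2}$, so $Q(x) := \sum_j (c_{G_1}[j] - c_{G_2}[j])\,x^{j-1}$ is a nonzero polynomial over $\mathbb{F}_p$ of degree at most $\mu-1$; by Schwartz--Zippel, $\Pr[Q(x_i) = 0] \le (\mu-1)/p$ for each $i$, and a union bound over $i \in [1 \dd \mu]$ yields that $\enc(G_1)$ and $\enc(G_2)$ disagree at every coordinate with probability at least $1 - \mu(\mu-1)/p \ge 1 - 1/n$ once $p = \Theta(n \mu^2) = \text{poly}(n)$.

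The main obstacle I anticipate is fixing a canonical serialisation rigid enough that two structurally identical SLPs, potentially discovered in different streaming contexts of the BK-decomposition, receive the same $c_G$; once the non-terminal identities are pinned down by a structure-only rule such as first-occurrence BFS rank from the start symbol, the serialisation becomes injective on SLPs (so $\enc$ is injective given the random seed) and properties 1--3 follow as sketched.
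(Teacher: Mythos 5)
The statement is cited verbatim from Bhattacharya--Kouck\'y \cite[Lemma~3.13]{DBLP:conf/stoc/Bhattacharya023} and carries no internal proof in this paper, so I can only assess the internal consistency of your argument. Your high-level plan --- fix a deterministic canonical serialisation $c_G \in \Gamma_0^\mu$ of the grammar, then compose with a randomised ``spreading'' code so that distinct serialisations collide in no coordinate --- is the right shape of argument; the Schwartz--Zippel computation correctly gives the per-pair guarantee of Property~3, the handling of the rare size-overflow event is fine, and the canonical-naming issue you flag for Property~2 is indeed the point that needs pinning down.

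The concrete gap is in Property~1. You instantiate the spreading code as Reed--Solomon at $\mu$ \emph{arbitrary} uniformly random points and invoke fast multipoint evaluation for encoding and Lagrange interpolation for decoding. Multipoint evaluation of a degree-$(\mu-1)$ polynomial at $\mu$ arbitrary points costs $\Theta(\mu \log^2 \mu)$ arithmetic operations, and interpolation is the same; you yourself only claim $\Ot(\mu)$, whereas the statement asserts $O(\mu)$ time and space. (One cannot switch to FFT-friendly structured points without giving up the independence that Schwartz--Zippel needs.) The Reed--Solomon machinery is also overkill for what is required: there is no need to make the code systematic or to run interpolation to decode.

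A much simpler spreading code closes the gap and meets the stated complexity. Compute a single Karp--Rabin-style fingerprint $\phi(c_G) = \sum_{j} c_G[j]\, r^{\,j-1} \bmod p$ over $\mathbb{F}_p$ with $p = n^{\Theta(1)} > |\Gamma_0|$ and a uniformly random $r$, and set $\enc(G)[i] := \bigl(c_G[i],\, \phi(c_G)\bigr) \in \Gamma_0 \times \mathbb{F}_p =: \Gamma$. Encoding is one linear scan (Horner's rule) in $O(\mu)$ time and $O(1)$ extra space; decoding is projection to the first coordinate and unpadding, also $O(\mu)$. Property~2 holds because both the canonical serialisation and the fingerprint are deterministic functions of the grammar. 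For Property~3, if $G_1 \neq G_2$ then $c_{G_1} \neq c_{G_2}$, so $\Pr[\phi(c_{G_1}) = \phi(c_{G_2})] \le (\mu-1)/p \le 1/n$ once $p \ge n\mu$; whenever the fingerprints differ, every one of the $\mu$ coordinates of the two encodings differs, since they all carry the fingerprint in the second component. This achieves all three properties with the stated $O(\mu)$ bound, avoids interpolation altogether, and keeps $|\Gamma| = |\Gamma_0|\cdot p$ polynomial in $n$ as required.
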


\subsubsection{Proof of Theorem~\ref{th:edit_main}}
A high-level idea of our algorithm is to apply the grammar decomposition to the stream, and then to reduce the problem to the problem of computing periods with mismatches via~\cref{cor:gram_both_ends} and~\cref{fact:prop-of-genc}. To this end, we need a stronger version of a streaming algorithm for computing periods with mismatches, and in particular, we introduce a \emph{weight} function $\score$ on strings. In the case of the Hamming distance, this function is identically zero. In case of the edit distance, we define it as a partial function on $\Gamma^\ast$, namely, if $\Gamma^\ast \ni S = \enc(G_1 \cdots G_m)$, then $\score(S) = |\eval(G_1 \cdots G_m)|$. 

\begin{restatable}{proposition}{weight}
\label{prop:weight}
The weight functions defined above satisfy each of the following:
\begin{itemize}
		\item For a string $S$ of length $n$, $\score(S)$ can be computed in streaming deterministically using $t_\score(n)$ time per character and $s_\score(n)$ space;
		\item Consider strings $X, Y$. If at least two of $\score(X), \score(Y), \score(XY)$ are defined, then the third one is defined as well and $\score(XY) = \score(X) + \score(Y)$;
		\item Consider strings $X, Y$ that have equal length in $[1 \dd n]$. If $\score(X), \score(Y)$ are defined, then $\score(Y)$ can be computed deterministically in $|\MI(X,Y)| \cdot t_\score(n)$ time and $s_\score(n)$ space.
\end{itemize}
For the Hamming distance, $t_\score(n) = O(1)$ and $s_\score(n) = O(1)$ (trivially), and for the edit distance $t_\score(n) = \Ot(1)$ and $s_\score(n) = \Ot(k)$. 
\end{restatable}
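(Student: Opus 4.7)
The Hamming case is immediate for all three items because $\score \equiv 0$, so I focus on the edit case. There $\score$ is a partial function on $\Gamma^*$ whose domain is the image of $\enc$ on concatenations of BK-decomposition SLPs. By \cref{fact:prop-of-genc} each encoded grammar has length exactly $\mu = \Ot(k)$, so $S$ lies in the domain of $\score$ only if $|S|$ is a multiple of $\mu$; in that case $S$ splits canonically into $|S|/\mu$ blocks of length $\mu$, each decodable to an SLP $G_i$, and $\score(S) = \sum_i |\eval(G_i)|$.

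For the first bullet I would use a routine streaming buffer: keep at most $\mu$ characters together with a running partial sum; each time the buffer fills, decode it in $O(\mu)$ time via \cref{fact:prop-of-genc}(1), compute $|\eval(G_i)| = O(|G_i|) = \Ot(k)$ time and space by the preliminary fact on SLP expansion lengths, add the value to the sum, and reset. Amortized this is $\Ot(1)$ time per character and $\Ot(k)$ space. For the second bullet the key observation is that block alignment is dictated by length modulo $\mu$: whenever two of $\score(X), \score(Y), \score(XY)$ are defined, two of $|X|, |Y|, |XY|$ are multiples of $\mu$, forcing the third to be one as well, after which the $\mu$-block decomposition of $XY$ is literally the concatenation of those of $X$ and $Y$; every $G_i$ is then the same on both sides and additivity of the sum yields $\score(XY) = \score(X) + \score(Y)$.

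The main obstacle is the third bullet, whose budget of $|\MI(X,Y)| \cdot \Ot(1)$ time forbids rereading either string. I would condition on the high-probability event of \cref{fact:prop-of-genc}(3) --- encodings of distinct decomposition grammars disagree in all $\mu$ positions --- and read ``deterministically'' as conditional on this event on the grammars themselves. Assuming $\score(X)$ is available from an earlier pass (the natural scenario in which the lemma is applied), I would compute
\[
\score(Y) \;=\; \score(X) \;+\; \sum_{i\in D}\bigl(|\eval(G'_i)|-|\eval(G_i)|\bigr),
\]
where $D$ is the set of block indices on which the two grammars differ. By the conditioned event $i \in D$ iff \emph{every} position of block $i$ is a mismatch, and in particular at least one such position lies in $\MI(X,Y)$; equivalently, the blocks intersecting $\MI(X,Y)$ are exactly the differing ones. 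Hence a single pass through $\MI(X,Y)$, grouped by block, identifies $D$, and within each such block all $\mu$ characters of both $X$ and $Y$ can be read directly from the mismatch triples, letting me decode $G_i, G'_i$ and evaluate the length difference in $O(\mu) = \Ot(k)$ time per block. Summed over $|D| \le |\MI(X,Y)|/\mu$ blocks this is total time $|\MI(X,Y)| \cdot \Ot(1)$ and space $\Ot(k)$ for the current block's buffer and decoded grammars. The delicate point to argue carefully is exactly the ``all $\mu$ positions differ'' guarantee: without it a differing block could hide some characters of $Y$ outside $\MI(X,Y)$ and break the time bound.
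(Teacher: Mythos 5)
Your proposal is correct and follows essentially the same route as the paper's proof: block-by-block streaming for the first bullet, length/block alignment for the second, and for the third using \cref{fact:prop-of-genc}(3) to conclude that every differing block lies entirely inside $\MI(X,Y)$, so the mismatch triples contain all characters of both $\enc(G_i)$ and $\enc(G_i')$ and each can be decoded in $O(\mu)$ time. Your explicit caveat that the ``deterministic'' claim holds only conditionally on the grammars satisfying the all-$\mu$-mismatches property is a fair and more careful reading of the statement; the paper leaves this implicit.
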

\begin{proof}
The claim is trivial for $\score \equiv 0$. In the following, we consider the weight function for the edit distance. First, consider $S = \enc(G_1 G_2 \cdots G_m)$. Its weight $\score(S)$ can be computed in streaming using $t_\score(n)$ time per character, where $t_\score(n) = \Ot(1)$ and $s_\score(n) = \Ot(k)$ space thanks to \cref{fact:prop-of-genc}. The second property obviously holds. Finally, assume $X = \enc(G_1 G_2 \dots G_m)$ and $Y = \enc(G'_1 G'_2 \dots G'_m)$. Let $\mathcal{I} = \{i \leq m, \enc(G_i) \neq \enc(G_i')\}$. Given $\MI(X,Y)$, because of \cref{fact:prop-of-genc}, we have access to all characters in $\enc(G_i)$ and $\enc(G_i')$ for $i\in \mathcal{I}$. As a result, we can compute $\score(Y) = \score(X) + \sum_{i \in \mathcal{I}} \score(\enc(G_i')) - \score(\enc(G_i))$ in $\MI(X,Y) \cdot t_w(n)$ time and $s_\score(n)$ space.
\end{proof}

In \cref{sec:hamming}, we show the following theorem:

\begin{restatable}{theorem}{hammain}
\label{th:ham_main}
Given a string $T$ of length $n$ and integers $k, \Delta$, there is a streaming algorithm that computes all $k$-mismatch periods of $T$ in $[1 \dd n/2+\Delta]$. Furthermore, for each detected $k$-mismatch period $p$, it also returns weight $\score(T[p \dd])$ and $\MI(T[p \dd ], T[\dd n-p+1])$. The algorithm runs in $O(n \cdot k t_\score(n) \log^5 n)$ time, uses $O(k^2 \log^3 n + s_\score(n) + \Delta\cdot (k \log n+s_\score(n)))$ space, and is correct w.h.p., where $t_\score(n)$ and $s_\score(n)$ are as defined in \cref{prop:weight}.\footnote{By taking $\score \equiv 0$ and $\Delta = 0$, we immediately obtain \cref{th:ham_main_simplified}.}
\end{restatable}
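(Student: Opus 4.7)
I will realise the two-stage strategy sketched after \cref{th:ham_main_simplified}: a \emph{filtering} stage, running in parallel over $O(\log n)$ subranges of candidate periods, followed by a \emph{verification} stage at the end of the stream. During filtering I shall narrow the candidates and maintain just enough information to recover $\sk(T[p\dd n])$ in $\Ot(k)$ time for each surviving $p$; during verification I invoke the Clifford-Kociumaka-Porat sketch on $\sk(T[\dd n-p+1])$ and $\sk(T[p\dd n])$ to decide whether $p$ is a $k$-mismatch period and, if so, recover $\MI(T[p\dd], T[\dd n-p+1])$ and, by \cref{prop:weight}, the weight $\score(T[p\dd])$.

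\textbf{Filtering via structural regularity.} I partition $[1 \dd n/2+\Delta]$ into ranges $R_j := [2^{j-1} \dd 2^j)$ for $j=1,\dots,\lceil \log n\rceil$, together with the short tail $(n/2 \dd n/2+\Delta]$. Any $k$-mismatch period $p \in R_j$ must be a starting position of a $k$-mismatch occurrence of $P_j := T[1 \dd 2^{j+1}]$, and $P_j$ is known by time $2^{j+1}$, so in parallel for each $j$ I run a streaming $k$-mismatch pattern matcher for $P_j$. The number of candidates in $R_j$ can reach $\Theta(|R_j|)$, so instead of storing their sketches explicitly I exploit the structural description of~\cite{DBLP:conf/focs/Charalampopoulos20}: within any window of length $O(|P_j|)$, the $k$-mismatch occurrences of $P_j$ are either $O(k)$ in number or else form an arithmetic progression with common difference the primitive period $q$ of $P_j$, with $O(k)$ outliers. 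In the progression case, consecutive members $p$ and $p+q$ differ only in $O(k)$ positions (the mismatch witnesses), so by concatenability of the Clifford-Kociumaka-Porat sketches and the third bullet of \cref{prop:weight}, both $\sk(T[p+q\dd])$ and $\score(T[p+q\dd])$ can be derived from those of $p$ in $\Ot(k)\cdot t_\score(n)$ time.

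\textbf{Stored data and resource accounting.} For each $R_j$ I keep (i) a rolling Clifford-Kociumaka-Porat sketch of the current prefix of $T$ from which $\sk(T[\dd n-p+1])$ can be extracted at the end of the stream, (ii) an ``anchor'' sketch $\sk(T[p_0\dd])$ for the smallest progression member together with $q$ and the succinctly represented incremental mismatch witnesses that propagate this sketch along the progression, and (iii) sketches and weights for the $O(k)$ outliers and boundary positions. For the tail range, whose size is at most $\Delta$, I maintain a sketch and a weight engine per candidate, which accounts for the $\Delta\cdot(k\log n+s_\score(n))$ summand. Over $O(\log n)$ subranges this totals $O(k^2\log^3 n+s_\score(n))$ bits, and the per-character work per subrange is $\Ot(k)\cdot t_\score(n)$, matching the claimed $O(n\cdot k\, t_\score(n)\log^5 n)$ time once one factors in the $\polylog n$ cost of each sketch operation.

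\textbf{Main obstacle.} The technical crux, and where I expect the bulk of the effort to lie, is making the second paragraph genuinely streaming: the structural description of~\cite{DBLP:conf/focs/Charalampopoulos20} is stated for a static window, whereas new characters can (a) create fresh $k$-mismatch occurrences, (b) force transitions between the ``few-occurrences'' and ``arithmetic-progression'' regimes as the window slides, and (c) require the anchor sketch, the step $q$, and the outlier list to be re-synchronised, all while spending only $\Ot(k^2)$ bits per subrange. A careful invariant-based case analysis, together with the correct interleaving of sketch-concatenation operations and $\score$-updates via \cref{prop:weight}, will be needed so that at the end of the stream $\sk(T[p\dd n])$ and $\score(T[p\dd])$ can be reconstructed on demand for every surviving candidate $p$, and then the Clifford-Kociumaka-Porat verifier delivers $\MI(T[p\dd], T[\dd n-p+1])$ whenever $p$ is indeed a $k$-mismatch period.
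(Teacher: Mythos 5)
Your proposal correctly identifies the paper's high-level strategy — filter candidate periods by $k$-mismatch occurrences of prefixes, use Clifford--Kociumaka--Porat sketch concatenability, and invoke Charalampopoulos et al.\ structural regularity to compress the periodic case — but it contains a concrete timing flaw and, more importantly, explicitly leaves the hard part unproved.

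\textbf{Timing flaw in the decomposition.} With $R_j=[2^{j-1}\dd 2^j)$ and $P_j = T[1\dd 2^{j+1}]$, the occurrence of $P_j$ that starts at a candidate $p$ is only detected when $T[p+2^{j+1}-1]$ arrives, at which point you wish to record $\sk(T[\dd p-1])$. But you also need $\sk(T[\dd n-p+1])$ for verification, and the rolling sketch can supply it only once position $n-p+1$ has been seen. Thus you need $p + 2^{j+1} - 1 < n-p+1$, which forces $p < (n - 2^{j+1}+2)/2$. Since $p$ can be as large as $2^j - 1$, this fails once $2^j$ exceeds roughly $n/4$, i.e., your scheme breaks exactly for the periods in $[n/4\dd n/2]$. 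The paper's decomposition with $\ell_j = \lfloor n/(3/2)^j\rfloor$, $P_j = T[1\dd \ell_j]$, and ranges $[\lfloor n/2\rfloor - \ell_j\dd \lfloor n/2\rfloor - \ell_{j+1}-1]$ is tuned precisely so that (the analogue of) $2t \le n - \ell_j$ holds for every $t$ in range; this is the inequality proved in the paper's Claim preceding the non-periodic analysis, and it is not an incidental choice.

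\textbf{The acknowledged ``main obstacle'' is the actual theorem.} You write that a ``careful invariant-based case analysis \ldots will be needed'' to handle regime transitions and re-synchronisation of the anchor sketch, the step $q$, and the outlier list. That case analysis \emph{is} the proof: the paper handles it via (a) a dedicated preprocessing step that, before the relevant text window opens, extracts the primitive period $Q$, its sketch $\sk_{3k}(Q)$, and its weight by running an $8k$-mismatch matcher on a half-length prefix and a Boyer--Moore majority vote (this avoids the sliding-window re-synchronisation entirely — $Q$ is locked in once and for all); (b) a prefix-extension subroutine that grows $P_j$ into $P_j'$ and branches into two disjoint regimes depending on whether $\hd{T[1\dd\ell_j']}{Q^*}$ exceeds $2k$, with the ``aperiodic extension'' regime provably having $O(k)$ occurrences (so two hash tables suffice) and the ``periodic extension'' regime provably being one arithmetic progression with step $|Q|$; and (c) in the progression case, reconstruction of $\sk(T[1\dd t-1])$ for each candidate $t = p+iq$ not by propagating a sketch step by step (there is no streaming budget for that), but by a single combination of $\sk(T[1\dd p-\ell_j'])$, $\sk(Q^{(t-p)/q})$, and the two stored mismatch-information sets for the first and last occurrence. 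Your claim that ``consecutive members $p$ and $p+q$ differ only in $O(k)$ positions'' is also imprecise: $T[p\dd n]$ and $T[p+q\dd n]$ have different lengths, so what is actually used is that $T[p\dd p+q-1]$ is close to $Q$, and this requires the mismatch information against $Q^\infty$ to already be available — which is exactly what the preprocessing delivers. Until these pieces are filled in, the proposal is a restatement of the overview in Section~2, not a proof of \cref{th:ham_main}.
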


Let us now explain how it implies the algorithm for computing $k$-edit periods. Recall that we receive a string $T$ of length~$n$ as a stream. In the first streaming pass on $T$, we apply \cref{fact:p-grammarsuffix}. At every moment of the algorithm, we additionally store the number of definite grammars. By \cref{fact:p-grammarsuffix}, we can then compute $m := |\GG(T)|$ in $\Ot(kn)$ time and $\Ot(k)$ space. 

In the second streaming pass, we again run the algorithm of \cref{fact:p-grammarsuffix}. When a new character arrives, we update the SLPs and if a SLP $G$ becomes definite, i.e. if we append $G$ to the stream of definite grammars, we compute $\enc(G)$ in $\Ot(k)$ time and space (\cref{fact:prop-of-genc}), and feed it character-by-character into the $\mu(k+2)$-mismatch period algorithm (\cref{th:ham_main}), where $\mu$ is the parameter from \cref{fact:prop-of-genc}. Consider the moment when we arrive at the end of $T$ and let $G_1,\dots, G_t$ be the active grammars. We compute $\enc(G_1 \cdots G_t)$ and feed it by character-by-character into the $\mu(k+2)$-mismatch period algorithm. Let $\mathcal{T}$ be the resulting stream we fed into the algorithm, i.e. $\mathcal{T} = \enc(\GG(T))$, where $\GG(T) = G_1^T G_2^T \cdots G_m^T$. 

The $\mu(k+2)$-mismatch period algorithm retrieves all $\mu(k+2)$-mismatch periods of $\mathcal{T}$ one-by-one. When a $\mu(k+2)$-mismatch period $p$ is retrieved, we do the following. If $p-1$ is not a multiple of $\mu$, we discard it immediately. Otherwise, by \cref{cor:gram_both_ends}, there are at most $k+2$ values $2 \le i\leq m-(p-1)/\mu-1$ such that $\enc(G_i^T) \neq \enc(G_{i+(p-1)/\mu}^T)$. We retrieve the set $I$ containing all such $i$ from $\MI(\mathcal{T}[p \dd ], \mathcal{T}[\dd m\cdot\mu-p+1])$. We finally compute 

$$d(p) = \sum_{i \in I} \ed(\eval(G^T_i), \eval(G^T_{i+(p-1)/\mu})) + \min_r \ed(\eval(G^T_{m-(p-1)/\mu})[\dd r], \eval(G^T_m))$$ 

If $d(p) \leq k$, we compute all $r'$ such that $\ed(\eval(G^T_1), \eval(G^T_{1+(p-1)/\mu}[r'\dd])) \leq k-d(p)$. For all such $r'$, we return $\score(\enc(G_{(p-1)/\mu+2}^T \dots G_m^T)) + r'$ as a $k$-edit period of $T$.

\noindent\textbf{Correctness. }
If $p \leq n/2$ is a $k$-edit period of $T$, there exists $i$ such that $\edd{T[\dd i]}{T[p \dd ]} \le k$. By \cref{cor:gram_both_ends}, there exist integers $r,r',t,t'$ such that $T[\dd i] = \eval(G^T_{1} \cdots G^T_t) \cdot \eval(G^T_{t+1})[\dd r]$, $T[p \dd ]=\eval(G_{t'})[r' \dd] \cdot \eval(G^T_{t'+1} \cdots G^T_{m})$, $t+1 = m-t'+1$ and there are at most $k+2$ indices $i$ such that $G^T_i$ and $G^T_{t'+i}$ mismatch. By~\cref{fact:prop-of-genc}, $t'\cdot \mu + 1$ is a $\mu \cdot(k+2)$-mismatch period of $\mathcal{T}$. Further, $t' \le (m+k)/2+1$:

\begin{restatable}{claim}{keditsmall}\label{prop:t'-is-small}
$t' \le (m+k)/2+1$.
\end{restatable}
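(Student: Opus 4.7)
The plan is to apply \cref{cor:gram_both_ends} with $X := T[\dd i]$, $Y := T[p \dd]$, $U := T[i+1 \dd]$, and $V := T[\dd p-1]$, where $i$ is any witness that $p$ is a $k$-edit period, so that $\edd{X}{Y} \le k$. Since $XU = VY = T$, both grammar decompositions referenced in the corollary coincide with $\GG(T) = G^T_1 \cdots G^T_m$; hence $s = s' = m$, and condition~1 gives $t = m - t'$. Set $\sigma := t + 1 = m - t' + 1$.

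The heart of the proof is a geometric observation about where the grammars $G^T_{\sigma+1}, \ldots, G^T_{t'-1}$ lie in $T$. By condition~2, $X$ ends at position $|\eval(G^T_1 \cdots G^T_t)| + r \le |\eval(G^T_1 \cdots G^T_\sigma)|$ of $T$, while $Y$ starts at position $|\eval(G^T_1 \cdots G^T_{t'-1})| + r' \ge |\eval(G^T_1 \cdots G^T_{t'-1})| + 1$. If $\sigma \ge t' - 1$, we directly obtain $t' \le (m+2)/2 \le (m+k)/2 + 1$ and are done. Otherwise, the grammars $G^T_{\sigma+1}, \ldots, G^T_{t'-1}$ lie entirely within the text interval $T[i+1 \dd p-1]$, so their combined expansion length is at most $p - 1 - i$.

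The final step is to upper bound this length using the edit distance. From $\edd{X}{Y} \le k$, $|X| = i$, and $|Y| = n - p + 1$, we have $i \ge n - p + 1 - k$, so $p - 1 - i \le 2p - n - 2 + k \le k - 2$ by $p \le n/2$. Because each grammar has expansion length at least $1$, the number of grammars in $[\sigma+1, t'-1]$, namely $t' - \sigma - 1 = 2t' - m - 2$, is at most $k - 2$, which rearranges to $t' \le (m+k)/2 \le (m+k)/2 + 1$.

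The main (though mild) obstacle is verifying that the ``gap'' grammars are strictly contained in the text segment between the end of $X$ and the start of $Y$; this is exactly what condition~2 of \cref{cor:gram_both_ends} delivers. The rest is an elementary length comparison combining the edit-distance inequality with the assumption $p \le n/2$.
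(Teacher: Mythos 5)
Your proof is correct and follows essentially the same approach as the paper: both arguments bound the number of ``gap'' grammars $G^T_{t+2},\dots,G^T_{t'-1}$ by the length of the text segment $T[i+1 \dd p-1]$ that they cover, and then bound that length by $O(k)$ using $\ed(X,Y)\le k$ together with $p \le n/2$, before converting the bound on the grammar count into a bound on $t'$ via $t+1 = m-t'+1$. The paper phrases this as a contradiction on the sum $i+(n-p+1) \ge n-k$, while you track the gap $T[i+1\dd p-1]$ directly, which is a marginally cleaner presentation of the same computation.
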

\begin{proof}
	We have $i \ge (n-p+1)-k$ and hence $i + (n-p+1) \ge n-k$. Consequently, $t+1+(m-t'+1) \geq m-k$. Indeed, assume towards a contradiction that $t+(m-t'+1) \leq m-k$. We then have 
	\begin{align*}
		& i+(n-p+1) \le \\
		& |\eval(G^T_{1} \cdots G^T_t) \cdot \eval(G^T_{t+1})[\dd r]| + |\eval(G_{t'})[r' \dd] \cdot \eval(G_{t'+1} \cdots G_{m})| \le\\
		& |\eval(G^T_{1} \cdots G^T_{t+1})|+|\eval(G^T_{t'} \cdots G^T_{m})| \le |\eval(G^T_{1} \cdots G^T_{t+1})|+|\eval(G^T_{t+k+2} \cdots G^T_{m})| \le\\
		&  n-k
	\end{align*}
	where the last inequality holds as each SLP $G^T_i$, $t+1 < i < t+k+2$ is non-trivial and hence its expansion has length at least one. Finally, since $t+1 = m-t'+1$, we obtain $m-t'+1 \ge (m-k)/2$ and therefore $t' \le (m+k)/2+1$.
\end{proof}

Consequently, $t' \cdot \mu + 1$ is detected by the $\mu \cdot(k+2)$-mismatch period algorithm with $\Delta = \mu (k/2+1) + 1$. We then identify $p$ as a $k$-edit period of $T$ by computing the edit distances between the expansions of the mismatching SLPs.

The algorithm fails if \cref{cor:gram_both_ends} fails, or if the $\mu \cdot(k+2)$-mismatch period algorithm fails, or if \cref{fact:prop-of-genc} fails, which happens with probability $\leq \frac 2 5$ for $n$ big enough. Also, notice that when \cref{cor:gram_both_ends} fails, the algorithm computes edit distance that is bigger than the actual distance. As a result, with a standard argument, we can run $\log n$ instances of the algorithm and return the minimal computed distance, to have the algorithm succeed w.h.p.

\noindent\textbf{Complexity. }
The first pass on $T$ takes $\Ot(kn)$ time and $\Ot(k)$ space. For the second pass, the $\mu \cdot(k+2)$-mismatch period algorithm takes $\Ot(\mu k t_w(n) n) = \Ot(k^2 n)$ time and $\Ot((\mu k)^2 + s_w(n)) = \Ot(k^4)$ space by \cref{th:ham_main} and \cref{prop:weight}. For each $\mu \cdot(k+2)$-mismatch period returned by the algorithm, we can bound the time needed to compute $d(p)$ as follows: Let $k_i = \ed(\eval(G_i), \eval(G_{i+(p-1)/\mu})$). By \cref{cor:all-occ-suffix}, we can compute $k_i$ in time $\Ot(k_i^2)$, hence if $d(p) \leq k$, we can compute $d(p)$ in time $\Ot(\sum_i k_i^2) = \Ot(k^2)$. If the edit distance computations take total time more than $\Ot(k^2)$ we can terminate them as we know that $d(p) > k$.\footnote{To be more precise, we upper bound the number of longest common extension queries from \cref{cor:all-occ-suffix}, it should not exceed $\sum_i (k_i+1)^2 \le (k+1)^2$.} \cref{th:edit_main} follows.

\section{Proof of Theorem~\ref{th:ham_main}}
\label{sec:hamming}
In this section, we prove \cref{th:ham_main}. We first recall essential tools for the Hamming distance, then outline our algorithm and solve each case, and finally analyse it. 

\subsection{Tools for the Hamming distance}
\label{sec:tools_ham}
For two strings $P, T$, a position $i\in [|P|\dd |T|]$ of $T$ is a \textit{$k$-mismatch occurrence}
of $P$ in~$T$ if~$\hd{T[i-|P|+1\dd i]}{P} \le k$. For an integer $k$, we define $\hdk{X}{Y}{k}= \hd{X}{Y}$ if $\hd{X}{Y}\le k$ and $\infty$ otherwise. We denote by $Occ_k^H(P,T)$ the set of $k$-mismatch occurrences of $P$ in $T$.
For brevity, we define $\hd{P}{T^*} = \hd{P}{T^\infty[1 \dd |P|]}$.

\begin{fact}[{\cite[Lemmas 6.2-6.4]{DBLP:conf/soda/CliffordKP19}}]\label{thm:ham_sketch}
	Consider positive integers $k, n, \sigma$ such that $k \le n$ and $\sigma = n^{O(1)}$, and a family $\mathcal{U} \subseteq \{0,\ldots,\sigma-1\}^{\le n}$ of strings. One can assign $O(k \log n)$-bit sketches $\skh_k(U)$ to all strings $U \in \mathcal{U}$ so that each of the following holds, assuming that all processed strings belong to $\mathcal{U}$:
	\begin{enumerate}
		\item Given sketches $\skh_k(U), \skh_k(V)$, there is an algorithm that uses $O(k \log^3n)$ time to decide whether $\hd{U}{V} \leq k$. If so, $\MI(U,V)$ is reported. 
		\item There is an algorithm that constructs one of $\skh_k(U), \skh_k(V)$ or $\skh_k(UV)$ given the two other sketches in $O(k \log n)$ time.
		\item There is an algorithm that constructs $\skh_k(U)$ or $\skh_k(U^m)$ given the other sketch and the integer $m$ in $O(k \log n)$ time.
		\item If $\hd{U}{V} = O(k)$, there is an algorithm that constructs $\skh_k(V)$ from $\skh_k(U)$ and $\MI(U, V)$ in $O(k \log^2 n)$ time.
	\end{enumerate}
	All algorithms use $O(k\log n)$ space and succeed w.h.p.
\end{fact}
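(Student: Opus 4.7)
The plan is to follow Clifford, Kociumaka, and Porat~\cite{DBLP:conf/soda/CliffordKP19} and construct $\skh_k(U)$ as a layered collection of linear sketches in a finite field whose randomness depends only on position indices. Fix a prime $q = n^{\Theta(1)}$ and a random injection $\phi\colon \Sigma \to \mathbb{F}_q$, and use the shift-compatible encoding $\psi(i, c) = (\phi(c),\, i \cdot \phi(c)) \in \mathbb{F}_q^2$, so that any singleton bucket sum $(s_1, s_2)$ recovers $\phi(c) = s_1$, the position $i = s_2/s_1$, and the character $c = \phi^{-1}(s_1)$. The sketch has $L = \Theta(\log n)$ levels: at level $j$, a position-only hash $h_j$ retains each index with probability $2^{-j}$, a bucketing hash $g_j$ distributes the survivors into $B_j = \Theta(k/2^j)$ buckets, and each bucket stores the sum of $\psi(i, U[i])$ over the positions it contains. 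Since $\sum_j B_j = \Theta(k)$ and each bucket occupies $O(1)$ field elements, the total sketch fits in $O(k \log n)$ bits.

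The four properties then reduce to linear manipulations of these sums. Property~4 is immediate from linearity: for each $(i, U[i], V[i]) \in \MI(U, V)$, add $\psi(i, V[i]) - \psi(i, U[i])$ to the single bucket at each level containing $i$, costing $O(\log n)$ per mismatch and $O(k \log^2 n)$ in total. Property~1 subtracts the sketches bucket-wise; if $\ham{U}{V} \le k$, then at the level $j^\ast$ where the expected number of surviving mismatches drops to a small constant, most survivors sit in singleton buckets that decode directly via $\psi$, and peeling decoded mismatches off higher-rate levels (via property~4) recovers the full $\MI(U,V)$ within $O(k \log^3 n)$ time w.h.p. Property~2 uses the linearity of $\psi$ in position: given $\skh_k(U)$ on positions $[1, n_1]$ and $\skh_k(V)$ on positions $[1, n_2]$, one re-indexes $V$'s positions from $i$ to $n_1 + i$ by applying the affine map $(s_1, s_2) \mapsto (s_1, s_2 + n_1 s_1)$ and re-distributes the rebased contributions according to $h_j(n_1+i), g_j(n_1+i)$, then adds bucket-wise with $U$'s sums; the reverse operations use subtraction and the inverse affine map, all in $O(k \log n)$ time. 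Property~3 iterates property~2 on $m$ concatenated copies of $U$ in closed form, expressing each bucket sum of $U^m$ as a linear combination of the corresponding sum of $U$ weighted by shift factors that form a geometric series in $|U|$; each sum is evaluable in $O(\log m)$ field operations per bucket, and its inverse solves the same closed-form equation per bucket.

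The main technical obstacle is coupling (i) hash families $h_j, g_j$ that depend only on the global position index, as required by the re-indexing step of property~2, with (ii) the $k$-sparse recovery guarantee needed in property~1 that all $\le k$ mismatches land in distinct buckets at some level w.h.p., while keeping the total storage to $O(k \log n)$ bits. Clifford, Kociumaka, and Porat resolve this tension with $\Theta(\log n)$-wise independent hash families together with a short per-level redundancy mechanism (a handful of auxiliary linear measurements that disambiguate singletons from multi-element buckets); a moment bound gives per-level failure probability $1/n^{\Omega(1)}$, and a union bound over the $\Theta(\log n)$ levels and the at most $n^{O(1)}$ sketches drawn from $\mathcal{U}$ yields w.h.p.\ correctness of all four properties simultaneously. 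Completeness of $\MI(U, V)$ recovery in property~1 then follows because alternating singleton extraction with property~4 peeling converges within at most $k$ rounds.
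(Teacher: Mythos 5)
The statement you are proving is imported directly from Clifford, Kociumaka, and Porat (cited Lemmas 6.2--6.4) and the paper does not contain a proof of it, so there is nothing in the paper to compare against line by line; I evaluate your reconstruction on its own merits.

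The central gap is in your treatment of property~2 (and consequently property~3). Your sketch subsamples positions at level $j$ with a hash $h_j$ and buckets the survivors with $g_j$, where $h_j,g_j$ are taken from a generic $\Theta(\log n)$-wise independent family. When you concatenate $U$ of length $n_1$ with $V$, every position $i$ of $V$ becomes $n_1+i$. The bucket containing $i$ in $\skh_k(V)$ is determined by $h_j(i),g_j(i)$, but the bucket containing $n_1+i$ in $\skh_k(UV)$ is determined by $h_j(n_1+i),g_j(n_1+i)$, and for a generic $\Theta(\log n)$-wise independent family these two memberships are unrelated: a position that was subsampled at level $j$ in $V$'s sketch may not be subsampled at all in $UV$'s sketch, and a surviving position generally lands in a different bucket. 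Your proposed fix --- applying the affine map $(s_1,s_2)\mapsto(s_1,s_2+n_1s_1)$ to bucket sums and then ``re-distributing the rebased contributions according to $h_j(n_1+i),g_j(n_1+i)$'' --- is not implementable from the sketch alone, because re-distribution requires knowing the individual contributions $\psi(i,V[i])$ inside each bucket, which were destroyed by aggregation. You acknowledge this tension explicitly in your last paragraph but resolve it by appeal to Clifford--Kociumaka--Porat rather than by a construction, which is precisely where the proof is missing. The actual CKP sketch avoids the problem at the root: instead of a subsampling hash, it partitions positions by residue classes modulo a collection of random primes. Shifting all positions by $n_1$ permutes the residue classes cyclically (by $n_1 \bmod p$), so the bucket structure is preserved up to a known permutation, and the affine position-rebasing you describe then combines cleanly with a cyclic relabeling of buckets. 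Property~3 ($\skh_k(U^m)$) inherits the same issue and the same resolution: it is a concatenation of $m$ shifted copies and requires the shift-equivariant bucket structure to admit a closed-form geometric-sum formula.

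A secondary, smaller gap: the two-coordinate encoding $\psi(i,c) = (\phi(c), i\,\phi(c))$ recovers, from a singleton mismatch bucket in $\skh_k(U)-\skh_k(V)$, only the position $i$ and the scalar $\phi(U[i])-\phi(V[i])$; this does not determine $U[i]$ and $V[i]$ individually, so it falls short of the mismatch information $\MI(U,V) = \{(i,U[i],V[i])\}$ that property~1 requires. One standard fix is to carry an extra coordinate such as $\phi(c)^2$, since $\phi(U[i])^2-\phi(V[i])^2$ together with $\phi(U[i])-\phi(V[i])$ determines both $\phi(U[i])+\phi(V[i])$ and hence both images; your encoding as written does not provide this. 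Neither gap is a mere omission of routine calculation --- the concatenation-compatibility of the bucket structure is the design constraint that makes the CKP sketch nontrivial, and your construction as specified does not meet it.
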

Below, we refer to the sketch of \cref{thm:ham_sketch} as the \emph{$k$-mismatch sketch}. 
Note that for a string $U$ and all integer $k' \ge k$, the sketch $\skh_{k'}(U)$ includes the sketch $\skh_{k}(U)$ (see \cite{DBLP:conf/soda/CliffordKP19} for a definition). Beyond the properties above, the $k$-mismatch sketch has one additional property: 
\begin{corollary}[{\cite[Fact 4.4]{DBLP:conf/soda/CliffordKP19}}]\label{cor:streaming_sketch}
	There exists a streaming algorithm that processes a string~$U$ in $O(k \log n)$ space using $O(\log^2n)$ time per character so that the sketch $\skh_{k}(U)$ can be retrieved on demand in $O(k \log^2n)$ time.
\end{corollary}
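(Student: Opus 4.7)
The plan is to derive the streaming sketch maintenance from the offline construction of $\skh_k$ combined with the concatenability property of \cref{thm:ham_sketch}\,(item~2) via a batching strategy. I would maintain two objects at all times: a \emph{running sketch} $\sigma = \skh_k(U[1\dd i'])$ for a checkpoint position $i'\le i$, and a small \emph{buffer} $B$ storing the unsketched suffix $U[i'+1\dd i]$. On each arriving character, I append it to $B$; once $|B|$ reaches a block threshold $b$, I compute $\skh_k(B)$ from scratch by invoking the offline construction of \cite{DBLP:conf/soda/CliffordKP19} on the block, replace $\sigma$ by $\skh_k(U[1\dd i']\cdot B)$ using one concatenation call, and reset $B$ to $\eps$.

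The parameter $b$ is tuned to balance per-character cost against space. Setting $b = \Theta(k)$ keeps the buffer within the $O(k\log n)$-bit budget, since each character uses $O(\log n)$ bits (as $|\Sigma| = n^{O(1)}$). Each concatenation costs $O(k\log n)$ time and is amortized over $b=\Theta(k)$ characters, contributing $O(\log n)$ per character. Building $\skh_k(B)$ from scratch on a block of length $b$ takes $O(b\log^2 n) = O(k\log^2 n)$ time via the offline construction, contributing $O(\log^2 n)$ amortized per character. On demand, I would retrieve the current sketch by building $\skh_k(B)$ for the outstanding buffer in $O(k\log^2 n)$ time and concatenating with $\sigma$ in $O(k\log n)$ time, matching the claimed $O(k\log^2 n)$ retrieval bound.

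The main obstacle I foresee is the worst-case spike: when a block just fills up, one character would incur $\Omega(k\log^2 n)$ work rather than the amortized $O(\log^2 n)$. The standard resolution is to de-amortize: as soon as a block closes and is handed to the offline sketch routine, I would run that routine incrementally over the next $b$ arrivals, doing one $O(\log^2 n)$-time chunk of work per character, and similarly spread the $O(k\log n)$-time concatenation across the same window. This requires keeping two generations of the buffer (one currently being filled and one being sketched), which only doubles space and still fits in $O(k\log n)$ bits. A secondary subtlety is that the $O(b\log^2 n)$ offline bound for sketching a block is internal to \cite{DBLP:conf/soda/CliffordKP19} and is not listed among the black-box properties in \cref{thm:ham_sketch}, so this step must explicitly invoke their construction rather than treat the sketch as opaque.
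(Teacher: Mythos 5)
The paper does not prove this statement at all: it is quoted verbatim as Fact~4.4 of Clifford, Kociumaka, and Porat~\cite{DBLP:conf/soda/CliffordKP19} and used as a black box, so there is no ``paper's own proof'' to compare against. Your reconstruction is therefore doing genuinely new work relative to the paper, and I'll evaluate it on its own merits.

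The batching-plus-de-amortization scheme you propose is a standard and valid way to turn a concatenable sketch with a fast offline constructor into a low-per-character streaming algorithm, and you have the arithmetic right: a buffer of $b=\Theta(k)$ characters costs $O(k\log n)$ bits, the $O(k\log n)$-time concatenation amortizes to $O(\log n)$ per character, and a per-block build time of $O(b\log^2 n)$ amortizes to $O(\log^2 n)$ per character, with the usual two-generation trick handling the worst-case spike and a retrieval flushing at most two buffers plus two concatenations in $O(k\log^2 n)$ total. The one load-bearing assumption, which you flag honestly but do not discharge, is that the offline construction of $\skh_k$ on a length-$b$ block costs $O(b\log^2 n)$. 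That is not derivable from the four black-box properties in \cref{thm:ham_sketch} (the most one can get from them is a pairwise-doubling build, which costs $\Theta(b\cdot k\log n)$, i.e.\ $\Theta(k\log n)$ amortized per character --- far over budget), so your argument stands or falls on opening up the Clifford et al.\ construction and reading off that bound. It is plausible, since the sketch is built from structured polynomial/residue components with near-linear-time evaluation, but it is exactly the content that the cited Fact~4.4 is designed to encapsulate, so your ``proof'' and the citation end up resting on the same unverified internal. In short: the surrounding machinery is correct, the route is different from the paper's (which simply defers), and the gap --- the offline build time --- is correctly identified by you but left as an assumption rather than closed.
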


By analysing the details of~\cite[Corollary 3.4]{DBLP:conf/soda/CliffordKP19}, one can derive a streaming algorithm for computing all occurrences of a pattern in a text when the pattern is a prefix of some string and the text is a substring of the same string, which we refer to as the \emph{$k$-mismatch algorithm}:

\begin{restatable}{corollary}{kmismatch}\label{cor:k-mism_algo}
Given a string $T$ of length $n$, there is a streaming algorithm for a pattern $P = T[1 \dd \ell]$ and a text $T[i \dd j]$, where $1 \le i, j, \ell \le n$, which uses $O(k \log^2 n)$ space and takes $O(k \log^4 n)$ time per arriving character. The algorithm reports all positions $p$ such that $i+\ell-1 \le p \le j$ and $\hd{T[p-\ell+1 \dd p]}{P} \le k$ at the moment of their arrival. For each reported position $p$, $\MI(T[p-\ell+1\dd p], P)$ and $\skh_k(T[1 \dd p-\ell])$ can be reported on demand in $O(k\log^2 n)$ time at the moment when $p$ arrives. The algorithm is correct w.h.p.
\end{restatable}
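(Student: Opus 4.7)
The plan is to run two subroutines in parallel on the stream $T$. First, I would invoke the streaming $k$-mismatch pattern-matching algorithm of Clifford, Kociumaka, and Porat~\cite[Corollary 3.4]{DBLP:conf/soda/CliffordKP19}. Since $P = T[1 \dd \ell]$ is itself a prefix of the stream, the first $\ell$ incoming characters can be fed into the pattern-preprocessing phase of their algorithm, while subsequent characters serve as text; I would then filter the reported $k$-mismatch occurrences by the range $[i + \ell - 1 \dd j]$. A careful reading of~\cite[Corollary 3.4]{DBLP:conf/soda/CliffordKP19} is expected to yield, for this phase, $O(k \log^2 n)$ space, $O(k \log^4 n)$ time per character, and the ability to produce $\MI(T[p-\ell+1 \dd p], P)$ on demand in $O(k \log^2 n)$ time at the moment an occurrence $p$ is reported.

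Second, in parallel, I would run the streaming sketch procedure of \cref{cor:streaming_sketch} on $T$. It occupies $O(k \log n)$ space and $O(\log^2 n)$ time per character, and at any point of the stream it allows retrieval of $\skh_k(T[1 \dd m])$, where $m$ is the most recent position read, in $O(k \log^2 n)$ time. Once the first $\ell$ characters have been processed, I would retrieve and cache $\skh_k(P)$ permanently, adding only $O(k \log n)$ bits to the state.

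When the pattern-matching subroutine reports an occurrence ending at position $p$, the goal is to recover $\skh_k(T[1 \dd p-\ell])$ in $O(k\log^2 n)$ time, which I would do in three steps. First, I retrieve $\MI(T[p-\ell+1 \dd p], P)$ from the pattern-matching subroutine and $\skh_k(T[1 \dd p])$ from the sketch subroutine. Since $\hd{T[p-\ell+1 \dd p]}{P} \le k$, property~(4) of \cref{thm:ham_sketch} builds $\skh_k(T[p-\ell+1 \dd p])$ from the cached $\skh_k(P)$ and the retrieved mismatch information in $O(k \log^2 n)$ time. Applying property~(2) of \cref{thm:ham_sketch} to the factorisation $T[1 \dd p] = T[1 \dd p-\ell] \cdot T[p-\ell+1 \dd p]$ then extracts $\skh_k(T[1 \dd p-\ell])$ in $O(k \log n)$ additional time. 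All randomised components succeed w.h.p., so a union bound over occurrences preserves overall correctness.

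The main obstacle will be verifying that the CKP streaming $k$-mismatch algorithm, when unfolded, can indeed supply $\MI(T[p-\ell+1 \dd p], P)$ at the very moment $p$ arrives, with no delay, and within the stated space and time budgets; this is why the statement refers to \emph{analysing the details} of~\cite[Corollary 3.4]{DBLP:conf/soda/CliffordKP19} rather than treating it as a black box. The remaining pieces---parallel composition, caching the pattern sketch, and the three-step sketch arithmetic---are routine and fit comfortably within the claimed bounds.
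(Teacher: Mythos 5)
Your sequential "pattern-preprocessing first, then text" model has a genuine gap: in the statement (and in the paper's actual usages) the text $T[i \dd j]$ is permitted to begin \emph{before} the end of the pattern, i.e., $i$ can be as small as $1$. For example, \cref{lm:per_case_encoding} invokes this corollary with pattern $P[1\dd \ell']$ and text $T[2 \dd ]$, and \cref{sec:ham_non_periodic} uses $T_j$ starting at $\max\{\lfloor n/2\rfloor - \ell_j, 1\}$, which is less than $\ell_j$ for small $j$. If you feed $T[1 \dd \ell]$ solely to the pattern-preprocessing phase and treat only $T[\ell+1 \dd ]$ as the text, then every occurrence reported by the black-box pattern matcher ends at some $p \ge 2\ell$; the desired occurrences with $p \in [\,i + \ell -1\dd 2\ell-1\,]$ are never even examined, so ``filtering by range'' cannot recover them. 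There is also no way to replay $T[i \dd \ell]$ in a single streaming pass.

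The paper's argument resolves exactly this overlap: it opens the box of \cite[Corollary~3.4]{DBLP:conf/soda/CliffordKP19} and observes that the only pattern state the algorithm ever consults is the family of $k$-mismatch sketches of the pattern's prefixes of power-of-two lengths (plus the full pattern), and that the sketch of the length-$\ell'$ prefix is first needed only when processing text position $\ell'+1$, i.e., when $T[i+\ell']$ arrives. Since $i\ge 1$, by that time $T[1\dd \ell']$ has already been read and its sketch can be retrieved on demand via \cref{cor:streaming_sketch} within the $O(k\log^4 n)$ per-character budget. This lets pattern preprocessing and text processing run interleaved over the same single pass rather than sequentially. So the ``main obstacle'' is not, as you suggest, whether $\MI$ can be surfaced without delay (that follows from the $k$-mismatch sketch machinery the algorithm already maintains), but whether the overlapping pattern/text prefix structure can be handled at all; your proposal does not address it. Your sketch-arithmetic step recovering $\skh_k(T[1\dd p-\ell])$ from $\skh_k(T[1\dd p])$, $\skh_k(P)$, and $\MI(T[p-\ell+1\dd p],P)$ is correct and within budget, but it rests on a pattern-matching subroutine that, as designed, will miss a swath of the occurrences it must report.
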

\begin{proof}
Clifford et al.~\cite[Corollary 3.4]{DBLP:conf/soda/CliffordKP19} presented an algorithm that reports the endpoints of all $k$-mismatch occurrences of a pattern in a text assuming that it first receives the pattern as a stream and then a text as a stream as well. The algorithm uses $O(k \log^2 n)$ space and takes $O(k \log^4 n)$ time per arriving character and is correct w.h.p. 

This is not the current best algorithm (presented in Clifford et al.~\cite{DBLP:conf/soda/CliffordKP19} as well). The reason why we selected the simpler algorithm is that it allows for the necessary preprocessing of the pattern to be easily done before one needs it for the text processing. Namely, the only information the algorithm stores about the pattern are the $k$-mismatch sketches of the prefixes of the pattern of the lengths equal to powers of two, and the $k$-mismatch sketch of the pattern itself, computed via \cref{cor:streaming_sketch}, and the $k$-mismatch sketch of the pattern's prefix of length $\ell$ is never used before the position $\ell+1$ of the text. 
\end{proof}

\begin{fact}[{cf. \cite[Theorems 3.1 and 3.2]{DBLP:conf/focs/Charalampopoulos20}}]\label{thm:nb_kocc}
	Given a pattern $P$ of length $m$, a text $T$ of length $n \le \frac{3}{2} m$, and a threshold $k \in \{1,\dots , m\}$, at least one of the following holds:
	\begin{enumerate}
		\item The number of $k$-mismatch occurrences of $P$ in $T$ is bounded by $576 \cdot n/m \cdot k$.
		\item There is a primitive string $Q$ of length $|Q| \leq m/128k$ that satisfies $\hd{P}{Q^*} \le 2k$.
	\end{enumerate}
	In the second case, the difference between the starting positions of any two $k$-mismatch occurrences of $P$ in $T$ is a multiple of $|Q|$ and if $T'$ is the minimal substring of the text containing all $k$-mismatch occurrences of $P$ in $T$, then $\hd{T'}{Q^*} \le 6k$. 
\end{fact}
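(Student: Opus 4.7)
\medskip

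\noindent\emph{Proof proposal for \cref{thm:nb_kocc}.}
The plan is to prove the contrapositive: assume the number of $k$-mismatch occurrences of $P$ in $T$ strictly exceeds $576 \cdot n/m \cdot k$, and from this derive both the existence of the primitive root $Q$ and the accompanying concentration properties. Using that $n \le \tfrac{3}{2}m$, the hypothesis produces at least $N > 576 k$ occurrences of $P$; their ending positions all lie in an interval of length at most $n-m \le m/2$, so by averaging the $N-1$ consecutive gaps and then applying a Markov-style pigeonhole, I would extract $\Omega(k)$ pairs of consecutive occurrences whose gap is at most $m/(128 k)$, up to a constant that is swallowed by the factor $576$ in the statement.

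For each such pair of $k$-mismatch occurrences at (ending) positions $i < j$ with $d := j-i \le m-1$, I would compare the two corresponding length-$m$ windows of $T$ on their overlap: the triangle inequality for Hamming distance yields $\hd{P[1\dd m-d]}{P[d+1\dd m]} \le 2k$, so $d$ is a $2k$-mismatch period of $P$. Thus the previous step produces many distinct $2k$-mismatch periods of $P$, all bounded above by $m/(128k)$.

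To convert many small approximate periods into the single primitive-root witness $Q$, I would invoke the structural characterisation of $k$-mismatch periods due to Kouck\'y-style ``three-type'' analysis \emph{\`a la} Kociumaka--Radoszewski--Rytter--Wale\'n, as refined quantitatively by Charalampopoulos--Kociumaka--Wellnitz~\cite{DBLP:conf/focs/Charalampopoulos20}: under the hypothesis just derived, the primitive roots of all those small approximate periods collapse to a single primitive string $Q$ of length $|Q| \le m/(128 k)$ satisfying $\hd{P}{Q^*} \le 2k$. With $Q$ in hand, any $k$-mismatch occurrence $i$ of $P$ in $T$ satisfies $\hd{T[i-m+1\dd i]}{Q^*} \le 3k$ by the triangle inequality; primitivity of $Q$ combined with $|Q| \le m/(128k)$ then forces any two such ending positions to differ by a multiple of $|Q|$, since otherwise two misaligned copies of $Q^\infty$ in a length-$m$ window would disagree in $\Omega(m/|Q|) = \Omega(k) \cdot 128$ places, far beyond the $6k$ budget allowed by the triangle inequality. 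Finally, covering $T'$ by the length-$m$ windows of the first and last $k$-mismatch occurrences and summing the Hamming budgets ($2k$ from $P \leftrightarrow Q^*$ plus $k$ per occurrence for $T \leftrightarrow P$, with overlaps counted once) yields $\hd{T'}{Q^*} \le 6k$.

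The main obstacle, by far, is the structural step: producing a \emph{single} primitive $Q$ with the precise bound $|Q| \le m/(128k)$ is where the full combinatorial content of~\cite{DBLP:conf/focs/Charalampopoulos20} is needed, because naive Fine-and-Wilf-style arguments only merge periods into a common \emph{exact} divisor, whereas here one must merge many $2k$-mismatch periods under a shared $2k$-mismatch budget. A secondary subtlety is achieving the tight constant $6k$ in the final concentration claim, which requires a careful covering of $T'$ by only a handful of length-$m$ windows rather than a union bound over all $N$ occurrences; these are exactly the places where I would most closely follow the proof technique of Charalampopoulos, Kociumaka, and Wellnitz.
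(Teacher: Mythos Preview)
The paper does not prove this statement at all: it is stated as a \emph{Fact} with the attribution ``cf.\ \cite[Theorems 3.1 and 3.2]{DBLP:conf/focs/Charalampopoulos20}'' and is used as a black box throughout. There is therefore no proof in the paper to compare your proposal against; the paper simply imports the result from Charalampopoulos, Kociumaka, and Wellnitz.

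That said, your sketch has a genuine circularity at its core. At the step where you must produce the single primitive $Q$ with $|Q|\le m/(128k)$ and $\hd{P}{Q^*}\le 2k$, you write that you ``would invoke the structural characterisation \ldots\ as refined quantitatively by Charalampopoulos--Kociumaka--Wellnitz~\cite{DBLP:conf/focs/Charalampopoulos20}''. But that \emph{is} the content of the Fact you are trying to establish; the dichotomy between ``few occurrences'' and ``short approximate primitive root'' is precisely Theorem~3.1 of that paper, and the arithmetic progression and $6k$ concentration claims are their Theorem~3.2. Your preliminary pigeonhole and triangle-inequality steps are fine and do appear in their argument, but they only get you to ``$P$ has some small $2k$-mismatch period''; the passage from many small approximate periods to a \emph{single} primitive $Q$ with the stated bounds is the substantive combinatorial work, and you have not supplied it. If your intent was merely to reduce the Fact to the cited theorems, then the reduction is immediate and the surrounding scaffolding is unnecessary; if your intent was to give a self-contained proof, then the crucial step is missing.
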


Let $S$ be a string of length $n$. For our next lemma, we introduce the \emph{forward cyclic rotation} $\rot(S) = S[n] S[1] \dots S[n-1]$. In general, for $s\in \mathbb{N}$, a cyclic rotation $\rot^s(S)$ with shift $s$ (resp. $-s$) is obtained by iterating $\rot$  (resp. $\rot^{-1}$) $s$ times. Note that a string $S$ is primitive if and only if $\rot^s(S) = S$ implies $s = 0 \pmod{|S|}$.

\begin{restatable}{lemma}{Qequality}\label{lemma:Q_egality}
Given two strings $X,Y$ such that $X$ is a prefix of $Y$ and $|Y| \le \frac 5 2 |X|$. Assume that there are primitive strings $Q_X$ such that $\ham(Q_X^*, X) \le k$ and $|Q_X| \le \frac{|X|}{128k}$ and $Q_Y$ such that $\ham(Q_Y^*, Y) \le k$ and $|Q_Y| \le \frac{|Y|}{128k}$. We then have $Q_X = Q_Y$. 
\end{restatable}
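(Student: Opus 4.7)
The plan is to first convert the two approximate-periodicity hypotheses on $X$ and $Y$ into an approximate equality between $Q_X^\infty$ and $Q_Y^\infty$ on a long prefix, and then exploit the primitivity of both strings. First, since $X$ is a prefix of $Y$, restricting the hypothesis $\hd{Q_Y^\infty[1\dd|Y|]}{Y}\le k$ to the first $|X|$ positions gives $\hd{Q_Y^\infty[1\dd|X|]}{X}\le k$, and the triangle inequality combined with $\hd{Q_X^\infty[1\dd|X|]}{X}\le k$ yields $\hd{Q_X^\infty[1\dd|X|]}{Q_Y^\infty[1\dd|X|]}\le 2k$.

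Next I would show $|Q_X|=|Q_Y|$ by contradiction. Assuming, by symmetry, $|Q_X|<|Q_Y|$, I would look at the pairs of indices $(i,i+|Q_X|)$ for $i\in[1\dd|X|-|Q_X|]$. If neither index lies in the mismatch set of size $\le 2k$, then the exact period $|Q_X|$ of $Q_X^\infty$ forces $Q_Y^\infty[i]=Q_Y^\infty[i+|Q_X|]$; since each mismatch position disqualifies at most two values of $i$, at most $4k$ indices are excluded. The equality $Q_Y^\infty[i]=Q_Y^\infty[i+|Q_X|]$ depends only on $(i-1)\bmod|Q_Y|$, so any ``bad'' residue class would force at least $\lfloor(|X|-|Q_X|)/|Q_Y|\rfloor$ excluded indices, which, using $|Q_X|\le|X|/(128k)$ and $|Q_Y|\le(5/2)|X|/(128k)$, strictly exceeds $4k$ for every $k\ge 1$. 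Hence no residue class is bad, so $Q_Y$ is invariant under cyclic rotation by $|Q_X|$, giving it cyclic period $\gcd(|Q_X|,|Q_Y|)<|Q_Y|$ and contradicting primitivity. The case $|Q_X|>|Q_Y|$ is handled symmetrically and instead contradicts the primitivity of $Q_X$.

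With $|Q_X|=|Q_Y|$ established, if $Q_X\ne Q_Y$ then they differ at some position $j$, whereupon $Q_X^\infty$ and $Q_Y^\infty$ also differ at every $j+t|Q_X|$ with $t\ge 0$. Counting such positions inside $[1\dd|X|]$ gives at least $\lfloor|X|/|Q_X|\rfloor\ge 128k$ mismatches, contradicting the $2k$ bound from the first step. Therefore $Q_X=Q_Y$.

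I expect the main obstacle to be the counting in the length-equality step: the constants $128$ (from the hypothesis) and $5/2$ (from $|Y|\le(5/2)|X|$) must combine tightly with the $2k$-mismatch budget to rule out every ``bad'' residue class in both symmetric subcases $|Q_X|<|Q_Y|$ and $|Q_X|>|Q_Y|$. Once this is verified, the rest reduces to the classical fact that a primitive string is uniquely determined by the infinite word it generates.
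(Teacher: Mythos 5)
Your proof is correct, but it takes a genuinely different route from the paper's. The paper also begins with the triangle inequality to get $\hd{Q_X^\infty[1\dd|X|]}{Q_Y^\infty[1\dd|X|]} \le 2k$, but then it argues directly (w.l.o.g. $|Q_X| \le |Q_Y|$) by a two-case split on whether $|Q_X|$ divides $|Q_Y|$. In the divisible case, primitivity of $Q_Y$ gives $\ham(Q_Y, Q_X^i) \ge 1$ and hence one mismatch per $|Q_Y|$-window; in the non-divisible case, a rotation argument on $Q_Y^2$ versus $\rot^j(Q_X)^\ast$ shows at least one mismatch per $2|Q_Y|$-window, because $\rot^j(Q_X) \ne \rot^{j+|Q_Y|}(Q_X)$ by primitivity of $Q_X$. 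Either way the count exceeds $2k$, and length equality is never isolated as an intermediate claim. You instead first prove $|Q_X| = |Q_Y|$ by a residue-class counting argument: the $\le 2k$ mismatches can disqualify at most $4k$ of the pairs $(i, i+|Q_X|)$, while a single bad residue class mod $|Q_Y|$ would already account for $\lfloor(|X|-|Q_X|)/|Q_Y|\rfloor \ge 50k$ pairs, so $Q_Y^\infty$ acquires period $|Q_X|$, forcing a divisor period of $Q_Y$ and contradicting primitivity (and symmetrically for $Q_X$). Only then do you handle the equal-length case, where a single distinguishing position in $Q_X, Q_Y$ propagates to $\ge 128k$ mismatches. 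Both proofs are sound; the paper's rotation-within-$Q_Y^2$ trick is shorter and avoids proving length equality at all, while your decomposition is more elementary and makes the length argument explicit, at the cost of a slightly more delicate counting step (which your constants do survive: $50.8k$ and $125.5k$ versus the budget $4k$). One small point worth tightening when writing it up: when you say ``$Q_Y$ is invariant under cyclic rotation by $|Q_X|$, giving it cyclic period $\gcd(|Q_X|,|Q_Y|)$,'' make explicit that $Q_Y^\infty$ having both periods $|Q_X|$ and $|Q_Y|$ yields period $\gcd(|Q_X|,|Q_Y|)$ (trivially, for an infinite periodic word), and that this proper divisor of $|Q_Y|$ contradicts primitivity.
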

\begin{proof}

Suppose towards a contradiction that $Q_X \neq Q_Y$. By the triangle inequality, we have $\ham(Q_X^\infty[1 \dd |X|],Q_Y^\infty[1 \dd |X|]) \leq 2k$ and $\max\{|Q_X|, |Q_Y|\} \le  \frac{|Y|}{128k} \le \frac{5|Y|}{256}$. Assume w.l.o.g. $|Q_X| \le |Q_Y|$. 

If there is $i \in \mathbb{N}$ such that $|Q_Y| = i|Q_X|$, then by primitivity of $Q_Y$, we have $\ham(Q_Y, Q_X^i) \geq 1$, and $\ham(Q_Y^\infty[1 \dd |X|], Q_X^\infty[1 \dd |X|]) \geq \frac{|X|}{|Q_Y|} \geq \frac{128k |X|}{|Y|} \geq \frac{256} 5 k > 2k$. Otherwise,  for all $1 \le j \leq |Q_X|$ we have 
	\begin{equation*}
		\begin{split}
			& \ham(Q_Y^2, \rot^j(Q_X)^*) =\\
			& = \ham(Q_Y,  \rot^j(Q_X)^\infty[1 \dd |Q_Y|]) + \ham(Q_Y, \rot^j(Q_X)^\infty[|Q_Y|+1 \dd 2|Q_Y|]) = \\
			& = \ham(Q_Y,  \rot^j(Q_X)^\infty[1 \dd |Q_Y|)) + \ham(Q_Y,  \rot^{j + |Q_Y|}(Q_X)^\infty[1 \dd |Q_Y|]) \ge 1
		\end{split}
	\end{equation*}
The last inequality holds because $j \neq j + |Q_Y| \pmod {|Q_X|}$ and $Q_X$ is primitive. Hence, $\ham(Q_Y^\infty[1 \dd |X|],Q_X^\infty[1 \dd |X|]) \geq \frac{|X|}{2|Q_Y|} \geq \frac{128} 5 k > 2k$, and $Q_Y = Q_X$. 
\end{proof}

\subsection{Structure of the algorithm}
\label{sec:ham_structure}
We can test a position in the following way to decide whether it is a $k$-mismatch period:

\begin{restatable}{proposition}{test}\label{claim:candidates_test}
	Given $\skh_k(T)$, $\skh_{k}(T[1 \dd p-1])$ and $\skh_{k}(T[1 \dd n-p+1])$ for an integer $1\le p \le n$, there is an algorithm that can decide whether $p$ is a $k$-mismatch period of $T$ using $O(k \log^3 n)$ time and $O(k \log n)$ space. In this case, it also returns $\MI(T[p \dd], T[\dd n-p+1])$. The algorithm is correct w.h.p. 
\end{restatable}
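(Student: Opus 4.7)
The plan is to apply \cref{thm:ham_sketch} in two steps. By definition, $p$ is a $k$-mismatch period iff $\hd{T[1\dd n-p+1]}{T[p\dd n]} \le k$. We already have the sketch of $T[1\dd n-p+1]$ as input, so the only missing ingredient is the sketch of the suffix $T[p\dd n]$.

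First, I would recover $\skh_k(T[p\dd n])$ from the given sketches by exploiting the concatenability property (item 2 of \cref{thm:ham_sketch}). Since $T = T[1\dd p-1]\cdot T[p\dd n]$, we know the sketches of $T$ and of the left factor $T[1\dd p-1]$; therefore item 2 yields $\skh_k(T[p\dd n])$ in $O(k\log n)$ time and $O(k\log n)$ space. The edge cases $p = 1$ (where $T[1\dd p-1]$ is empty) and $p = n$ are handled directly by treating the empty-string sketch as the identity element of concatenation.

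Second, I would feed the two sketches of equal-length strings $T[1\dd n-p+1]$ and $T[p\dd n]$ into the distance-decision algorithm (item 1 of \cref{thm:ham_sketch}). This either certifies that $\hd{T[1\dd n-p+1]}{T[p\dd n]} > k$, in which case $p$ is not a $k$-mismatch period, or returns $\MI(T[p\dd n], T[1\dd n-p+1]) = \MI(T[p\dd], T[\dd n-p+1])$, in which case $p$ is a $k$-mismatch period and the required mismatch information is the output. This step costs $O(k\log^3 n)$ time and $O(k\log n)$ space, which dominates.

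There is no real obstacle here: the whole statement is a direct black-box application of the sketch toolkit. The only detail worth spelling out is that the three sketches we ever hold simultaneously each have size $O(k\log n)$, so both the time bound $O(k\log^3 n)$ and the space bound $O(k\log n)$ are met, and since every invoked subroutine is correct with high probability, a union bound over the two calls preserves the w.h.p.\ guarantee.
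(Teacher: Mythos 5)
Your proof is correct and takes exactly the same route as the paper: use the concatenability property (item 2 of \cref{thm:ham_sketch}) on $\skh_k(T)$ and $\skh_k(T[1\dd p-1])$ to recover $\skh_k(T[p\dd n])$, then apply the distance-decision routine (item 1) to $\skh_k(T[p\dd n])$ and $\skh_k(T[1\dd n-p+1])$. The stated time and space bounds and the w.h.p.\ union-bound argument all match.
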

\begin{proof}
	First, the algorithm applies \cref{thm:ham_sketch} to compute $\skh_{k}(T[p \dd ])$ from $\skh_k(T)$ and $\skh_k(T[1 \dd p-1])$ in $O(k \log n)$ time and space.  
	By \cref{obs:occ_candidates}, $p$ is a $k$-mismatch period of $T$ iff $h = \hd{T[1 \dd n-p+1]}{T[p \dd n]} \leq k$. Given $\skh_{k}(T[p \dd ])$ and $\skh_{k}(T[1 \dd n-p+1])$, \cref{thm:ham_sketch} allows to decide whether $h \le k$ in $O(k \log^3 n)$ time and $O(k \log n)$ space. 
\end{proof}

For $p \in [n/2+1 \dd n/2+\Delta]$, our algorithm computes the sketches required by \cref{claim:candidates_test} via \cref{thm:ham_sketch} and the weights $\score(T[p \dd])$ via \cref{prop:weight} using $O(n k \log n + \Delta t_\score(n))$ total time and $O(\Delta (k \log n + s_\score(n)))$ space. After reaching the end of $T$, the algorithm tests each of the candidates~$p \in [n/2+1 \dd n/2+\Delta]$ in $O(\Delta k \log^3 n) = O(n k \log^3 n)$ total time and $O(k \log n)$ space, and for each $k$-mismatch period, returns $p$, $\score(T[p \dd])$, and $\MI(T[p \dd], T[\dd n-p+1])$.

The rest of the section is devoted to computing periods $p \in [1 \dd n/2]$. The following simple observation is crucial for the correctness of our algorithm. 

\begin{observation}\label{obs:occ_candidates}
An integer $1 \le p \le n$ is a $k$-mismatch period of $T$ iff $\hd{T[1 \dd n-p+1]}{T[p \dd n]} \leq k$. As a corollary, if $p$ is a $k$-mismatch period of $T$, then for all $1 \le \ell \le n-p$, the position $p$ is the starting position of a $k$-mismatch occurrence of $T[1 \dd \ell+1]$.
\end{observation}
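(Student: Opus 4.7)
The first half of the observation is just the definition of a $k$-mismatch period recalled at the start of \cref{sec:prelim}, so there is nothing to prove there. The plan for the corollary is therefore to deduce it from monotonicity of the Hamming distance under restriction to a common-length prefix of two equal-length strings.

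Concretely, I would fix $1 \le \ell \le n-p$ and observe that $T[1 \dd \ell+1]$ is the length-$(\ell+1)$ prefix of $T[1 \dd n-p+1]$ (well-defined since $\ell+1 \le n-p+1$), while $T[p \dd p+\ell]$ is the length-$(\ell+1)$ prefix of $T[p \dd n]$ (well-defined since $p+\ell \le n$). Every position at which these two prefixes disagree is also a position at which the full strings $T[1 \dd n-p+1]$ and $T[p \dd n]$ disagree, so
$$\hd{T[1 \dd \ell+1]}{T[p \dd p+\ell]} \;\le\; \hd{T[1 \dd n-p+1]}{T[p \dd n]} \;\le\; k,$$
where the last inequality uses the assumption that $p$ is a $k$-mismatch period. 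By the definition of a $k$-mismatch occurrence given in \cref{sec:tools_ham}, the endpoint $p+\ell$ is therefore a $k$-mismatch occurrence of $T[1 \dd \ell+1]$ in $T$, which is the same as saying that $T[1 \dd \ell+1]$ has a $k$-mismatch occurrence whose starting position is $p$.

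There is essentially no obstacle here: both statements reduce to unfolding the definitions and to the elementary fact that the set of mismatches on a length-$(\ell+1)$ prefix is a subset of the set of mismatches on the full length-$(n-p+1)$ strings. The real point of recording this observation is methodological rather than technical; it is precisely what enables the algorithm of \cref{sec:ham_structure} to prune candidate periods $p$ by running the $k$-mismatch pattern-matching algorithm of \cref{cor:k-mism_algo} against short prefixes $T[1 \dd \ell+1]$ of the stream, without ever needing to wait until position $n-p+1$ has arrived before discarding $p$.
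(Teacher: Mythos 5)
The paper records this as an observation without any written proof, treating both halves as immediate consequences of the definitions. Your argument is exactly the intended one: the first half is the definition of a $k$-mismatch period verbatim, and the second half follows from the monotonicity of the Hamming distance under simultaneous restriction to aligned prefixes, together with the definition of a $k$-mismatch occurrence. Everything checks out, including the range conditions $\ell+1 \le n-p+1$ and $p+\ell \le n$.
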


It follows that we can use $k$-mismatch occurrences of appropriately chosen prefixes of $T$ in $T$ to filter out candidate $k$-mismatch periods. For $j = 1, \ldots, \lceil \log_{3/2} n \rceil$, define $\ell_j := \lfloor n/(3/2)^j \rfloor$, $P_j = T[1\dd \ell_j]$, and $T_j = T[\max\{\lfloor n/2\rfloor - \ell_j,1\} \dd \lfloor n/2\rfloor - \ell_{j+1}+\ell_j-2]$. For each $j$, we give two algorithms run in parallel, which together compute the set $\per_k^j$ of all $k$-mismatch periods of $T$ in the interval $[\lfloor n/2\rfloor - \ell_j  \dd  \lfloor n/2\rfloor - \ell_{j+1}  - 1]$. The first algorithm assumes that the number of $k$-mismatch occurrences of $P_j$ in $T_j$ is at most $K = 576 k$ (we call such $P_j$ ``non-periodic'', slightly abusing the standard definition), while the second one is correct when the number of occurrences is larger than $K$ (we call such $P_j$ ``periodic''). 

\subsection{\texorpdfstring{The algorithm for non-periodic $P_j$}{The algorithm for non-periodic P\_j}}
\label{sec:ham_non_periodic}
We maintain the sketch and the weight of the current text $T$ using \cref{cor:streaming_sketch} and \cref{prop:weight} in $O(\log^2 n + t_\score(n))$ time per character and $O(k \log n + s_\score(n))$ space. After reading $P_j = T[1 \dd \ell_j]$, we memorise $\score(P_j)$. Additionally, we run the $k$-mismatch algorithm (\cref{cor:k-mism_algo}) for a pattern $P_j$ and a text $T_j$, which in particular computes~$\skh_k(P_j)$. Furthermore, we maintain two hash tables, each of size at most $K$, $\pref_j$ and~$\suf_j$. Intuitively, we want $\pref_j$ to contain every position $p$ which is the starting position of a $k$-mismatch occurrence of $P_j$ in $T_j$, associated with $\skh_k(T[1\dd p-1])$ and the weight of $T[1\dd p-1]$. As for the table $\suf_j$, we would like it to contain every position $t$ such that $n-t+1 \in \pref_j$, again associated with $\skh_k(T[1\dd t])$ and the weight of $T[1\dd t]$. We implement $\pref_j$ and $\suf_j$ via the cuckoo hashing scheme~\cite{10.1007/3-540-44676-1_10} and de-amortise as explained in~\cite[Theorem A.1]{doi:10.1137/1.9781611977936.33} to yield the following: 

\begin{fact}[{\cite{10.1007/3-540-44676-1_10,doi:10.1137/1.9781611977936.33}}]
	\label{fact:hashing}
	A set of $K$ integers in $\{0,1\}^w$, where $w = \Theta(\log n)$ is the size of the machine word, can be stored in $O(K \log n)$ space while maintaining look-up queries in $O(1)$ worst-case time and insertions in $O(1)$ worst-case time w.h.p. 
\end{fact}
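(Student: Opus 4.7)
The plan is to combine Pagh and Rodler's cuckoo hashing scheme with a standard deamortization trick, in order to convert the expected-constant-time insertion bound of vanilla cuckoo hashing into a worst-case-constant-time bound that holds with high probability.

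First I would set up two tables, each of size $(1+\alpha) K$ for a small constant $\alpha > 0$, together with two hash functions $h_1, h_2$ drawn independently from a $\Theta(\log n)$-wise independent family (this requires $O(\log^2 n)$ bits for the seeds, which fits in our budget). Each key $x$ is allowed to reside at either position $h_1(x)$ in the first table or position $h_2(x)$ in the second. Lookup is immediate: a query for $x$ inspects only those two cells, yielding worst-case $O(1)$ query time; correctness follows from the invariant that every stored key lies at one of its two designated positions. Since each of the $O(K)$ cells stores an element of $\{0,1\}^w$ with $w = \Theta(\log n)$, the overall space is $O(K \log n)$ bits as claimed.

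Next I would analyse insertions in the vanilla scheme. The classical cuckoo procedure places $x$ at $h_1(x)$; if that cell already contains some $y$, it evicts $y$, places $y$ at its alternative location, and iterates the eviction chain until an empty cell is reached or a threshold of $\Theta(\log K)$ evictions is exceeded. Pagh and Rodler show that, with high probability over the random hashes, any sequence of $O(K)$ insertions completes in $O(K)$ total eviction work, which gives $O(1)$ \emph{amortized} expected cost. In the rare event that the eviction threshold is hit, fresh hash functions are resampled and the tables rebuilt from scratch; the probability of a rebuild per insertion can be made inverse-polynomial in $n$.

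The main obstacle, and the hard part, is turning this amortized expected guarantee into a worst-case guarantee that holds w.h.p., for which I would invoke the deamortization framework of the second cited reference. The idea is to maintain a small FIFO queue of keys awaiting placement and to perform only a constant number of eviction steps per arriving operation, charging expensive insertions to many subsequent cheap ones. A concentration bound on the total eviction work across any window of $\Theta(\log n)$ operations, combined with a bounded-size stash to absorb occasional overflows, shows that the queue never exceeds constant length w.h.p., so every insertion completes in $O(1)$ worst-case time. Rebuilds triggered by hash-function failure are likewise spread over $\Theta(K)$ operations using a background rebuild process. Combining the $O(1)$ worst-case lookup, the deamortized $O(1)$ worst-case insertion, and the $O(K \log n)$ space bound yields the statement.
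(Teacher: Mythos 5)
Your proposal matches the paper's approach exactly: the paper states this as a \emph{Fact} with no proof of its own, citing Pagh--Rodler cuckoo hashing and a de-amortization result (a FIFO-queue/background-rebuild scheme) precisely as you outline. Your sketch is a faithful recapitulation of what those cited works establish, so it is correct and takes essentially the same route.
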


When we receive a character $T[p]$, the tables are updated as follows. Assume first that the $k$-mismatch algorithm detects a new occurrence of $P_j$ ending at the position $p$. We retrieve $\skh_k(T[1 \dd p-\ell_j])$ and $\MI(P_j, T[p-\ell_j+1 \dd p])$ in $O(k \log^2 n)$ time (\cref{thm:ham_sketch}). Furthermore, with $\score(P_j)$, we can deduce $\score(T[p-\ell_j+1 \dd p])$, and finally, with $\score(T[\dd p])$, we compute $\score(T[1 \dd p-\ell_j])$, and add $p-\ell_j$ associated with $\skh_k(T[1\dd p-\ell_j])$ and $\score(T[1 \dd p-\ell_j])$ to $\pref_j$ in $O(k t_\score(n))$ time and $O(s_\score(n))$ space. Secondly, if for $t = n-p$ we have $t \in \pref_j$, we add $p$ associated with $\skh_k(T[1 \dd p])$ to $\suf_j$. If either of the two insertions takes more than constant time or if the size of any of $\pref_j$ and $\suf_j$ becomes larger than $K$, the algorithm terminates and returns $\bot$. Assume that the algorithm has reached the end of $T$. 

\begin{restatable}{proposition}{completeness}\label{claim:completeness}
If $t \in \per_k^j$, then $t-1 \in \pref_j$ and $n-t+1 \in \suf_j$. 
\end{restatable}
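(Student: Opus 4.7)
The plan is straightforward: the definition of $\per_k^j$ forces $t$ to begin a $k$-mismatch occurrence of $P_j$, and the algorithm's bookkeeping does the rest. I would divide the argument into three stages and then address the only genuinely delicate point.

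First, I would unpack the hypothesis $t \in \per_k^j$ via \cref{obs:occ_candidates}: taking $\ell = \ell_j - 1$, which satisfies $\ell \le n - t$ because $t \le \lfloor n/2 \rfloor$ while $\ell_j \le 2n/3$, I conclude that $t$ starts a $k$-mismatch occurrence of $P_j$ in $T$. Let $p := t + \ell_j - 1$ be its endpoint.

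Second, I would verify that this occurrence is visible to the $k$-mismatch algorithm run on pattern $P_j$ and text $T_j$. The bounds $\lfloor n/2 \rfloor - \ell_j \le t \le \lfloor n/2 \rfloor - \ell_{j+1} - 1$ translate to $p \in [\lfloor n/2 \rfloor - 1 \dd \lfloor n/2 \rfloor + \ell_j - \ell_{j+1} - 2]$, and a direct index check against the endpoints defining $T_j$ shows that $p$ lies in the reporting range of \cref{cor:k-mism_algo}. Thus when $T[p]$ arrives, the sub-algorithm outputs $p$ together with $\skh_k(T[1 \dd t-1])$ and $\MI(P_j, T[t \dd p])$; the main algorithm then inserts the key $p - \ell_j = t - 1$ into $\pref_j$, equipped with $\skh_k(T[1 \dd t-1])$ and $\score(T[1 \dd t-1])$ as prescribed.

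Third, I must show that $n - t + 1$ is later inserted into $\suf_j$, which amounts to checking that $t - 1$ is already in $\pref_j$ when $T[n-t+1]$ arrives, i.e., that $p < n - t + 1$. This reduces to $t < (n - \ell_j + 2)/2$, which I would derive from the upper bound $t \le \lfloor n/2 \rfloor - \ell_{j+1} - 1$ together with the geometric relation $\ell_{j+1} \ge (2/3)\ell_j - O(1)$, which comfortably exceeds $\ell_j/2$. Once $t - 1 \in \pref_j$ is confirmed, the update rule directly inserts $n - t + 1$ into $\suf_j$.

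The main obstacle, which I would handle last, is ruling out that either insertion is preempted by the $\bot$ fallback. This follows from the non-periodic regime assumed in this subsection: since the number of $k$-mismatch occurrences of $P_j$ in $T_j$ is at most $K = 576k$, the set $\pref_j$ receives at most $K$ insertions, and $|\suf_j| \le |\pref_j|$ by construction. Combined with the worst-case guarantees of \cref{fact:hashing}, no capacity or per-step time threshold is ever crossed, so the two target insertions are executed as claimed.
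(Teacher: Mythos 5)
Your argument follows the paper's proof almost step for step: apply \cref{obs:occ_candidates} to realize $t$ as the start of a $k$-mismatch occurrence of $P_j$, check that this occurrence lies inside $T_j$ to conclude $t-1\in\pref_j$, and then bound $2t$ to show $t+\ell_j-1 < n-t+1$, so that $t-1$ is already in $\pref_j$ when $T[n-t+1]$ arrives, giving $n-t+1\in\suf_j$.

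One concrete error: the Stage~1 justification that $\ell_j-1 \le n-t$ ``because $t\le\lfloor n/2\rfloor$ while $\ell_j\le 2n/3$'' does not hold. Those two bounds only give $t+\ell_j-1 \le \lfloor n/2\rfloor + 2n/3 - 1$, which exceeds $n$. The correct derivation must use the tighter upper bound $t\le\lfloor n/2\rfloor - \ell_{j+1}-1$ together with the relation between consecutive $\ell$'s (e.g., $\ell_j - \ell_{j+1}\le\ell_j/3 + O(1)$); this is exactly what your Stage~2 index check against the right endpoint $\lfloor n/2\rfloor - \ell_{j+1}+\ell_j-2 \le n$ of $T_j$ would supply, so Stage~1's numerical claim is both wrong and subsumed by Stage~2. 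Finally, the closing paragraph ruling out $\bot$ is extraneous to this proposition as stated in the paper (the claim is conditional on the algorithm having reached the end of $T$; aborts are accounted for in \cref{lm:ham_nonperiodic}), and is slightly overstated: \cref{fact:hashing} gives $O(1)$ insertion time only w.h.p., so $\bot$ is excluded with high probability, not deterministically.
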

\begin{proof}
As $t \in \per_k^j$, by \cref{obs:occ_candidates} $t+\ell_j-1$ is the ending position of a $k$-mismatch occurrence of $P_j$ in $T$. Furthermore, $t \in [\lfloor n/2\rfloor - \ell_j  \dd  \lfloor n/2\rfloor - \ell_{j+1}-1]$, and hence $T[t, t+\ell_j-1]$ is a fragment of $T_j$. This proves that $t-1 \in \pref_j$.
	To show that $n-t+1 \in \suf_j$, note that
	$$2t \le 2 (\lfloor n/2\rfloor - \ell_{j+1}  - 1) \le n - 2 \ell_{j+1} - 2 \le n - \ell_j.$$
	Therefore, $t+\ell_j-1 < n-t+1$, and $t-1$ will be added to $\pref_j$ before $n-t+1$. The claim follows.
\end{proof}

Finally, the algorithm considers each position $t \in \pref_j$, extracts $\skh_k(T[1\dd t])$ from $\pref_j$ and $\skh_k(T[1\dd n-t])$ from $\suf_j$ and if $t$ passes the test of \cref{claim:candidates_test}, reports $t+1$ as a $k$-mismatch period of $T$, and returns $\score(T[t+1 \dd]) = \score(T) - \score(T[1 \dd t])$.

\begin{restatable}{proposition}{nonperiodicanalysis}\label{lm:ham_nonperiodic}
Assume that the number of occurrences of $P_j$ in $T_j$ is at most $K = 576 k$. The algorithm computes $\per_k^j$ and $\MI(T[p \dd], T[\dd n-p+1]), p \in \per_k^j$ in $O(n (k \log^4 n + t_\score(n)))$ time and $O(k^2 \log^2 n + s_\score(n))$  space and is correct w.h.p. 
\end{restatable}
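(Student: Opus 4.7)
The plan separates naturally into three parts: verifying that the algorithm never aborts with $\bot$, verifying that it outputs exactly $\per_k^j$ together with the correct mismatch information and weights, and a routine accounting of time and space. The hypothesis that $P_j$ has at most $K=576k$ many $k$-mismatch occurrences in $T_j$ is used in exactly one place, and I would highlight this first, as it is the only non-routine step: it bounds the number of insertions attempted into each of $\pref_j$ and $\suf_j$ by $K$, so their capacity is never exceeded. Combined with the de-amortised cuckoo hashing of \cref{fact:hashing}, which guarantees $O(1)$ worst-case insertion time w.h.p., both failure modes that trigger a $\bot$ return are excluded w.h.p.

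For correctness of the output, I would invoke \cref{claim:completeness}: any $t \in \per_k^j$ satisfies $t-1 \in \pref_j$ and $n-t+1 \in \suf_j$ by the time the stream ends, so $t-1$ is tested at the end of the stream. \cref{claim:candidates_test} then decides correctly whether $t+1$ is a $k$-mismatch period using $\skh_k(T[1 \dd t])$ (retrieved from $\pref_j$), $\skh_k(T[1 \dd n-t])$ (retrieved from $\suf_j$) and the global $\skh_k(T)$ maintained by \cref{cor:streaming_sketch}, and returns $\MI(T[t+1 \dd], T[\dd n-t])$ when the answer is positive. The weight $\score(T[t+1 \dd])$ is then recovered as $\score(T) - \score(T[1 \dd t])$ via \cref{prop:weight}. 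A union bound over the polynomially many randomised primitives keeps the overall error probability inverse-polynomial in $n$.

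For the time budget, each arriving character incurs $O(\log^2 n + t_\score(n))$ to maintain $\skh_k(T)$ and $\score(T)$, plus $O(k \log^4 n)$ for the $k$-mismatch algorithm (\cref{cor:k-mism_algo}), summing to $O(n(k \log^4 n + t_\score(n)))$ over the stream. Each of the $\le K = O(k)$ detected occurrences incurs an additional $O(k \log^2 n + k\, t_\score(n))$ for sketch extraction via \cref{thm:ham_sketch}, weight reconstruction via \cref{prop:weight}, and two constant-time hash inserts; summed over all occurrences, this contributes a lower-order term. Testing at most $K$ candidates at the end via \cref{claim:candidates_test} costs $O(k^2 \log^3 n)$ in total, also absorbed. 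For space, the sketch and weight of $T$ cost $O(k \log n + s_\score(n))$, the $k$-mismatch algorithm costs $O(k \log^2 n)$, and each of $\pref_j, \suf_j$ stores $O(k)$ entries whose $O(k \log n)$-bit sketch dominates the $O(\log n)$-bit position and weight, giving $O(k^2 \log n)$ for the two tables; all of these fit within the stated $O(k^2 \log^2 n + s_\score(n))$ bound.
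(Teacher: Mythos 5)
Your proof is correct and follows essentially the same route as the paper's: the number of occurrences bounds the sizes of $\pref_j$ and $\suf_j$, \cref{claim:completeness} guarantees every period is represented in the tables, \cref{claim:candidates_test} decides each candidate, and the time/space budgets are tallied up over the per-character work of \cref{cor:k-mism_algo}, \cref{cor:streaming_sketch}, \cref{prop:weight} and the $O(k)$ occurrence events. The paper's proof is terser but invokes exactly the same lemmas; your added remarks (pinpointing where the $K$-bound is used, noting the $\bot$-abort paths and union-bounding the failure events) make the argument more self-contained without changing its substance.
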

\begin{proof}
The algorithm runs an instance of the $k$-mismatch algorithm (\cref{cor:k-mism_algo}) that takes $O(n k \log^4 n)$ time and $O(k \log^2 n)$ space. Computing weights takes $O(n t_\score(n))$ time and $s_\score(n)$ space. Adding elements to $\pref_j$ and $\suf_j$, as well as look-ups, takes $O(1)$ time per element, and we add at most $K = O(k)$ elements in total. The two hash tables occupy $O(k^2 \log^2 n)$ space (\cref{fact:hashing}, \cref{thm:ham_sketch}). Finally, testing all candidate positions requires $O(K \cdot k \log^3 n) = O(n k \log^3 n)$ time and $O(k \log n)$ space (\cref{claim:candidates_test}). The correctness of the algorithm follows from \cref{claim:completeness} and \cref{claim:candidates_test}. The algorithm can fail if the $k$-mismatch algorithm errs, or if adding an element to the hash tables takes more than constant time, or if the test fails. By the union bound, \cref{cor:k-mism_algo}, \cref{thm:ham_sketch}, and \cref{claim:candidates_test}, the failure probability is inverse-polynomial in $n$. 
\end{proof}

\subsection{\texorpdfstring{The algorithm for periodic $P_j$}{The algorithm for periodic P\_j}}
\label{sec:periodic}
We first explain how we preprocess $P_j$. Recall the Boyer--Moore majority vote algorithm: 

\begin{fact}[\cite{DBLP:conf/birthday/Moore91}]
\label{cor:mjrty_algo}
	Given a sequence $e_1, \dots, e_m$ of elements, there is a streaming algorithm that stores $O(1)$ elements of the sequence and returns a majority element if there exists one (otherwise, it can return an arbitrary element). Assuming constant-time access and comparison on the elements, the algorithm takes $O(m)$ time. 
\end{fact}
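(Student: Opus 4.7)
The plan is to use the classical Boyer--Moore pairing algorithm. Maintain two variables: a stored ``candidate'' element $c$ and a non-negative counter $\mathrm{cnt}$, together occupying $O(1)$ elements of storage. Process the stream by the following rule for each incoming $e_i$: if $\mathrm{cnt} = 0$, set $c \leftarrow e_i$ and $\mathrm{cnt} \leftarrow 1$; otherwise, if $e_i = c$, increment $\mathrm{cnt}$ by one, and if $e_i \neq c$, decrement $\mathrm{cnt}$ by one. After the last element is processed, output $c$.

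The correctness argument is a pairing / cancellation argument. View every decrement step as pairing the incoming element $e_i$ with one of the currently accumulated copies of the candidate $c$: this pair has two elements with distinct values (one equals $c$, the other does not). Thus the execution partitions a subset of the stream into pairs of elements with different values, and the remaining unpaired elements at the end are all equal to the final candidate (they form the ``residue'' counted by $\mathrm{cnt}$). Now suppose some value $e^\ast$ is a strict majority element, i.e.\ appears strictly more than $m/2$ times. In any matching of elements into pairs of distinct values, at most $\lfloor m/2 \rfloor$ occurrences of $e^\ast$ can be paired off (each such pair consumes at most one occurrence of $e^\ast$), so at least one occurrence of $e^\ast$ survives unpaired; hence the residue is non-empty and the final candidate must equal $e^\ast$. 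If there is no majority element, the algorithm simply returns whatever candidate survives, which is permitted by the statement.

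For the resource bounds, each stream element triggers one comparison between $e_i$ and $c$ (constant time by assumption), one increment/decrement of the counter, and possibly one assignment to $c$: $O(1)$ work per element, totalling $O(m)$ time over the stream. The storage consists of a single element and a single integer counter bounded by $m$, which fits in a single machine word under the usual word-RAM convention; in particular, the number of stored elements of the sequence is $O(1)$, matching the space claim.

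The only non-obvious point is the pairing argument; everything else is mechanical bookkeeping. It is important in that argument that the decrement step is applied only when $\mathrm{cnt}>0$, so that each decrement genuinely corresponds to cancelling the incoming $e_i$ against a previously counted copy of $c$, and that in the boundary case $\mathrm{cnt}=0$ the candidate is reset rather than decremented, so no cancellation is ``lost.''
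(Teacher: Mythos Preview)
Your proof is correct and is exactly the classical Boyer--Moore majority vote algorithm with the standard pairing/cancellation argument. The paper does not give its own proof of this statement: it is stated as a Fact and attributed to the Boyer--Moore reference, so your write-up simply fills in the well-known argument that the citation points to.
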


\begin{restatable}{lemma}{computingQ}\label{lm:per_case_encoding}
Given a prefix $P = T[1 \dd \ell]$ of $T$, there is a streaming algorithm that uses $O(k (s_\score(n)+\log^2 n))$ space and runs in $O(\ell k \log^4n + \ell \cdot t_\score(n) + k^2 \log^2 n)$ time. If there is a primitive string $Q$ of length $|Q| \leq \frac \ell {128k}$ that satisfies $\ham(P, Q^*) < 2k$, the algorithm computes, correctly w.h.p., $|Q|$,  $\skh_{3k}(Q)$, and $\score(Q)$ before or upon the arrival of $T[(\lfloor \ell/|Q| \rfloor - 2) \cdot |Q|]$. If there is no such string, the algorithm determines it before $T[\ell]$ arrives.
\end{restatable}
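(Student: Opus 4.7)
I would combine the $O(k)$-mismatch streaming pattern-matching algorithm of \cref{cor:k-mism_algo} with Boyer--Moore majority voting (\cref{cor:mjrty_algo}) applied to the sketches of \cref{thm:ham_sketch}. Set $\ell_0 := \lfloor \ell/4 \rfloor$. As $T$ streams in, I would maintain (i) the streaming sketch $\skh_{4k}(T[1 \dd i])$ via \cref{cor:streaming_sketch}, (ii) the running weight $\score(T[1 \dd i])$ via \cref{prop:weight}, and, once $P_0 := T[1 \dd \ell_0]$ has been read, (iii) an instance of \cref{cor:k-mism_algo} searching for $4k$-mismatch occurrences of $P_0$ in $T[2 \dd \ell]$. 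The candidate period $|Q|$ is fixed as $p^* - \ell_0$ upon the first reported occurrence ending at position $p^*$; if no such occurrence arrives by the time $T[\ell]$ is read, or if $p^* - \ell_0 > \lfloor \ell/(128k)\rfloor$, the algorithm declares that no valid $Q$ exists.

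Correctness of $|Q|$ rests on two Hamming-distance inequalities. The triangle inequality yields $\ham(P_0, T[|Q|+1 \dd |Q|+\ell_0]) \le \ham(P_0, Q^\infty[1 \dd \ell_0]) + \ham(Q^\infty[1 \dd \ell_0], T[|Q|+1 \dd |Q|+\ell_0]) < 2k + 2k = 4k$, where the equality $Q^\infty[1 \dd \ell_0] = Q^\infty[|Q|+1 \dd |Q|+\ell_0]$ (periodicity of $Q^\infty$) lets one bound the second term by $\ham(P,Q^*)$ restricted to positions $[|Q|+1, |Q|+\ell_0]$, so the occurrence at $|Q|+1$ is indeed detected. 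Conversely, any earlier occurrence at $p' + 1$ with $1 \le p' < |Q|$ would, by the same argument, force $\ham(Q^\infty[1 \dd \ell_0], \rot^{-p'}(Q)^\infty[1 \dd \ell_0]) < 8k$, but primitivity of $Q$ (via the argument in the proof of \cref{lemma:Q_egality}) forces at least $\lfloor \ell_0/|Q|\rfloor \ge 32k - 1 > 8k$ disagreements in the first $\ell_0$ positions, a contradiction.

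Once $|Q|$ is known at time $t^* = |Q| + \ell_0$, I would start a Boyer--Moore voter on the sequence of pairs $(\skh_{3k}(T[a-|Q|+1 \dd a]), \score(T[a-|Q|+1 \dd a]))$ for $a$ ranging over the multiples of $|Q|$ from $\lceil t^*/|Q|\rceil \cdot |Q|$ up to $(\lfloor \ell/|Q|\rfloor - 2)|Q|$. Each pair is computed on the fly: the algorithm stores $\skh_{4k}(T[1 \dd a-|Q|])$ from the previous round and the current streaming sketch $\skh_{4k}(T[1 \dd a])$, obtains $\skh_{4k}(T[a-|Q|+1 \dd a])$ by the subtraction property of \cref{thm:ham_sketch}, extracts the $\skh_{3k}$ component, computes the weight difference via \cref{prop:weight}, feeds the pair to the voter, and then overwrites the saved sketch so only one past sketch is retained at any moment. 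Equality of two sketch--weight pairs is tested by running the $k$-mismatch routine of \cref{thm:ham_sketch} and verifying that the reported $\MI$ is empty, which identifies equal strings w.h.p. Since $\ham(P, Q^*) < 2k$ implies that fewer than $2k$ of the processed blocks $T[a-|Q|+1 \dd a]$ differ from $Q$, whereas by the deadline at least $\lfloor \ell/|Q|\rfloor - \lceil t^*/|Q|\rceil - 2 \ge 95k - 3$ blocks have been fed to the voter, the strict majority pair is exactly $(\skh_{3k}(Q), \score(Q))$, which together with $|Q|$ is returned. The main obstacle I anticipate is the calibration of $\ell_0$: it must be large enough for the primitivity argument (requiring $\ell_0/|Q| > 8k$) and yet leave enough unseen stream after $t^*$ both to respect the deadline $(\lfloor \ell/|Q|\rfloor - 2)|Q|$ and to accumulate $\Omega(k)$ error-free blocks for the voter; the choice $\ell_0 = \lfloor \ell/4\rfloor$ satisfies all three constraints thanks to the hypothesis $|Q| \le \ell/(128k)$.
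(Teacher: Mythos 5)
Your proof follows essentially the same route as the paper's: detect $|Q|$ as the offset of the first $\Theta(k)$-mismatch occurrence of a constant-fraction prefix $P_0$ in $T[2\dd]$, then recover $\skh_{3k}(Q)$ and $\score(Q)$ by Boyer--Moore majority voting over sketch--weight pairs of consecutive aligned length-$|Q|$ blocks obtained via prefix-sketch subtraction. The paper uses $\ell' = \lfloor\ell/2\rfloor$, an $8k$-mismatch threshold, and exactly $12k$ blocks, while you use $\ell_0 = \lfloor\ell/4\rfloor$, a $4k$-mismatch threshold, and all blocks up to the deadline; both calibrations are valid thanks to the slack in $|Q|\le \ell/128k$, and your counting ($\ge 95k-3$ blocks, fewer than $2k$ bad) is correct.

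One small slip needs fixing: you anchor the first block at $a_0 = \lceil t^*/|Q|\rceil\cdot|Q|$ and form its sketch by subtracting $\skh_{4k}(T[1\dd a_0-|Q|])$, but $a_0-|Q| < t^*$ in every case, i.e.\ that prefix-sketch would have had to be retrieved via \cref{cor:streaming_sketch} \emph{before} $|Q|$ was known, which is impossible in the streaming model. The paper handles this by letting $p'$ be the smallest multiple of $q$ strictly greater than $p$ and taking $T[p'+1\dd p'+q]$ as the first block, so both prefix-sketch retrievals (at positions $p'$ and $p'+q$) happen after the period is known; shifting your $a_0$ by one more period achieves the same and your margin of $\ge 95k-3$ blocks absorbs the loss easily.
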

\begin{proof}
The main idea of the lemma is that $|Q|$ must be the starting position of the first $\Theta(k)$-mismatch occurrence of $P[1 \dd \lfloor \frac \ell 2 \rfloor]$ in $T[2\dd]$. As soon as we know $|Q|$, we compute the $3k$-mismatch sketches and the weights of $\Theta(k)$ consecutive substrings of length $|Q|$. By \cref{thm:nb_kocc}, the majority of them equal $Q$, which allows computing $\skh_{3k}(Q)$ and $\score(Q)$ via the Boyer--Moore majority vote algorithm~\cite{DBLP:conf/birthday/Moore91}. We now provide full details.

For brevity, let $\ell' = \lfloor \frac \ell 2 \rfloor$. We run the $8k$-mismatch algorithm for a pattern $P[1 \dd \ell']$ and a text $T[2\dd]$. If the $8k$-mismatch algorithm does not detect an occurrence of $P$ before or when reading the position $\ell' + \frac \ell {128k} < \ell$, the algorithm concludes that $Q$ does not exist and terminates. Assume now that the algorithm does detect a $8k$-mismatch occurrence of the pattern ending at a position $p$, $p \le \ell' + \frac \ell {128k}$, and let it be the first detected occurrence. The instance of the $8k$-mismatch algorithm is immediately terminated and we launch the majority vote algorithm (\cref{cor:mjrty_algo}). Let $q = p-\ell'+1$ and $p'$ the smallest multiple of $q$ greater than $p$. We then compute $\skh_{3k}(T[p' +1 \dd p'+q])$, $\skh_{3k}(T[p'+q+1 \dd p'+2q])$, \ldots, $\skh_{3k}(T[p'+ (12k-1)q+1 \dd p'+12kq])$ via \cref{cor:streaming_sketch} and feed them into the majority vote algorithm. After all the $12k$ sketches have been computed, which happens when we read the position $p'+12kq \le (\lfloor \ell/q \rfloor - 2) \cdot q$, we return $q$ and the output of the majority vote algorithm as $\skh_{3k}(Q)$. Using the same majority vote approach, we compute $\score(Q)$.
	
	We now show the correctness of the algorithm. We start by showing that if~$Q$ exists, then $q = |Q|$. Let $i$ be a multiple of $|Q|$ smaller than $\ell/2$. By \cref{thm:nb_kocc} and the triangle inequality, we have $\hd{P[1 \dd \ell']}{T[i+1 \dd i + \ell']} \leq \hd{P[1 \dd \ell']}{Q^*} + \hd{Q^*}{T[i+1\dd i + \ell']} \leq 8k$. 
Reciprocally, if $i$ is not a multiple of $|Q|$, then by primitivity of $Q$ we have $\ham(Q^{\infty}[1 \dd \ell'],Q^{\infty}[i+1 \dd i+\ell']) \geq \lfloor \ell'/|Q|\rfloor \geq 62k$. Furthermore, by the triangle inequality, $\ham(Q^{\infty}[1 \dd \ell'],Q^{\infty}[i+1 \dd i+\ell']) \leq 8k + \ham(P[1 \dd \ell'],T[i+1 \dd i + \ell'])$,  
	which implies $\ham(P[1 \dd \ell'],T[i+1 \dd i + \ell']) \geq 54k$. Consequently, if $Q$ exists, at least $6k$ of the strings $T[p'+1 \dd p'+q]$, $T[p'+q+1 \dd p'+2q]$, \ldots, $T[p'+ (12k-1)q +1\dd p'+12kq]$ are equal to $Q$, and the majority vote algorithm indeed outputs $sk_{3k}(Q)$ (assuming that neither the $8k$-mismatch algorithm nor the algorithm of \cref{cor:streaming_sketch} did not err, which is true w.h.p.). 
	
We finally analyse the complexity of the preprocessing step. The $8k$-mismatch pattern matching algorithm (\cref{cor:k-mism_algo}) takes $O(k\log^4 n)$ time per character and $O(k \log^2 n)$ space. The algorithm of \cref{cor:streaming_sketch} uses $O(\log^2 n)$ time per character $O(k \log n)$ space. Furthermore, $O(k)$ sketches are retrieved, which takes $O(k^2 \log^2 n)$ time (\cref{cor:streaming_sketch}) and maintaining weights requires $t_\score(n)$ time and $k \cdot s_\score(n)$ space. Finally, the majority vote algorithm takes $O(k \log n)$  space and $O(k \log n)$ time. In total, the algorithm takes $O(k (s_\score(n)+\log^2 n))$ space and $O(\ell k \log^4 n + \ell \cdot t_\score(n) + k^2 \log^2 n)$ time.
\end{proof} 

We now apply the lemma above to preprocess $P_j$ as follows. We maintain the $3k$-mismatch sketch of $T$ using \cref{cor:streaming_sketch}. \cref{thm:nb_kocc} ensures that if there are at least $K$ occurrences of $P_j$ in $T_j$, then there exists a primitive string $Q$ of length $q: = |Q| \leq \frac {\ell_j} {128k}$ such that $\hd{Q^*}{P_j} < 2k$. For brevity, let $\lambda_j = (\lfloor\ell_j/q\rfloor -2)\cdot q$. We apply \cref{lm:per_case_encoding} to $P_j$ and condition on the fact that it outputs $q$, $\skh_{3k}(Q)$, and $\score(Q)$ before or upon the arrival of $T[\lambda_j]$. By an application of \cref{thm:ham_sketch}, we compute $\skh_{3k}(Q^{\lfloor\lambda_j/q\rfloor+1})$ and then $\MI(P_j[\dd \lambda_j+q], Q^{\lfloor\ell_j/q\rfloor+1})$ using $\skh_{3k}(P_j[\dd \lambda_j+q])$. To finish the preprocessing, we compute one more sketch:

\begin{restatable}{lemma}{computingQsuffix}\label{lm:sketching_suffix_Q}
Assume there are $\ge K$ $k$-mismatch occurrences of $P_j$ in $T_j$. There is a streaming algorithm that uses $O(k \log^2 n + s_\score(n))$ space and $O(k \log^4 n + t_\score(n))$ time per character, and computes $\skh_{k}(Q[\dd r])$ for $r := n-2p \pmod q$ and $\score(Q[\dd r ])$ correctly w.h.p. upon arrival of~$T[p]$, where $p$ is the endpoint of the first $k$-mismatch occurrence of $P_j$ in $T_j$.
\end{restatable}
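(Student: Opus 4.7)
Proof plan. Building on the preprocessing of \cref{lm:per_case_encoding}, in parallel with the streaming of $T$ we maintain (i) the running sketch $\skh_{3k}(T[1\dd \cdot])$ via \cref{cor:streaming_sketch}, (ii) the running weight $\score(T[1 \dd \cdot])$ via \cref{prop:weight}, and (iii) the preprocessing itself, which at time $\lambda_j$ delivers $q$, $\skh_{3k}(Q)$, and $\score(Q)$. At time $\lambda_j + q$, from $\skh_{3k}(Q)$ we build $\skh_{3k}(Q^{(\lambda_j + q)/q})$ using the power operation of \cref{thm:ham_sketch} and combine it with the running $\skh_{3k}(T[1 \dd \lambda_j+q])$ to extract $\MI(T[1 \dd \lambda_j + q], Q^\infty[1 \dd \lambda_j + q])$; since $\hd{P_j}{Q^\infty[1 \dd \ell_j]} < 2k$, this mismatch set has at most $2k$ entries.

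Equipped with this mismatch set, we can read off $Q[i]$ for every $i \in [1 \dd q]$ directly from arriving characters of $T$: at a position $j \equiv i \pmod q$ inside $P_j$ that is not flagged as a mismatch we have $Q[i] = P_j[j] = T[j]$, and at a mismatch position the value of $Q[i]$ is stored in the triple. We exploit this during the streaming range $[\lambda_j + q + 1, \lambda_j + 2q]$, which still lies inside $P_j$ since $\lambda_j + 2q \le \ell_j$: as each character $T[\lambda_j + q + i]$ arrives we deduce $Q[i]$ and feed it into a fresh instance of \cref{cor:streaming_sketch} and of the streaming weight algorithm of \cref{prop:weight}. This builds $\skh_k(Q[\dd i])$ and $\score(Q[\dd i])$ incrementally, culminating in $\skh_k(Q)$ and $\score(Q)$ at time $\lambda_j + 2q$.

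The main obstacle is that $r := (n - 2p) \bmod q$ is revealed only at time $p$, which can be well past $\lambda_j + 2q$, so we cannot snapshot $\skh_k(Q[\dd r])$ for the exact $r$ during reconstruction. To address this, during the reconstruction phase we store $O(\log n)$ snapshots of the pair $(\skh_k(Q[\dd i]), \score(Q[\dd i]))$ at strategically chosen indices $i$; we also run \cref{cor:k-mism_algo} with pattern $P_j$ and text $T_j$, so that at time $p$ it delivers $\skh_k(T[1 \dd s-1])$ and $\MI(T[s \dd p], P_j)$ for $s = p - \ell_j + 1$. At time $p$ we compute $r$ and assemble $\skh_k(Q[\dd r])$ and $\score(Q[\dd r])$ by combining the appropriate snapshot with the running $\skh_{3k}(T[1 \dd p])$, $\skh_k(T[1 \dd s-1])$, $\MI(T[s \dd p], P_j)$, and the preprocessing $\MI$, using the concatenation and mismatch-transfer properties of \cref{thm:ham_sketch}; the extra observation of $Q$ afforded by the first $k$-mismatch occurrence $T[s \dd p] \approx P_j \approx Q^*$ of $P_j$ is what lets us pin down the sketch for the specific $r$. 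The space bound $O(k \log^2 n + s_\score(n))$ and the per-character time $O(k \log^4 n + t_\score(n))$ follow by summing the contributions of the running sketches, the snapshots, and the invocation of \cref{cor:k-mism_algo}; correctness holds w.h.p.\ by a union bound over the randomised subroutines invoked above.
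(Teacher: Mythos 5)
Your plan contains a genuine gap at the exact point you flag as the main obstacle. You correctly observe that $r = n-2p \pmod q$ is not revealed until $T[p]$ arrives, long after the window $[\lambda_j+q+1,\lambda_j+2q]$ in which you reconstruct $Q$ character-by-character. But the two mechanisms you propose to bridge this gap do not work. First, storing $O(\log n)$ snapshots of $(\skh_k(Q[\dd i]),\score(Q[\dd i]))$ at preselected indices cannot recover $\skh_k(Q[\dd r])$ for an $r$ chosen adversarially later: the $k$-mismatch sketch of \cref{thm:ham_sketch} supports concatenation (and its inverse, when both a prefix sketch and a full-string sketch are held), but it provides no way to ``split off'' an arbitrary prefix $Q[\dd r]$ from a snapshot at a neighbouring index without already holding the sketch of the intervening piece $Q[i_0+1 \dd r]$, and there are $\Theta(q)$ distinct values of $r$ to cover. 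Second, at time $p$ the occurrence data you list — the running $\skh_{3k}(T[1\dd p])$, $\skh_k(T[1\dd s-1])$, $\MI(T[s\dd p],P_j)$ — lets you derive $\skh_{3k}(T[s\dd p])$ and mismatch information against $Q^*$, but never $\skh_k(T[1\dd s+\lambda_j+r-1])$ or any other intermediate prefix sketch, since those were not snapshotted when the corresponding positions streamed by. So there is no route from your stored data to $\skh_k(Q[\dd r])$.

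The idea you are missing is to \emph{anticipate} the occurrence. The paper runs a second instance of the $k$-mismatch algorithm on the truncated pattern $P_j[\dd\lambda_j]$ alongside the one for $P_j$. Each $k$-mismatch occurrence of $P_j$ ending at $p$ must be preceded, $\ell_j-\lambda_j$ positions earlier at $x=p-(\ell_j-\lambda_j)$, by a $k$-mismatch occurrence of $P_j[\dd\lambda_j]$. At time $x$ the algorithm already knows $x$ and hence $r' = n-2(x+\ell_j-\lambda_j) \pmod q$, and since $r'<q$ while the lead time is $\ell_j-\lambda_j\ge 2q$, the sketch $\skh_k(T[x+1\dd x+r'])$ and weight $\score(T[x+1\dd x+r'])$ can be accumulated in streaming over the next $r'$ characters, well before $T[p]$ arrives. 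If the truncated occurrence fails to extend to a full occurrence of $P_j$, the candidate sketch is discarded; if it extends, then $r'=r$ and, using $\MI(T[p-\ell_j+1\dd p],P_j)$ together with the precomputed $\MI(P_j[\dd\lambda_j+q],Q^*)$, one obtains $\MI(T[x+1\dd x+r],Q[\dd r])$ and translates the sketch into $\skh_k(Q[\dd r])$ and $\score(Q[\dd r])$. Finally, the space bound relies on a structural fact you do not use: the $k$-mismatch occurrences of $P_j[\dd\lambda_j]$ in $T_j$ are themselves at distances that are multiples of $q$ (via \cref{thm:nb_kocc} and \cref{lemma:Q_egality}), so at most two candidate sketches of length $<q$ are ever under construction at once.
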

\begin{proof}

By the condition of the lemma, $Q$ is defined and we assume to have computed it by arrival of $T[\lambda_j]$, where $\lambda_j = (\lfloor \ell_j/q\rfloor - 2) \cdot q$ and $q = |Q|$. We run two instances of the $k$-mismatch algorithm: One for a pattern $P_j[\dd \lambda_j]$  and $T_j$, and the other for $P_j$ and $T_j$. Assume that we detect a $k$-mismatch occurrence of $P_j[\dd \lambda_j]$ ending at a position $x$. Let $r' = n - 2(x+\ell_j-\lambda_j) \pmod q$. We compute $\skh_{k}(T[x+1  \dd x + r'])$ via \cref{cor:streaming_sketch}, and $\score(T[x+1 \dd x + r'])$ via \cref{prop:weight}. If $p = x+\ell_j-\lambda_j$ is not the ending position of a $k$-mismatch occurrence of $P_j$, we discard the computed information and continue. Otherwise, we have $r' = r$. At the position $p$ the $k$-mismatch algorithm extracts $\MI(T[p-\ell_j+1 \dd p], P_j)$ in $O(k)$ time, and we use it to extract $\MI(T[x+1 \dd x + r],Q[\dd r])$ from $\MI(P_j[\dd \lambda_j+q], Q^*)$ in $O(k)$ time as well. Note that the size of the extracted mismatch information is at most $6k$ by \cref{thm:nb_kocc}. Finally, we apply \cref{thm:ham_sketch} to compute $\skh_{k}(Q[\dd r])$ from $\skh_{k}(T[x+1\dd x + r])$ and $\MI(T[x+1 \dd x + r],Q[ \dd r])$ in $O(k \log^2 n)$ time and $O(k \log n)$ space. Similarly, with $\score(T[x+1 \dd x + r])$ and the mismatch information, we compute $\score(Q[\dd r])$ in $O(k t_\score(n))$ time and $O(s_\score(n))$ space. (See \cref{fig:Qr} for an illustration.)

\begin{figure}[ht]
	\begin{center}
		\begin{tikzpicture}[xscale=1.6]
	\begin{scope}[xshift = 1.25cm]

		\draw (1,0) rectangle (5,.4);
		\draw[color=black!25!white] (5,0) rectangle (6.6,.4);
		\draw[fill=red!20] (1, 0) rectangle (5, 0.4);
		\draw[fill=red!5] (5., 0) rectangle (6.6, 0.4);

		\foreach \x in {5,6} {		
		\pgfmathsetmacro{\xa}{0.8*\x+1}
		\pgfmathsetmacro{\xb}{0.8*(\x+1)+1}
		\path (\xa, 0.4) coordinate (A) (\xb, 0.4) coordinate (B);
		\draw [bend left=90, looseness=1.50, color=black!50] (A) to node[above] {\tiny{$Q$}} (B);
	}
		\node[right]  at (0.6,0.2) {$P_j$};
		
		

		
	\end{scope}
	
	\draw[dashed] (2.25,0) -- (2.25, -1.2);
	\draw[dashed] (6.25,0.05) -- (6.25, -1.2);
	\draw[dashed] (7.85,0.05) -- (7.85, -1.2);
	\draw[dashed] (7.05,0) -- (7.05,0.4);
	
	\begin{scope}[yshift=-1.5cm]
		\draw (1,0) rectangle (9,0.4);
		\draw[fill=red!20] (2.25, 0) rectangle (6.25, 0.4);
		\draw[fill=red!5] (6.25, 0) rectangle (7.85, 0.4);
		
		\draw[dashed] (7.05,0.0) -- (7.05,0.4);
		
		\node[right]  at (0.6,0.2) {$T$};
		

		\draw[thick] (2.05,-0.1) -- (1.95,-0.1) -- (1.95,0.5) -- (2.05,0.5);
		\draw[thick] (4.3,-0.1)--(4.4,-0.1) -- (4.4,0.5) -- (4.3,0.5);
		\node[label = {above: \tiny{$1$}}]  at (1.06,-0.55) {};
		\node[label = {above: \tiny{$\lfloor n/2\rfloor \!-\! \ell_j$}}]  at (1.9,-0.6) {};
		
		\node[label = {above: \tiny{$x$}}]  at (6.2,-0.55) {};
		\node[label = {above: \tiny{$p$}}]  at (7.8,-0.55) {};
		\draw (7.85,-0.0) -- (7.85,0.05);
		\node[label = {above: \tiny{$\lfloor n/2\rfloor \!-\! \ell_{j+1}\!-\!1$}}]  at (4.35,-0.6) {};
		\node[label = {above: \tiny{$n$}}]  at (8.85,-0.55) {};
		
		
		\node[label = {above: \tiny{$x+r'$}}]  at (6.85,-0.575) {};
		\draw[thin] (6.85,-0.05)--(6.85,0.05);
		\draw[thin,|->] (6.25,0.5) -- (6.85,0.5) node[yshift=5, pos=0.5] {};
		
		\node  at (6.65,0.75) {\tiny{$\sk_k$}};
	\end{scope}
	
	\node[draw=none]  at (2.4,-0.25) {\tiny{$1$}};
	\node[draw=none]  at (6.15,-0.25) {\tiny{$\lambda_j$}};
	\node[fill=white, draw=none]  at (6.95,-0.25) {\tiny{$\lambda_j+q$}};
	\node[draw=none] at (7.75,-0.25) {\tiny{$\ell_j$}};

\end{tikzpicture}
	\end{center}
	
	\caption{When we detect a $k$-mismatch occurrence of $P_j[\dd \lambda_j]$, we use the next $q$ characters to compute the candidate for $\sk_k(Q[ \dd r])$. If the $k$-mismatch occurrence of $P_j[\dd \lambda_j]$ extends to a $k$-mismatch occurrence of $P_j$, then we keep the candidate sketch.}
	\label{fig:Qr}
\end{figure}
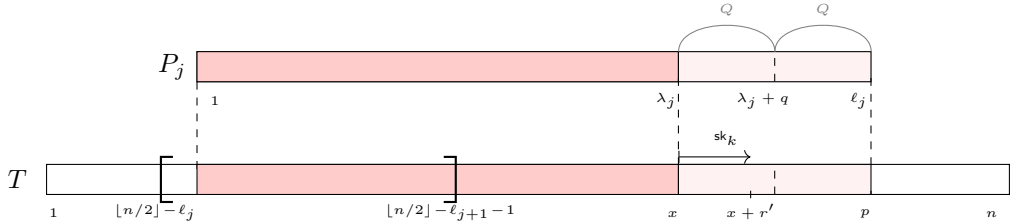
	
To show the complexity of the algorithm, we need to understand the structure of the $k$-mismatch occurrences of $P_j[\dd \lambda_j]$ in $T_j$. As each $k$-mismatch occurrence of $P_j$ starts with a $k$-mismatch occurrence of $P_j[\dd \lambda_j]$,  $T_j$ contains at least $K$ $k$-mismatch occurrences of the latter. Since $|T_j| \le 2 \ell_j-\ell_{j-1} \le \frac{3}{2} \lambda_j$, by \cref{thm:nb_kocc} there is a primitive string $Q'$ such that $|Q'| \le \frac{\lambda_j}{128k}$ and $\hd{P_j[ \dd \lambda_j]}{(Q')^*} \le 2k$. By \cref{lemma:Q_egality}, $Q = Q'$, and again by \cref{thm:nb_kocc}, the difference between the starting positions of any two $k$-mismatch occurrences of $P_j[1 \dd \lambda_j]$ in~$T_j$ is a multiple of $q$.
	Therefore, we never process more than two $k$-mismatch occurrences of $P_j[\dd \lambda_j]$ at a time and the bounds follow.
\end{proof}

This information will allow us computing the sketches necessary for \cref{claim:candidates_test}. 

\subsubsection{Main phase of the algorithm}
\label{sec:main_phase}
The main phase distinguishes two cases: $j = 1,2$ and $j \ge 3$. For $j = 1,2$, we show the following result: 

\begin{restatable}{proposition}{analysisonetwo}\label{lm:analysis_j12}
	Assume that $j = \{1,2\}$ and that $P_j$ has more than $K = 576k$ $k$-mismatch occurrences in $T_j$. The algorithm computes $\per_k^j$, $\MI(T[p \dd], T[\dd n-p+1])$ and $\score(T[p \dd])$ for $ p \in \per_k^j$ in  $O(n \cdot k t_\score(n) \log^4 n)$ time and $O(k^2 \log^2 n + s_\score(n))$ space and is correct w.h.p.
\end{restatable}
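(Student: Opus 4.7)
The plan is to describe the algorithm for $j \in \{1, 2\}$ in the periodic case and then verify correctness and the complexity bounds. The starting observation, which I would establish first, is that the number of candidate $k$-mismatch periods is $N = O(k)$: by \cref{thm:nb_kocc} the candidates lie in an interval of width at most $\ell_j - \ell_{j+1} = \Theta(\ell_j)$ and differ by multiples of $q = |Q| \le \ell_j/(128k)$, so their count is $O((\ell_j-\ell_{j+1}) \cdot 128k/\ell_j) = O(k)$. This bound is the key that permits per-candidate sketch storage within $O(k^2\log^2 n)$ space.

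Next, I would describe the streaming pass, which runs in parallel: (a) the $k$-mismatch pattern matching algorithm (\cref{cor:k-mism_algo}) for $P_j$ in $T_j$; (b) the preprocessing of \cref{lm:per_case_encoding} and \cref{lm:sketching_suffix_Q}; (c) the streaming sketch $\skh_{3k}(T[1\dd m])$ (\cref{cor:streaming_sketch}) and weight $\score(T[1\dd m])$ (\cref{prop:weight}). When the first $k$-mismatch occurrence of $P_j$ in $T_j$ is detected (ending at $e_1$, starting at $s_1$), the algorithm extracts and stores $\skh_k(T[1\dd s_1-1])$ and $\score(T[1\dd s_1-1])$ from the $k$-mismatch algorithm. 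As further occurrences at starts $s_i = s_1 + (i-1)q$ arrive, their per-occurrence mismatch information combines with $\MI(P_j[\dd \lambda_j+q],Q^\infty)$ to incrementally maintain the global mismatch set $M = \MI(T[s_1\dd e_{\mathrm{last}}], Q^\infty[s_1\dd e_{\mathrm{last}}])$, of size at most $6k$ by \cref{thm:nb_kocc}. Additionally, for each candidate $s_i$, at the moment the stream reaches position $L_i := n - s_i + 1$, the algorithm extracts and stores $\skh_k(T[1\dd L_i])$ via \cref{cor:streaming_sketch}.

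After the last character of the stream arrives, extract $\skh_k(T)$ and $\score(T)$, and for each candidate $s_i$ derive $\skh_k(T[1\dd s_i-1])$ and $\score(T[1\dd s_i-1])$ from the anchor $\skh_k(T[1\dd s_1-1])$, $\skh_{3k}(Q)$, and $M$ using the concatenation and mismatch-correction operations of \cref{thm:ham_sketch} (items 3 and 4). The pair $(\skh_k(T[1\dd s_i-1]), \skh_k(T[1\dd L_i]))$ together with $\skh_k(T)$ then feeds \cref{claim:candidates_test} to decide whether $s_i \in \per_k^j$, and on success the algorithm returns $s_i$, the reported $\MI(T[s_i\dd ],T[1\dd n-s_i+1])$, and $\score(T[s_i\dd ]) = \score(T) - \score(T[1\dd s_i-1])$. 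Correctness follows by combining \cref{obs:occ_candidates} (every $k$-mismatch period is a candidate), \cref{thm:nb_kocc} and \cref{lemma:Q_egality} (the arithmetic-progression structure and uniqueness of $Q$), the sketch correctness of \cref{thm:ham_sketch}, and \cref{claim:candidates_test}, with a union bound over the $O(n)$ random events giving w.h.p.\ correctness.

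The step I expect to be the main obstacle is the space accounting: the bound $N = O(k)$ must be pinned down with care at the boundary case $j=1$ (where $\lfloor n/2\rfloor - \ell_j < 0$ and the candidate interval starts at $1$ rather than at $\lfloor n/2\rfloor - \ell_j$), and the $N$ stored sketches $\skh_k(T[1\dd L_i])$ must genuinely fit within $O(k^2\log n) \subseteq O(k^2\log^2 n)$ bits; once this is in place, the remaining items ($M$, the anchor sketch $\skh_k(T[1\dd s_1-1])$, the $k$-mismatch algorithm state of size $O(k\log^2 n)$, the preprocessing state of size $O(k\log^2 n + k\, s_\score(n))$ from \cref{lm:per_case_encoding}, and the running sketch and weight) all fit in $O(k^2\log^2 n + s_\score(n))$. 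The time bound then follows by summing $O(nk\log^4 n)$ for the $k$-mismatch algorithm, $O(n(\log^2 n + t_\score(n)))$ for the running sketch and weight, $O(k^2\log^2 n)$ for the $O(k)$ extractions of \cref{cor:streaming_sketch}, and $O(k^2\log^3 n)$ for the $O(k)$ per-candidate tests, all dominated by $O(nk\, t_\score(n)\log^4 n)$.
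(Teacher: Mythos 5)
Your proposal rests on the claim $N = O(k)$: ``by \cref{thm:nb_kocc} the candidates lie in an interval of width $\Theta(\ell_j)$ and differ by multiples of $q \le \ell_j/(128k)$, so their count is $O((\ell_j-\ell_{j+1}) \cdot 128k/\ell_j) = O(k)$.'' This inequality goes the wrong way. From $q \le \ell_j/(128k)$ one gets $1/q \ge 128k/\ell_j$, so the count $(\ell_j-\ell_{j+1})/q$ is bounded \emph{below} by roughly $42k$; there is no upper bound, and when $q$ is small (e.g.\ $q=1$) the number of candidates in the arithmetic progression can be $\Theta(\ell_j) = \Theta(n)$. Since \cref{thm:nb_kocc} gives a bound of $576k$ only in the non-periodic case and we are explicitly in the case with more than $576k$ occurrences, there is no $O(k)$ cap to lean on. This breaks the plan at its foundation: storing $\skh_k(T[1\dd L_i])$ for each candidate $s_i$ (your ``per-candidate sketch storage'') would cost $\Omega(n k \log n)$ bits in the worst case, vastly exceeding the $O(k^2\log^2 n + s_\score(n))$ budget, and the runtime of the per-candidate tests would also be unbounded.

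The paper's actual route is designed precisely to avoid ever iterating over, or storing state per, these $\Omega(k)$ (possibly $\Theta(n)$) candidates. It first extends $P_j$ to $P_j'$ via \cref{alg:prefix_extension_12}, and then splits $\per_k^j$ into $\per_k^j\cap[1\dd n-\ell_j'+1]$ (handled exactly as for $j\ge3$, with either an $O(k)$-size hash table because the extended pattern genuinely has $O(k)$ occurrences, or a constant number of anchor sketches plus $\skh_{3k}(Q)$ and $O(k)$ bits of mismatch information from which the sketch for every candidate is reconstructed on the fly) and $\per_k^j\cap[n-\ell_j'+2\dd n]$, which only arises for $j\in\{1,2\}$ because the extension is truncated at $n-q+1$. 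For the latter, the paper uses \cref{obs:ham_aperiodic_j12} and a moving pointer $p'$: it maintains a single sketch $\skh_{3k}(T[1\dd p'-1])$ and the mismatch information $\MI(T[p'\dd p'+i q-1],Q^i)$ of size $O(k)$, advancing $p'$ when this set grows too large, so that at the end of the stream every candidate's prefix and suffix sketches can be derived from a constant number of anchors. None of this reconstruction machinery, and none of the truncation issue specific to $j\in\{1,2\}$, appears in your sketch. To repair your argument you would need both to drop the false $O(k)$ candidate bound and to introduce the on-the-fly reconstruction of sketches from the $Q$-periodic structure, which is the actual technical content of this case.
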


\cref{lm:analysis_j12} can be proven using the same ideas as in the case $j \geq 3$, but needs some additional care because the size of the pattern being large ($n/4$ or $n/2$) leads to edge cases that are treated separately. Below, we focus on the case $j \ge 3$.
We start by extending $P_j$ into a prefix $P_j'$ (\cref{alg:prefix_extension}).

\begin{algorithm}[ht]
	\caption{Extension of $P_j$ into $P_j'$ of length $\ell_j'$: case $j \ge 3$}
	\label{alg:prefix_extension}
	\setcounter{AlgoLine}{0} 
	\label{line:first} $\ell_j' \gets \lambda_j$, $\sk^1 \gets \skh_{3k}(T[1 \dd \lambda_j])$, $\score_1 \gets \score(T[1 \dd \lambda_j])$\;
	\While{$\ell_j' < 2\ell_j$}{
	$\sk^2 \gets \sk^1$, $\score_2 \gets \score_1$\;
	$\ell_j' \gets \ell_j' + q$, $\sk^1 \gets \skh_{3k}(T[1 \dd \ell_j'])$, $\score_1 \gets \score(T[1 \dd \ell_j'])$\;
\If{$h = \MI(T[1 \dd \ell_j'], Q^*) > 2k$}{\Return{$(\ell_j', \sk^1, \sk^2, \score_1, \score_2)$}\tcp*{Use $\skh_{3k}(T)$ and $\skh_{3k}(Q)$}\label{line:aperiodic_extension}}
	}
	\Return{$(\ell_j', \sk^1, \sk^2, \score_1, \score_2)$}\;\label{line:periodic_extension}
\end{algorithm}

The following inequalities are essential for analysis of correctness of the algorithm:

\begin{restatable}{proposition}{inequalities}\label{claim:inequalities_j_3}
For $j \ge 3$, we have $\ell_j' < n$ and $\lfloor n/2\rfloor - \ell_{j+1} - 1 + (\ell_j'-1) \le n$.
\end{restatable}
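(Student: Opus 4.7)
The plan is to first establish a uniform upper bound on the terminal value of $\ell_j'$ returned by \cref{alg:prefix_extension}, and then combine it with the geometric decay $\ell_j = \lfloor (2/3)^j n \rfloor$ to deduce both inequalities.

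For the bound on $\ell_j'$, I would observe that the while loop is entered only when $\ell_j' < 2\ell_j$, and each iteration increments $\ell_j'$ by $q$. Consequently, whether the loop exits via the early return on \cref{line:aperiodic_extension} or because the guard fails, the returned value satisfies $\ell_j' < 2\ell_j + q$. Since $q = |Q| \le \ell_j/(128k) \le \ell_j/128$, this yields the key bound $\ell_j' < (257/128)\,\ell_j$.

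For the first inequality $\ell_j' < n$, I would combine this with $\ell_j \le (2/3)^j n \le 8n/27$ (valid for $j \ge 3$), giving $\ell_j' < (257/128)(8n/27) = 257n/432 < n$. For the second inequality, I would rewrite it as $\ell_j' \le (n - \lfloor n/2 \rfloor) + \ell_{j+1} + 2$, use $n - \lfloor n/2\rfloor \ge n/2$, and invoke the lower bound $\ell_{j+1} \ge (2/3)\ell_j - 1$, which follows from $(2/3)^{j+1} n \ge (2/3)\ell_j$. Plugging in the bound $\ell_j' < (257/128)\ell_j$, the claim reduces to $(515/384)\,\ell_j \le n/2 + 1$, which holds for $j \ge 3$ since $\ell_j \le 8n/27$ and $(515/384)\cdot(8/27) < 1/2$.

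The main, and essentially only, obstacle is the bookkeeping around the small additive slack coming from $q$ together with the integer parts in $\lfloor n/2\rfloor$, $\ell_j$, and $\ell_{j+1}$. This slack is harmless precisely because the factor $1/(128k)$ in $|Q|$ is much smaller than the gap between $2\ell_j$ and $n/2 + \ell_{j+1}$, a gap that is positive only from $j \ge 3$ onwards (for $j \in \{1,2\}$ the analogous inequalities would fail, which is exactly why those cases are treated separately in \cref{lm:analysis_j12}).
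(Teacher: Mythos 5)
Your proof is correct and follows essentially the same route as the paper: establish $\ell_j' < 2\ell_j + q$ from the loop structure, bound $q$ by $\ell_j/128$ via $q \le \ell_j/(128k)$, and then use the geometric bound $\ell_j \le (2/3)^j n \le 8n/27$ for $j \ge 3$. The minor rearrangement in your treatment of the second inequality (isolating $\ell_j'$ and invoking $\ell_{j+1} \ge (2/3)\ell_j - 1$ rather than bounding $-\ell_{j+1}$ directly by $-n/(3/2)^{j+1}+1$) is cosmetic; the arithmetic and the reasons it works are identical to the paper's.
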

\begin{proof}
	We start by showing the first inequality:
	$$\ell_j' < 2 \ell_j + q \le \ell_j \cdot (2 + 1/128k) \le (8n/27) \cdot (2 + 1/128k) \le n$$
	To show the second inequality, note that
	\begin{align*}
		&\lfloor n/2\rfloor - \ell_{j+1}-1 +\ell_j'-1 \le\\
		& \lfloor n/2\rfloor - n/(3/2)^{j+1}+2n/(3/2)^j+q \le \lfloor n/2\rfloor + (4/3+1/128) \cdot n / (3/2)^j  \le n
	\end{align*}
\end{proof}

We now discuss how to implement \cref{alg:prefix_extension}. As we know $\skh_{3k}(Q)$ before or upon the arrival of $T[\lambda_j]$, \cref{alg:prefix_extension} can be implemented in streaming via \cref{cor:streaming_sketch}, \cref{thm:ham_sketch} to use $O(k \log n + s_\score(n))$ space and $O(k \log^2 n + t_\score(n))$ time per character. Furthermore, it always terminates before the arrival of $T[n]$ (\cref{claim:inequalities_j_3}) and outputs $\ell_j' : = |P_j'|$, $\skh_{3k}(P_j')$ and $\score(P_j')$, and $\skh_{3k}(P_j'[1\dd \ell_j'-q])$ and $\score(P_j'[1\dd \ell_j'-q])$. 
Define $T_j' = T[\max\{\lfloor n/2\rfloor - \ell_j,1\} \dd \lfloor n/2\rfloor - \ell_{j+1}+\ell_j'-2]$. Note that $T_j'$ is well-defined by \cref{claim:inequalities_j_3}. Furthermore, let $P_j'' = [1\dd \ell_j'-q]$, $\ell_j'' = |P_j''|$, and $T_j'' = T[\lfloor n/2\rfloor - \ell_j \dd \lfloor n/2\rfloor - \ell_{j+1}+\ell_j''-2]$. From \cref{obs:occ_candidates} and  \cref{claim:inequalities_j_3} we obtain: 

\begin{corollary}\label{cor:filter_extension}
The set $\per_k^j$ is a subset of the set of the starting positions of $k$-mismatch occurrences of $P_j'$ in $T_j'$, and consequently of the set of the starting positions of $k$-mismatch occurrences of $P_j''$ in $T_j''$.
\end{corollary}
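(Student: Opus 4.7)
The plan is to read off the statement directly from \cref{obs:occ_candidates} combined with the length bounds of \cref{claim:inequalities_j_3}, after verifying that the relevant occurrences fit inside the text windows $T_j'$ and $T_j''$. Fix $p \in \per_k^j$, which means $p \in [\lfloor n/2\rfloor - \ell_j \dd \lfloor n/2\rfloor - \ell_{j+1} - 1]$ and $p$ is a $k$-mismatch period of $T$.

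First I would handle $P_j'$. To invoke \cref{obs:occ_candidates} with $\ell = \ell_j' - 1$ I need $\ell_j' \le n - p + 1$, i.e.\ $p + \ell_j' - 1 \le n$. Since $p \le \lfloor n/2\rfloor - \ell_{j+1} - 1$, this follows immediately from the second inequality of \cref{claim:inequalities_j_3}. Consequently $p$ is the starting position of a $k$-mismatch occurrence of $T[1\dd \ell_j'] = P_j'$ in $T$. To conclude that this occurrence lies in $T_j'$, I would check the two endpoints: the starting position satisfies $p \ge \lfloor n/2\rfloor - \ell_j \ge \max\{\lfloor n/2\rfloor - \ell_j,1\}$, and the ending position satisfies $p + \ell_j' - 1 \le (\lfloor n/2\rfloor - \ell_{j+1} - 1) + \ell_j' - 1 = \lfloor n/2\rfloor - \ell_{j+1} + \ell_j' - 2$, which matches the right endpoint of $T_j'$.

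The argument for $P_j''$ is the same and strictly easier, since $\ell_j'' = \ell_j' - q < \ell_j'$: the length constraint $p + \ell_j'' - 1 \le n$ is immediate from the previous step, and $p$ starts a $k$-mismatch occurrence of $P_j''$ in $T$ by \cref{obs:occ_candidates}. The occurrence ends at $p + \ell_j'' - 1 \le \lfloor n/2\rfloor - \ell_{j+1} + \ell_j'' - 2$, which is exactly the right endpoint of $T_j''$; the left endpoint is handled exactly as before.

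There is no real obstacle here — the only point that requires any input beyond \cref{obs:occ_candidates} is confirming $p + \ell_j' - 1 \le n$, which is precisely why \cref{claim:inequalities_j_3} was established first. The overall structure of the proof is therefore: (i) check the length bound via \cref{claim:inequalities_j_3}; (ii) apply \cref{obs:occ_candidates} to obtain an occurrence of $P_j'$ (resp.\ $P_j''$) in $T$ starting at $p$; (iii) verify both endpoints lie within $T_j'$ (resp.\ $T_j''$), using the range of $p$ in $\per_k^j$.
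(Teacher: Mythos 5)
Your proof is correct and takes the same route the paper intends — the paper itself does not spell out a proof, stating only that the corollary follows ``from \cref{obs:occ_candidates} and \cref{claim:inequalities_j_3},'' and your three-step argument (length bound, application of the observation, endpoint containment) is precisely the reasoning they are compressing. The only minor cosmetic point is that $T_j''$ is defined without the $\max\{\cdot,1\}$ in its left endpoint, but for $j \ge 3$ that makes no difference, so your remark that the left endpoint ``is handled exactly as before'' is harmless.
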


Consider now two subcases depending on the line where \cref{alg:prefix_extension} executes the return. 

\paragraph{Return is executed in Line~\ref{line:aperiodic_extension}.}\label{par:aperiodic_extension}

\begin{restatable}{proposition}{extensionocc}\label{claim:ham_aperiodic_extension_occurrences}
	The prefixes $P_j'$ and $P_j''$ have the following properties:
	\begin{enumerate}
		\item The number of $k$-mismatch occurrences of $P_j'$ in $T_j'$  is $O(k)$.
		\item The distance between any two $k$-mismatch occurrences of $P_j''$ in $T_j''$ is a multiple of $q$. 
	\end{enumerate}
\end{restatable}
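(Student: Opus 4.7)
The plan is to derive both statements by applying Fact~\ref{thm:nb_kocc} to $P_j'$ (resp.\ $P_j''$) in $T_j'$ (resp.\ $T_j''$), exploiting the termination condition of Algorithm~\ref{alg:prefix_extension} to control the ``approximately periodic'' case.

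For part~1, first check from the definitions of $\ell_j, \ell_{j+1}, \ell_j'$ that $|T_j'| \le (3/2)\,|P_j'|$, so Fact~\ref{thm:nb_kocc} applies with pattern $P_j'$ and text $T_j'$. If its first case holds, the number of $k$-mismatch occurrences is bounded by $576\cdot |T_j'|/|P_j'|\cdot k = O(k)$, as desired. To rule out the second case, I would argue by contradiction: suppose there is a primitive $Q'$ with $|Q'|\le |P_j'|/128k$ and $\hd{P_j'}{(Q')^*}\le 2k$. From the preprocessing we also have a primitive $Q$ with $|Q|\le \ell_j/128k$ and $\hd{P_j}{Q^*} < 2k$ (Lemma~\ref{lm:per_case_encoding}). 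Since $P_j$ is a prefix of $P_j'$ and $|P_j'|/|P_j| \le 2 + 1/128k < 5/2$, Lemma~\ref{lemma:Q_egality} forces $Q'=Q$, giving $\hd{P_j'}{Q^*}\le 2k$. But the return in Line~\ref{line:aperiodic_extension} fires precisely when $\hd{P_j'}{Q^*} > 2k$, which is a contradiction.

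For part~2, observe that $P_j'' = T[1\dd \ell_j' - q]$ is the prefix used in the iteration of Algorithm~\ref{alg:prefix_extension} just before Line~\ref{line:aperiodic_extension} fires; consequently the loop invariant gives $\hd{P_j''}{Q^*}\le 2k$ (in the corner case where the return happens in the very first iteration, $P_j'' = T[1\dd \lambda_j]$ is a prefix of $P_j$ and inherits the bound from Lemma~\ref{lm:per_case_encoding}). A similar length calculation yields $|T_j''|\le (3/2)\,|P_j''|$, and combining $|Q|\le \ell_j/128k$ with $|P_j''|\ge \lambda_j \ge \ell_j - 3q$ yields $|Q|\le |P_j''|/128k$. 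Applying Fact~\ref{thm:nb_kocc} to pattern $P_j''$ in text $T_j''$, the witness $Q$ places us in its second case, whose conclusion asserts that the difference between any two $k$-mismatch starting positions of $P_j''$ in $T_j''$ is a multiple of $|Q|=q$.

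The main obstacle will be carefully tracking the constants. The bound $|Q|\le \ell_j/128k$ from Lemma~\ref{lm:per_case_encoding} is almost but not quite what Fact~\ref{thm:nb_kocc} needs on $P_j''$, and Lemma~\ref{lemma:Q_egality} is stated with Hamming threshold $k$ while the approximate periods at hand are bounded only by $2k$. In both situations, a slight refinement of the constants through the chain of inequalities (or a direct triangle-inequality argument using primitivity of $Q$) will be required, rather than a pure black-box invocation of the stated lemmas.
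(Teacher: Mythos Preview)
Your argument for part~1 is exactly the paper's. For part~2 the paper diverges from your plan only in the corner case $\ell_j''\le \ell_j$: rather than trying to verify $|Q|\le |P_j''|/128k$ directly (the gap you correctly flag), the paper observes that every starting position of a $k$-mismatch occurrence of $P_j$ in $T_j$ is also one of the shorter prefix $P_j''$ in $T_j''$, so $P_j''$ has at least $K=576k$ occurrences there; then case~1 of Fact~\ref{thm:nb_kocc} fails, case~2 produces some primitive $Q''$ with $|Q''|\le \ell_j''/128k$, and Lemma~\ref{lemma:Q_egality} (applied to $P_j''$ and $P_j$) forces $Q''=Q$. This indirect route sidesteps the constant shortfall cleanly, whereas your direct route would need to re-open the proof of Fact~\ref{thm:nb_kocc}. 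For the complementary case $\ell_j''>\ell_j$ the paper argues exactly as you do via the loop invariant. Your observation that Lemma~\ref{lemma:Q_egality} is being invoked with Hamming threshold $2k$ rather than $k$ applies equally to the paper's own proof; the slack in that lemma's constants absorbs the factor of~$2$.
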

\begin{proof}
	To show the first part of the claim, note that $\ell_j \le \ell_j' \le \frac{5}{2} \ell_j$ and $|T_j'| \le \ell_j-\ell_{j+1}+\ell_j' \le \frac 3 2 \ell_j'$. Assume, for sake of contradiction, that there are more than $576 k$ occurrences of $P_j'$ in~$T_j'$. By \cref{thm:nb_kocc}, there is a primitive string $Q'$, $|Q'| \le \ell_j'/128k$, such that $\hd{P_j'}{(Q')^*} \le 2k$. By \cref{lemma:Q_egality}, $Q' = Q$, a contradiction.  
	
	We now show the second part of the claim. Note that $\frac{2}{3} \ell_j \le \ell_j-q \le \ell_j'' \le 2 \ell_j$ and hence $|T_j''| \le \ell_j-\ell_{j+1}+\ell_j'' \le \frac 3 2 \ell_j''$. Next, we have two cases: $\ell_j'' \le \ell_j$ and $\ell_j'' > \ell_j$. In the first case, every position $p$ which is a starting position of a $k$-mismatch occurrence of $P_j$ in $T_j$ is also a starting position of a $k$-mismatch occurrence of $P_j''$ in $T_j''$. Hence, there are at least $K = 576 k$ $k$-mismatch occurrences of $P_j''$ in $T_j''$, and by \cref{thm:nb_kocc}, there is a primitive string~$Q''$, $|Q''| \le \ell_j''/128k$, such that $\hd{P_j''}{(Q'')^*} \le 2k$. By \cref{lemma:Q_egality}, $Q'' = Q$. If $\ell_j'' > \ell_j$, then $\hd{P_j''}{Q^*} \le 2k$ by construction. Consequently, by applying \cref{thm:nb_kocc} one more time, we obtain that the difference between the starting positions of any two $k$-mismatch occurrences of $P_j''$ in $T_j''$ is $q$.    
\end{proof}

We apply the $k$-mismatch algorithm to detect $k$-mismatch occurrences of $P_j''$ and $P_j'$ in the text $T_j'$. In parallel, we maintain two hash tables of size at most $K = 576 k$ each, $\pref_j$ and~$\suf_j$, implemented via \cref{fact:hashing}. When we receive a character $T[p]$, the tables are updated as follows. Assume first that the $k$-mismatch algorithm detects a new occurrence of $P_j''$ ending at the position $p$. We retrieve $\skh_{3k}(T[1 \dd p-\ell_{j}''])$ and $\MI(P_j'', T[p-\ell_{j}''+1 \dd p] )$. From the mismatch information, $\score(P_j'')$ and $\score(T[ \dd p])$ we compute $\score(T[1 \dd p-\ell_{j}''])$, and we memorise $t = p-\ell_{j}''$ associated with the sketch and the weight for the next $q$ positions. Importantly, at every moment the algorithm stores at most one position-sketch pair by \cref{claim:ham_aperiodic_extension_occurrences}. By the definition of $P_j''$ and $T_j''$, the $k$-mismatch occurrences detected by the algorithm can only start before $\lfloor n/2 \rfloor$, and consequently $2(t+1) \le n$, which implies $t+1 < n-t$.

Now, assume that $t+1 < n-t \le p$. In this case, we immediately compute $\skh_{k}(T[1 \dd n-t])$ via the following claim and memorise it for the next $q$ positions:

\begin{restatable}{proposition}{sketchcomputation}
	Assume to be given $\skh_{k}(T[1 \dd t])$, $\skh_{k}(Q[r\dd])$, and $\skh_{k}(Q)$. There is an algorithm that uses $O(k \log n)$ space and $O(k \log^3 n)$ time and computes $\skh_{k}(T[1 \dd n-t])$. The algorithm succeeds w.h.p.
\end{restatable}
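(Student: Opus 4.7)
The plan is to reduce the computation to an approximate periodicity argument on the window $[t+1 \dd n-t]$. Starting from the decomposition
$$T[1 \dd n-t] = T[1 \dd t] \cdot T[t+1 \dd n-t],$$
the concatenability of $k$-mismatch sketches (\cref{thm:ham_sketch}.2) reduces the task to producing $\skh_k(T[t+1 \dd n-t])$. Since $t+1$ starts a $k$-mismatch occurrence of $P_j''$ in $T_j''$, and $\hd{P_j''}{Q^\infty[1 \dd \ell_j'']} \le 2k$ by the invariant of \cref{alg:prefix_extension}, the triangle inequality yields $\hd{T[t+1 \dd t+\ell_j'']}{Q^\infty[1 \dd \ell_j'']} \le 3k$. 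Under the assumption $n-t \le p = t+\ell_j''$, the target $T[t+1 \dd n-t]$ is therefore a prefix of this substring and lies within Hamming distance $3k = O(k)$ of $Q^\infty[1 \dd n-2t]$.

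The next step is to assemble $\skh_k(Q^\infty[1 \dd n-2t])$ from the input sketches. Because $\ell_j''$ is a multiple of $q := |Q|$, we have $(n-2t) \bmod q = (n-2p) \bmod q$; and since all $k$-mismatch occurrences of $P_j''$ in $T_j''$ start at positions sharing a fixed residue modulo $q$ (\cref{claim:ham_aperiodic_extension_occurrences}), this residue is exactly the $r$ fixed by \cref{lm:sketching_suffix_Q} at the first $P_j$-occurrence. Thus $Q^\infty[1 \dd n-2t]$ can be written as $Q^m$ followed by an appropriate prefix of $Q$ whose length is determined by $r$. I would apply the power property (\cref{thm:ham_sketch}.3) to $\skh_k(Q)$ to obtain $\skh_k(Q^m)$, derive the sketch of the required prefix of $Q$ from $\skh_k(Q)$ and $\skh_k(Q[r \dd])$ via \cref{thm:ham_sketch}.2, and concatenate, producing $\skh_k(Q^\infty[1 \dd n-2t])$ in $O(k \log n)$ total time.

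Finally, I would extract $\MI(T[t+1 \dd n-t], Q^\infty[1 \dd n-2t])$ from data already maintained by the outer algorithm: the preprocessing of \cref{sec:periodic} produces $\MI(P_j'', Q^\ast)$ once, and the $k$-mismatch algorithm reports $\MI(P_j'', T[t+1 \dd p])$ at position $p$. Composing these position-wise and truncating to the range $[1 \dd n-2t]$ yields the required mismatch information, of size at most $3k$. Since the Hamming distance involved is $O(k)$, a single application of \cref{thm:ham_sketch}.4 converts $\skh_k(Q^\infty[1 \dd n-2t])$ into $\skh_k(T[t+1 \dd n-t])$ in $O(k \log^2 n)$ time, and one last concatenation with the input $\skh_k(T[1 \dd t])$ via \cref{thm:ham_sketch}.2 delivers $\skh_k(T[1 \dd n-t])$. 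The overall time is $O(k \log^3 n)$ (dominated by the sketch conversion) and the working space is $O(k \log n)$, matching the proposition's bounds.

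The main obstacle will be the congruence alignment: the entire construction hinges on $r$, which is fixed once at the first $P_j$-occurrence, being the correct offset into $Q$ for every subsequent $P_j''$-occurrence processed by the proposition. This relies on $\ell_j''$ being a multiple of $q$, a property enforced by the $q$-step loop of \cref{alg:prefix_extension}, together with \cref{thm:nb_kocc} which forces all relevant occurrence starts to share the same residue modulo $q$. Once this alignment is verified, the rest of the proof is a routine composition of \cref{thm:ham_sketch}.2--4.
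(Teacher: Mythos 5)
Your proposal follows the paper's proof essentially step for step: decompose $T[1\dd n-t] = T[1\dd t]\cdot T[t+1\dd n-t]$, invoke \cref{cor:filter_extension} and \cref{claim:ham_aperiodic_extension_occurrences} for the congruence $(n-t)-t = iq+r$, assemble $\skh_k(Q^iQ[\dd r])$ from the given $Q$-sketches via items 2--3 of \cref{thm:ham_sketch}, convert it to $\skh_k(T[t+1\dd n-t])$ using $\MI(P_j'',Q^\ast)$ and $\MI(P_j'',T[p-\ell_j''+1\dd p])$ and item 4, then concatenate with $\skh_k(T[1\dd t])$. The only cosmetic difference is that you verify the $O(k)$ Hamming-distance bound by an explicit triangle inequality before invoking item 4, which the paper leaves implicit.
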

\begin{proof}
	By \cref{cor:filter_extension} and \cref{claim:ham_aperiodic_extension_occurrences}, $(n-t)-t = iq + r$ for some integer~$i$. Consequently, the sketch can be computed as follows. First, the algorithm  computes $\MI(P_j'',T[p-\ell_j''+1 \dd p])$ and $\MI(P_j'', Q^*)$. Secondly, it computes $\skh_{k}(Q^{i} Q[1 \dd r])$ and then deduces $\skh_{k}(T[t+1 \dd n-t])$ in $O(k \log n)$ space and $O(k \log^3 n)$ time via \cref{thm:ham_sketch}. Finally, it computes $\skh_{k}(T[1 \dd n-t])$ from $\skh_{k}(T[1 \dd t])$ and $\skh_{k}(T[t+1 \dd n-t])$.
\end{proof}

Also, if the current position $p$ equals $n-t$, where $t$ is the position in the stored position-sketch pair, we memorise $\skh_{k}(T[1 \dd n-t])$. Again, by \cref{claim:ham_aperiodic_extension_occurrences} the algorithm stores at most one sketch at a time. If the current position $p$ is the endpoint of a $k$-mismatch occurrence of $P_j'$ in $T_j'$, the position $p-q$ is necessarily the endpoint of a $k$-mismatch occurrence of~$P_j''$ in $T_j''$, and we store $t = p-q-\ell_j''$ associated with $\skh_k(T[1 \dd t])$ and $\score(T[1 \dd t])$. We add this triple to $\pref_j$. In addition, if $n-t \le p$, we have already computed $\skh_{k}(T[1 \dd n-t])$, and we add it to $\suf_j$. Finally, if $p$ is the current position and for $t = n-p$ we have $(t, \skh_k(T[1\dd t]), \score(T[1 \dd t])) \in \pref_j$, we add $p$ associated with $\skh_k(T[1 \dd p])$ to $\suf_j$. 

If any of the insertions takes more than constant time or if the size of any of $\pref_j$ and~$\suf_j$ becomes larger than $K$, the algorithm terminates and returns $\bot$.  

When the entire string $T$ has arrived, the algorithm considers each position $t \in \pref_j$, extracts $\skh_k(T[1\dd t]), \score(T[1 \dd t])$ from $\pref_j$ and $\skh_k(T[1\dd n-t])$ from $\suf_j$ and if $t$ passes the test of \cref{claim:candidates_test}, reports $t+1$ as a $k$-mismatch period of $T$, and also returns $\score(T[t+1 \dd]) = \score(T) - \score(T[1 \dd t])$ (undefined if one  of the values on the right is undefined), and $\MI(T[t+1 \dd], T[\dd n-t])$.

\paragraph{Return is executed in Line~\ref{line:periodic_extension}. }\label{par:periodic_extension}
\begin{restatable}{proposition}{perextension}\label{claim:ham_periodic_extension_occurences}
If return is executed in Line~\ref{line:periodic_extension}, we have $\hd{P_j'}{Q^*} < 2k$ and for all $t \in [\lfloor n/2 \rfloor - \ell_j \dd \lfloor n/2 \rfloor - \ell_{j+1}-1]$ there is
	$\lfloor n/2 \rfloor \le n-t \le \lfloor n/2 \rfloor -\ell_j + (\ell'_j + 2)$.
\end{restatable}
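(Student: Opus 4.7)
The plan is to extract the Hamming-distance bound directly from the termination behavior of \cref{alg:prefix_extension} at Line~\ref{line:periodic_extension}, and to verify the range of $n-t$ by substituting the two endpoints of the interval.

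First, I would observe that reaching Line~\ref{line:periodic_extension} requires the while condition $\ell_j' < 2\ell_j$ to have just become false, hence $\ell_j' \ge 2\ell_j$, and moreover that on the last iteration the early-exit check at Line~\ref{line:aperiodic_extension} did not trigger. So the value of $\hd{P_j'}{Q^*}$ (extracted from $\skh_{3k}(P_j')$ and $\skh_{3k}(Q)$ via \cref{thm:ham_sketch}) was at most $2k$ after the final extension. Combined with the strict initial bound $\hd{T[1\dd \lambda_j]}{Q^*} \le \hd{P_j}{Q^*} < 2k$ guaranteed by \cref{lm:per_case_encoding} and the fact that the algorithm extends $P_j'$ only by full blocks of $q = |Q|$ starting from $\lambda_j$ (so the mismatch count aligns with $Q^*$ blockwise), this yields $\hd{P_j'}{Q^*} < 2k$ as stated.

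Second, for any $t \in [\lfloor n/2 \rfloor - \ell_j \dd \lfloor n/2 \rfloor - \ell_{j+1}-1]$ the two inequalities on $n-t$ follow by direct substitution together with the identity $n - \lfloor n/2 \rfloor = \lceil n/2 \rceil$. The upper endpoint $t \le \lfloor n/2 \rfloor - \ell_{j+1} - 1$ gives $n-t \ge \lceil n/2 \rceil + \ell_{j+1} + 1 \ge \lfloor n/2 \rfloor$, and the lower endpoint $t \ge \lfloor n/2 \rfloor - \ell_j$ gives $n-t \le \lceil n/2 \rceil + \ell_j$. Using $\lceil n/2 \rceil - \lfloor n/2 \rfloor \le 1$ together with the loop-exit bound $\ell_j' \ge 2\ell_j$ from the first paragraph, this simplifies to $n-t \le \lfloor n/2 \rfloor - \ell_j + (2\ell_j + 1) \le \lfloor n/2 \rfloor - \ell_j + (\ell_j' + 2)$, as required.

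There is no substantive mathematical obstacle here; the proof is essentially bookkeeping around the exit condition of \cref{alg:prefix_extension}. The only mildly delicate point is carrying the strict inequality $\hd{P_j'}{Q^*} < 2k$ through to the end rather than the weaker $\le 2k$, which is precisely what the strict initial bound guaranteed by \cref{lm:per_case_encoding} is designed to enable.
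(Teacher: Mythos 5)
Your argument for the second part (the range of $n-t$) is correct and is essentially the same bookkeeping as the paper: substitute the endpoints of the interval, use $\lceil n/2\rceil \le \lfloor n/2\rfloor + 1$, and invoke the loop-exit bound $\ell_j' \ge 2\ell_j$.

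For the first part, however, the step where you sharpen $\hd{P_j'}{Q^*} \le 2k$ to $\hd{P_j'}{Q^*} < 2k$ does not hold. The test in Line~\ref{line:aperiodic_extension} fires only when $h > 2k$, so reaching Line~\ref{line:periodic_extension} gives $\hd{T[1 \dd \ell_j']}{Q^*} \le 2k$ for the final $\ell_j'$, not $< 2k$. Your appeal to "the strict initial bound" on $T[1 \dd \lambda_j]$ and "blockwise alignment" does not rescue strictness: $P_j'$ strictly extends $T[1 \dd \lambda_j]$ (indeed $\ell_j' \ge 2\ell_j > \lambda_j$), so the mismatch count is monotone non-decreasing as $\ell_j'$ grows, and nothing prevents it from reaching exactly $2k$ on the last extension without triggering the early exit. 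If $\hd{T[1 \dd \lambda_j]}{Q^*} = 2k-1$ and the last appended block contributes one mismatch, then $h = 2k$ at the final check, the algorithm returns at Line~\ref{line:periodic_extension}, and the strict inequality fails. The paper's own proof merely says "immediate by construction," and in fact the construction delivers only the non-strict bound; the strict $<$ in the proposition statement (and in the surrounding text and \cref{lm:per_case_encoding}) appears to be an imprecision inherited from \cref{thm:nb_kocc}, which guarantees $\hd{P}{Q^*} \le 2k$, not $< 2k$. The non-strict bound is what is used downstream (e.g.\ in \cref{claim:ham_aperiodic_extension_occurrences} and \cref{thm:nb_kocc}), so you should simply state $\hd{P_j'}{Q^*} \le 2k$ rather than manufacture an argument for strictness that does not go through.
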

\begin{proof}
	The first part of the claim is immediate by construction. To show the second part, recall that $\ell_j' \ge 2 \ell_j$. Hence, 
	$$\lfloor n/2 \rfloor \le n-t \le n-\lfloor n/2 \rfloor + \ell_j+1 \le \lfloor n/2 \rfloor -\ell_j + (2\ell_j + 2) \le \lfloor n/2 \rfloor -\ell_j + (\ell'_j + 2)$$\qedhere
\end{proof}

We run the $k$-mismatch algorithm (\cref{cor:k-mism_algo}) for $P_j'$ and $T_j'$. If a position $p$ is the endpoint of the first $k$-mismatch occurrence of $P_j'$ in $T_j'$, we retrieve $\skh_k(T[1\dd p-\ell_j'])$ and $\MI(T[p-\ell_j'+1 \dd p], P_j')$, and deduce $\score(T[1\dd p-\ell_j'])$ with the same method as in the previous case. 
Also, we memorise $p$, the sketch, the mismatch information and the weight. We would now like to process the last $k$-mismatch occurrence of $P_j'$ in $T_j'$ in a similar way. As it is not possible to say in advance whether the current $k$-mismatch occurrence is the last one, we instead do the following. Starting from the second endpoint $p'$ of a $k$-mismatch occurrence of $P_j'$ in $T_j'$, we retrieve $\MI(T[p'-\ell_j'+1 \dd p'], P_j')$ by \cref{cor:k-mism_algo}, and memorise $p'$ and the mismatch information until the next $k$-mismatch occurrence is detected, when they are discarded. Additionally, at the position $p'+1$ we launch a new instance of the algorithm of \cref{cor:streaming_sketch}, maintaining $\skh_k(T[p'+1 \dd])$. Again, if we detect another $k$-mismatch occurrence, we discard the currently stored sketch. In addition, when $T[x]$ arrives, for $x \in \{ \lfloor n/2 \rfloor -\ell_j + (\ell_j' + 1), \lfloor n/2 \rfloor -\ell_j + (\ell_j' + 2)\}$ we launch a new instance of the algorithm in \cref{cor:streaming_sketch}, maintaining $\skh_k(T[x+1 \dd])$. This way, when we reach the end of $T$, we have the following information at hand:
\begin{enumerate}
	\item For the endpoint $p$ of the first $k$-mismatch occurrence of $P_j'$ in $T_j$, $\skh_k(T[1\dd p-\ell_j'])$, weight $\score(T[1\dd p-\ell_j'])$, and the mismatch information $\MI(T[p-\ell_j'+1 \dd p], P_j')$;
	\item For the endpoint $p'$ of the last $k$-mismatch occurrence of $P_j'$ in $T_j$, the mismatch information $\MI(T[p'-\ell_j'+1 \dd p'], P_j')$ and $\skh_k(T[p'+1\dd])$;
	\item  $\skh_k(T[x+1 \dd])$ for $x \in \{ \lfloor n/2 \rfloor -\ell_j + (2\ell_j + 1), \lfloor n/2 \rfloor -\ell_j + (2\ell_j + 2) \}$.
\end{enumerate} 

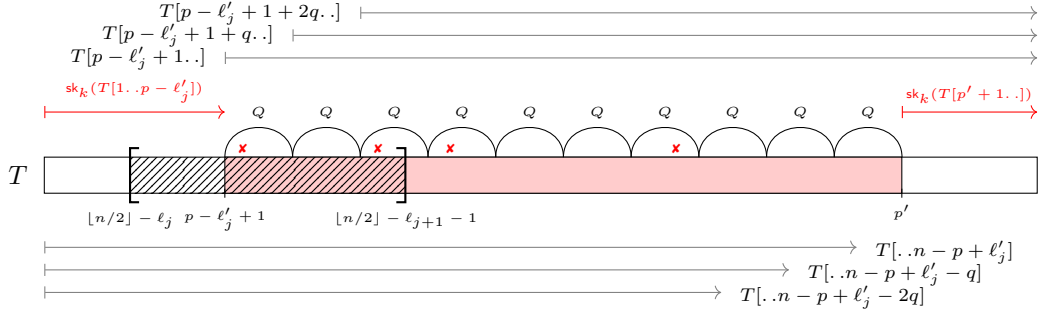
\begin{figure}[ht]
\begin{center}
	\begin{tikzpicture}[scale=1.2]
		\draw (1,0) rectangle (12,0.4);
		\draw[fill=red!20] (3, 0) rectangle (10.5, 0.4);
		
		\foreach \x in {4,...,13} {		
			\pgfmathsetmacro{\xa}{0.75*\x}
			\pgfmathsetmacro{\xb}{0.75*(\x+1)}
			\path (\xa, 0.4) coordinate (A) (\xb, 0.4) coordinate (B);
			\draw [bend left=90, looseness=1.50] (A) to node[above] {\tiny{$Q$}} (B);
		}

		\node[label = {left: \scriptsize{$T[p - \ell_j' + 1 \dd ]$}}]  at (3, 1.5) {};
		\node[label = {left: \scriptsize{$T[p - \ell_j' + 1+ q \dd ]$}}]  at (3.75, 1.75) {};
		\node[label = {left: \scriptsize{$T[p - \ell_j' + 1+ 2q \dd ]$}}]  at (4.5, 2) {};
		\foreach \x in {0,1,2}{
			\draw[gray,|->] (3+\x*0.75,1.5+\x*0.25) -- (12,1.5+\x*0.25);}
		
		\node[label = {right: \scriptsize{$T[\dd n-p+ \ell_j' ]$}}]  at (10, -0.65) {};
		\node[label = {right: \scriptsize{$T[\dd n - p+ \ell_j'  - q ]$}}]  at (9.25, -.9) {};
		\node[label = {right: \scriptsize{$T[\dd n - p+ \ell_j'  - 2q  ]$}}]  at (8.5, -1.15) {};
		\foreach \x in {2,3,4}{
			\draw[gray,|->] (1,-0.25*\x-0.1) -- (11.5 - 0.75*\x, -0.25*\x-0.1);}		
		
		\node[right]  at (0.5,0.2) {$T$};
		
		\node at (3.2,0.5) {\tiny{\textcolor{red}{\ding{56}}}};
		\node at (4.7,0.5) {\tiny{\textcolor{red}{\ding{56}}}};
		\node at (5.5,0.5) {\tiny{\textcolor{red}{\ding{56}}}};			\node at (8,0.5) {\tiny{\textcolor{red}{\ding{56}}}};

		\draw[thick] (2.05,-0.1) -- (1.95,-0.1) -- (1.95,0.5) -- (2.05,0.5);
		\draw (3,-0.05) -- (3,0.05);
		\draw[thick] (4.9,-0.1)--(5,-0.1) -- (5,0.5) -- (4.9,0.5);
		\draw (10.5,-0.05) -- (10.5,0.05);
		\node[label = {above: \tiny{$\lfloor n/2\rfloor - \ell_j$}}]  at (1.95,-0.6) {};
		\node[label = {above: \tiny{$p-\ell_j'+1$}}]  at (3,-0.62) {};
		\node[label = {above: \tiny{$\lfloor n/2\rfloor - \ell_{j+1}-1$}}]  at (5,-0.6) {};
		\node[label = {above: \tiny{$p'$}}]  at (10.5,-0.5) {};
		
		\fill[pattern=north east lines] (1.95,0) rectangle (5,0.4);
		\draw[red, |->] (1,0.9) -- (3,0.9)
    node[pos=0.5,above] {\tiny{$\skh_k(T[1\dd p-\ell_j'])$}};		
		
		\draw[red, |->] (10.5,0.9) -- (12,0.9)
    node[pos=0.5,above] {\tiny{$\skh_k(T[p'+1\dd])$}};
\end{tikzpicture}
\end{center}

	\caption{If \cref{alg:prefix_extension} executes return in Line~\ref{line:periodic_extension}, $\per_j^k \subseteq \{p-\ell_j'+1, p+q-\ell_j'+1, \dots, p'-\ell_j'+1\}$. To test each position in the latter set, we exploit the structure of $T[p-\ell_j'+1 \dd p'-\ell_j'+1]$ (shown in red, with crosses marking mismatches between it and $Q^\infty$).}
	\label{fig:ham_periodic_case}
\end{figure}

By \cref{claim:ham_periodic_extension_occurences} and \cref{thm:nb_kocc}, we have $\per_j^k \subseteq \{p-\ell_j'+1, p+q-\ell_j'+1, \dots, p'-\ell_j'+1\}$. We test each position $t$ in the latter set as follows (see \cref{fig:ham_periodic_case}):

\begin{enumerate}
	\item \textbf{Compute $\skh_k(T[t \dd ])$.} First, retrieve $\skh_k(T[1 \dd t-1])$ from sketches $\skh_k(T[1\dd p-\ell_j'])$ and $\skh_k(Q^{(t-p)/q})$, and the mismatch information $\MI(T[p-\ell_j'+1 \dd p], P_j')$, $\MI(T[p'-\ell_j'+1 \dd p'], P_j')$, and $\MI(P_j',Q^\ast)$ in $O(k \log^2 n)$ time and $O(k \log n)$ space via \cref{thm:ham_sketch}. Second, compute $\skh_k(T[t \dd ])$ from $\skh_k(T)$ and $\skh_k(T[1 \dd t-1])$ in $O(k \log^2 n)$ time and $O(k \log n)$ space via another application of \cref{thm:ham_sketch}. 
	\item \textbf{Compute $\score(T[t \dd ])$.} First, retrieve $\score(T[1 \dd t-1])$ from $\score(T[1\dd p-\ell_j'])$ and $\score(Q^{(t-p)/q}) = [(t-p)/q] \cdot \score(Q)$, and the mismatch information $\MI(T[p-\ell_j'+1 \dd p], P_j')$, $\MI(T[p'-\ell_j'+1 \dd p'], P_j')$, and $\MI(P_j',Q^\ast)$ in  $O(k t_\score(n))$ time and $s_\score(n)$ space. Second, compute $\score(T[t \dd ]) = \score(T) - \score(T[1 \dd t-1])$ (undefined if one of the values on the right is undefined).
	\item \textbf{Compute $\skh_k(T[1 \dd n-t+1])$.} If $n-t+1 \in \{\lfloor n/2 \rfloor -\ell_j + (\ell_j + 1), \lfloor n/2 \rfloor -\ell_j + (\ell_j + 2)\}$, we already know the sketch. Otherwise, by \cref{claim:ham_periodic_extension_occurences} $\lfloor n/2 \rfloor \le n-t+1 \le \lfloor n/2 \rfloor + \ell_j'$. Additionally, $(n-t+1) - (p-\ell_j'+1)+ 1 = q\cdot i + r$ for an integer $r$ defined as in \cref{lm:sketching_suffix_Q}. Hence, $\skh_k(T[1 \dd n-t+1])$ can be computed via \cref{thm:ham_sketch}: start by computing $\skh_k(T[p-\ell_j'+1 \dd n-t+1])$ from $\skh_k(Q)$, $\skh_k(Q[1\dd r])$, and the mismatch information for $p$ and $p'$, and then use $\skh_k(T[1\dd p-\ell_j'])$ to compute $\skh_k(T[1 \dd n-t])$. 
	\item \textbf{Compute $\hdk{T[t\dd n]}{T[1\dd n-t+1]}{k}$} using the computed sketches via \cref{thm:ham_sketch}. If it is at most $k$, output $t$ as a $k$-mismatch period, and return $\score(T[t \dd])$ and $\MI(T[t \dd], T[\dd n-t+1])$.
\end{enumerate}

\begin{restatable}{proposition}{analysisthree}\label{lm:analysis_j3}
	Assume that $j \ge 3$ and that $P_j$ has more than $K = 576k$ $k$-mismatch occurrences in $T_j$. The algorithm computes $\per_k^j$, $\MI(T[p \dd], T[\dd n-p+1])$ and $\score(T[p \dd])$ for $ p \in \per_k^j$ in  $O(n \cdot k t_\score(n) \log^4 n)$ time and $O(k^2 \log^2 n + s_\score(n))$ space and is correct w.h.p.
\end{restatable}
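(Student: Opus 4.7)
The plan is to split the analysis along the two possible return sites of \cref{alg:prefix_extension} and to verify in each branch that (i) the candidate set maintained is a superset of $\per_k^j$, (ii) each candidate is tested correctly via \cref{claim:candidates_test}, and (iii) for every reported period the weight and mismatch information are recovered via \cref{thm:ham_sketch} and \cref{prop:weight}. The overall template in both branches is to maintain during the stream a small number of $k$-mismatch sketches and weights, and at the end to reconstruct from them $\skh_k(T[t\dd])$ and $\skh_k(T[1\dd n-t+1])$ for every candidate $t$, then invoke \cref{claim:candidates_test}.

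For the branch where \cref{alg:prefix_extension} returns in Line~\ref{line:aperiodic_extension}, \cref{claim:ham_aperiodic_extension_occurrences} yields that $P_j'$ has $O(k)$ $k$-mismatch occurrences in $T_j'$ and that $k$-mismatch occurrences of $P_j''$ in $T_j''$ lie on an arithmetic progression with common difference $q$; consequently the algorithm keeps at most one pending triple $(t,\skh_k(T[1\dd t]), \score(T[1\dd t]))$ at a time and the hash tables $\pref_j,\suf_j$ stay within capacity $K=576k$. Correctness of this branch then follows from \cref{cor:filter_extension} (all periods in $\per_k^j$ land in $\pref_j$) and \cref{claim:candidates_test} (each stored candidate is classified correctly, with mismatch information and weight extracted on the side). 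For the branch where return is executed in Line~\ref{line:periodic_extension}, \cref{claim:ham_periodic_extension_occurences} together with \cref{thm:nb_kocc} shows that $\per_k^j$ is contained in the arithmetic progression $\{p-\ell_j'+1, p+q-\ell_j'+1, \ldots, p'-\ell_j'+1\}$, and the four-step recipe preceding the proposition reconstructs all needed sketches for each candidate from the quantities maintained during the stream, namely the two sketches at $p-\ell_j'$ and $p'+1$, the two boundary sketches at $x \in \{\lfloor n/2\rfloor-\ell_j+\ell_j'+1, \lfloor n/2\rfloor-\ell_j+\ell_j'+2\}$, the sketches $\skh_{3k}(Q)$ and $\skh_k(Q[\dd r])$, and the three stored mismatch informations $\MI(T[p-\ell_j'+1\dd p], P_j')$, $\MI(T[p'-\ell_j'+1\dd p'], P_j')$, and $\MI(P_j', Q^\ast)$. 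The composition rules of \cref{thm:ham_sketch} glue these into $\skh_k(T[t\dd])$ and $\skh_k(T[1\dd n-t+1])$, after which \cref{claim:candidates_test} decides $t$ and the analogue of \cref{prop:weight} recovers $\score(T[t\dd])$.

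For the complexity, the streaming portion runs $O(1)$ instances of \cref{cor:k-mism_algo} and \cref{cor:streaming_sketch} and maintains $\score$ via \cref{prop:weight}, costing $O(k\log^4 n + t_\score(n))$ time per character and $O(k\log^2 n + s_\score(n))$ space; the hash tables of the first branch contribute an additional $O(k^2\log^2 n)$ space by \cref{fact:hashing} and \cref{thm:ham_sketch}. In the post-processing, we test $O(k)$ candidates in the first branch and at most $O(\ell_j'/q) = O(n)$ candidates in the second, each in $O(k\log^3 n + k\, t_\score(n))$ time, keeping the overall bound at $O(n\cdot k\, t_\score(n)\log^4 n)$ time and $O(k^2\log^2 n + s_\score(n))$ space. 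A union bound over the polynomially many invocations of the randomized primitives yields the w.h.p. guarantee. The main obstacle I expect is the second branch: carefully tracking the divisibility relation $(n-t+1) - (p-\ell_j'+1)+1 \equiv r \pmod q$ so that $\skh_k(T[1\dd n-t+1])$ can indeed be assembled from the $O(1)$ sketches kept during the stream, which is precisely what makes the auxiliary boundary sketches from \cref{claim:ham_periodic_extension_occurences} indispensable.
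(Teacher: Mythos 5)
Your proposal is correct and follows essentially the same approach as the paper's proof: both split the analysis by the return line of \cref{alg:prefix_extension}, in the aperiodic branch bound the hash-table occupancy via \cref{claim:ham_aperiodic_extension_occurrences} and \cref{fact:hashing} while checking candidates with \cref{claim:candidates_test}, in the periodic branch confine $\per_k^j$ to an arithmetic progression via \cref{claim:ham_periodic_extension_occurences} and \cref{thm:nb_kocc} and reconstruct the needed sketches from the $O(1)$ stored quantities, and finish with a term-by-term complexity count and a union bound over the randomized primitives.
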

\begin{proof}
The preprocessing of $P_j$ takes $O(n \cdot (k \log^4 n + t_\score(n)))$ time and $O(k \log^2 n + s_\score(n))$ space. The main phase of the algorithm starts with an extension procedure (\cref{alg:prefix_extension}), which takes $O(n (k \log^2 n + t_\score(n)))$ total time and $O(k \log n + s_\score(n))$ space. If the return is executed in Line~\ref{line:aperiodic_extension} of \cref{alg:prefix_extension}, the algorithm then runs two instances of the $k$-mismatch algorithm (\cref{cor:k-mism_algo}) that take $O(n k \log^4 n)$ time and $O(k \log^2 n)$ space. Adding elements to $\pref_j$ and $\suf_j$, as well as look-ups, takes $O(1)$ time per element, and we add at most $K = O(k)$ elements in total. The two hash tables occupy $O(k^2 \log^2 n)$ space (\cref{fact:hashing}, \cref{thm:ham_sketch}). Finally, testing all candidate positions requires $O(K \cdot k \log^3 n)$ time and $O(k \log n)$ space (\cref{claim:candidates_test}). If the return is executed in Line~\ref{line:periodic_extension} of \cref{alg:prefix_extension}, the algorithm runs an instance of the $k$-mismatch algorithm, which takes $O(n k \log^4 n)$ time and $O(k \log^2 n)$ space, and maintains a constant number of sketches, taking $O(n \log^2n)$ time and $O(k \log n)$ space. The process to test the candidate $k$-periods uses $O(k \log n + s_\score(n))$ space and $O(k (\log^3 n + t_\score(n)))$ time, and it is iterated $\le n$ times. 
	
The algorithm can fail if the preprocessing fails, if the $k$-mismatch algorithm errs, or if adding an element to the hash tables takes more than constant time, or if the test fails. By the union bound, \cref{cor:k-mism_algo}, \cref{thm:ham_sketch}, and \cref{claim:candidates_test}, the failure probability is inverse-polynomial in $n$.
\end{proof}

\cref{th:ham_main} follows from \cref{lm:ham_nonperiodic}, \cref{lm:analysis_j3}, and \cref{lm:analysis_j12}.

\bibliographystyle{plainurl}
\bibliography{bibliography}		

\appendix

\RestyleAlgo{ruled}
\SetKwInOut{Parameter}{Parameters}
\SetKwComment{Comment}{/* }{ */}
\newcommand{\OO}{{\widetilde{O}}}
\newcommand{\cc}{{\mathtt{c}}}
\newcommand{\rr}{{\mathtt{r}}}
\newcommand{\LL}{{\mathtt{b}}}
\newcommand{\Process}{{\mathrm{Process}}}
\newcommand{\Split}{{\mathrm{Split}}}
\newcommand{\Compress}{{\mathrm{Compress}}}
\newcommand{\OutputBinary}{{\mathrm{OutputInBinary}}}
\newcommand{\CheckSum}{{\mathrm{CheckSum}}}
\newcommand{\Grammar}{{\mathrm{Grammar}}}
\newcommand{\FKR}{{F_\mathrm{KR}}}
\newcommand{\Enc}{{\mathrm{Enc}}}
\newcommand{\Bin}{{\mathrm{Bin}}}
\newcommand{\SKED}{\mathrm{sk}^{\mathrm{ED}}}
\newcommand{\SKHAM}{\mathrm{sk}^{\mathrm{Ham}}}
\newcommand{\SKR}{\mathrm{sk}^{\mathrm{Rolling}}}
\newcommand{\Append}{{\mathrm{Append}}}
\newcommand{\Remove}{{\mathrm{Remove}}}
\newcommand{\Compare}{{\mathrm{Compare}}}
\newcommand{\UpdateAG}{{\mathrm{UpdateActiveGrammars}}}
\newcommand{\PartialDecompress}{{\mathrm{PartiallyDecompress}}}
\newcommand{\Recompress}{{\mathrm{Recompress}}}
\newcommand{\RecompressFirstBlock}{{\mathrm{RecompressFirstBlock}}}
\newcommand{\FindEqualCompressedPrefix}{{\mathrm{FindEqualCompressedPrefix}}}
\newcommand{\FindCompressedPrefix}{{\mathrm{FindCompressedPrefix}}}
\newcommand{\SplittingDepth}{{\mathrm{SplittingDepth}}}
\newcommand{\DecompressSymbol}{{\mathrm{DecompressSymbol}}}
\newcommand{\DecompressString}{{\mathrm{DecompressString}}}
\newcommand{\DecompressSymbolLength}{{\mathrm{DecompressSymbolLength}}}
\newcommand{\CompressWithGrammar}{{\mathrm{CompressWithGrammar}}}
\newcommand{\CrossOverBlock}{{\mathrm{CrossOverBlock}}}
\newcommand{\acc}[1]{{\em \hfil // #1 }} 
\newcommand{\poly}{{\mathrm{poly}}}
\newcommand{\Dict}{{\mathrm{Dict}}}

\section{Proof of Corollary~\ref{cor:gram_both_ends}}
\label{sec:appendix}
Bhattacharya and \koucky~\cite{DBLP:conf/stoc/Bhattacharya023} showed the following result:

\begin{theorem}[{\cite[Theorem 3.12]{DBLP:conf/stoc/Bhattacharya023}}]\label{t-decomposition-rolling-prefix}
	Let $k\le n$ be integers. Assume $U,V,X,Y\in \Sigma^*$, $|UX|,|VY|\le n$, and $\ed(X,Y) \le k$. Let $\GG(UX) = G_1\cdots G_s$ and $\GG(VY) = G'_1 \cdots G'_{s'}$. With probability\footnote{The probability of success in~\cite{DBLP:conf/stoc/Bhattacharya023} is $4/5$, but can be increased by increasing the parameter $D$ (see the description of the grammar decomposition below)} at least $1-1/10$, there exist integers $r,r',t,t'$ such that $s-t = s'-t'$, and each of the following is satisfied:
	\begin{enumerate}
		\item $X = \eval(G_{t})[ r \dd] \cdot \eval(G_{t+1} \cdots G_s)$ and $Y=\eval(G'_{t'})[r' \dd] \cdot \eval(G'_{t'+1} \cdots G'_{s'})$.
		\item $G_{t+i} = G'_{t'+i}$ except for at most $k+1$ indices $0 \leq i \leq s-t$.
		\item The edit distance between $X,Y$ equals to the sum of $\ed(\eval(G_{t})[ r \dd], \eval(G'_{t'})[r' \dd])$ and  $\sum_{1\leq i\leq s-t} \ed(\eval(G_{t+i}),\eval(G'_{t'+i}))$.
	\end{enumerate}
\end{theorem}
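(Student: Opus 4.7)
The plan is to combine an alignment argument based on an optimal edit script from $X$ to $Y$ with a level-by-level locality analysis of the BK-decomposition. First, I would fix an optimal edit script $\pi$ transforming $X$ into $Y$ with cost $\ed(X,Y) \le k$. This induces a partial alignment between positions in $UX$ and $VY$: positions of $U$ and $V$ are unaligned, and positions of $X$ align with positions of $Y$ via $\pi$ outside of a ``disrupted set'' $\Pi$ of cardinality $\le k$ that records the substituted, inserted, and deleted positions.

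Second, I would open up the BK-decomposition and exploit its layered structure. The algorithm processes its input in $L = O(\log n)$ levels; at level $0$ the non-terminals are characters, and at each subsequent level $\ell$, consecutive level-$(\ell-1)$ non-terminals are grouped into a new non-terminal using the local hash functions $C_\ell, H_\ell$. The decomposition is designed to be \emph{locally consistent}: whether a boundary is placed between two neighboring level-$(\ell-1)$ symbols depends only on a window of $\Ot(1)$ symbols around the cut. The core claim, which I would prove by induction on $\ell$, is that for any pair of positions $i$ in $UX$ and $j$ in $VY$ whose surrounding level-$\ell$ windows are edit-disjoint from $\Pi$ (and lie strictly inside $X$ and $Y$), the chain of non-terminals at level $\ell$ covering $i$ in $\GG(UX)$ coincides with the chain covering $j$ in $\GG(VY)$. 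This is where the randomness enters: hash collisions in $C_\ell, H_\ell$ can falsely split or merge blocks, but a union bound over $O(\log n)$ levels and $\le n$ candidate cuts yields failure probability $\le 1/10$, as the underlying hashes have collision probability $n^{-\Omega(1)}$.

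Third, to conclude items 1--3 of the statement, I would use locality to define $t, t'$ as the smallest indices for which the top-level blocks $G_t$ and $G'_{t'}$ straddle the $U$-to-$X$ (respectively $V$-to-$Y$) boundary, and define $r, r'$ as the offsets inside these straddling blocks at which $X$ and $Y$ begin. Item~1 ($s-t=s'-t'$) then follows by counting: outside a neighborhood of $\Pi$, the blocks from $G_{t+1}$ onward are forced to line up one-to-one with $G'_{t'+1}, \ldots$ by the induction. Item~2 follows directly from this alignment together with the locality invariant, which implies that each of the $\le k$ disrupted positions of $\Pi$ can ``contaminate'' at most one top-level block through the recursion, for a total of at most $k+1$ differing indices (the extra $+1$ absorbing the boundary block that may span an edit and a clean region simultaneously). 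Item~3, the additivity of $\ed$, will follow once we argue that an optimal edit script can be routed entirely through the differing blocks: since the matching blocks have identical expansions, any character-level edit must lie inside one of the $\le k+1$ mismatched regions, and we recover the sum by concatenating optimal edit scripts within each region and patching the straddling boundary block.

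The principal obstacle will be tightening the contamination bound at the top level to exactly $k+1$ rather than a larger $\Ot(k)$. This requires a delicate argument that the ``radius of influence'' of a single edit does not accumulate multiplicatively across levels, so that one edit touches at most one top-level block. Concretely, one must show that the locality window at each level, although it grows in absolute length, shrinks geometrically when measured in units of level-$\ell$ non-terminals, so that an edit that ``survives'' to the top level corresponds to a single top-level mismatch. A secondary technical point is to choose the cut between $U$ and $X$ (resp. $V$ and $Y$) inside a possibly cross-boundary block in a way that is consistent on both sides; here I would appeal to the symmetry of the decomposition and align $r, r'$ with the first level-$\ell$ boundary past $|U|, |V|$ that is certified clean by the invariant above.
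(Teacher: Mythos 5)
The statement you are proving is not actually proved in this paper: it is quoted from Bhattacharya and Kouck\'y as their Theorem~3.12, and the paper's appendix instead gives a from-scratch proof of the mirror-image variant \cref{t-decomposition-rolling-suffix} using the same machinery. Your overall skeleton --- fix an optimal edit script, partition $X$ and $Y$ into matching segments interleaved with $\le k$ mismatching spots, and prove level-by-level that this partition survives both $\Compress$ and $\Split$ --- is indeed the structure of that proof. However, two of the steps you gesture at would not go through as written.

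The more serious gap is your probability bound. You claim that ``the underlying hashes have collision probability $n^{-\Omega(1)}$'' and that a union bound over $O(\log n)$ levels and $\le n$ candidate cuts gives failure probability $\le 1/10$. Neither premise is true. The splitting hash $H_\ell$ in the BK-decomposition is deliberately tuned so that $\Pr[H_\ell(a,b)=0] \approx 1/D$ with $D = 22(4R+28)(k+1)(L+1)$, which is only polylogarithmic in $n$ (the algorithm \emph{needs} splits to occur frequently). A union bound over all $n$ cut positions per level therefore yields $nL/D \gg 1$, not $\le 1/10$. The bound works only because the set of \emph{undesirable} split positions --- those inside a mismatching block or at the very boundary of a matching block --- has size $O((R+1)(k+1))$ per level, since each mismatching block has size $\le 4R+24$ and there are at most $k+1$ of them. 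This restricted union bound, which is what the paper applies, gives $O((R+1)(k+1)/D) = O(1/(L+1))$ per level; without identifying and counting only the dangerous positions, the constant failure probability simply does not follow.

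The second issue is the $k+1$ bound on the number of differing top-level grammars, which you flag as ``the principal obstacle'' and propose to handle via a ``radius of influence that shrinks geometrically when measured in units of level-$\ell$ non-terminals.'' That is not the mechanism, and geometric shrinkage would not give the tight $k+1$. What actually holds is that each mismatching block stays bounded by $4R+24$ symbols at \emph{every} level --- it neither grows nor shrinks --- while the crucial invariant, supplied by BK's Lemma~3.10 (the paper's \cref{l-compress-rolling} and its new one-sided variant \cref{l-compress-rolling-suffix}), is that $\Compress$ preserves the \emph{count} of mismatching segments, and $\Split$, conditioned on avoiding undesirable splits, never introduces new ones. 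The count therefore stays $\le h+1 \le k+1$ all the way to the top level, where each remaining segment lies inside a single grammar. This structure-preservation lemma is the missing technical ingredient in your sketch and is what replaces the argument you were unsure how to finish.
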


Using similar arguments, we can prove an analogous result for strings that have prefixes that are a small edit distance apart.

\begin{theorem}\label{t-decomposition-rolling-suffix}
	Let $k\le n$ be integers. Assume $U,V,X,Y\in \Sigma^*$, $|XU|,|YV|\le n$, and $\ed(X,Y) \le k$. Let $\GG(XU) = G_1\cdots G_s$ and $\GG(YV) = G'_1 \cdots G'_{s'}$. With probability at least $1-1/10$, there exist integers $r,r',t$ such that:
	\begin{enumerate}
		\item $X = \eval(G_{1} \cdots G_{t-1}) \cdot  \eval(G_{t})[\dd r ]$ and $Y=\eval(G'_{1}  \cdots G'_{t-1})\cdot  \eval(G'_{t})[\dd r' ]$.
		\item $G_{i} = G'_{i}$ except for at most $k+1$ indices $1 \leq i \leq t$.
		\item The edit distance between $X,Y$ equals the sum of  $\sum_{1\leq i\leq t-1} \ed(\eval(G_{i}),\eval(G'_{i}))$ and $\ed(\eval(G_{t})[  \dd r], \eval(G'_{t})[ \dd r'])$.
	\end{enumerate}
\end{theorem}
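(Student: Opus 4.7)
The plan is to mirror the proof of \cref{t-decomposition-rolling-prefix} from~\cite{DBLP:conf/stoc/Bhattacharya023}, exploiting the near-symmetry of the BK-decomposition construction. Recall that the decomposition proceeds in $L=O(\log n)$ levels: at each level, a locally-determined predicate marks boundary positions in an intermediate string, and the factors between consecutive boundaries are replaced by fresh nonterminals. The key structural property is that whether a position $i$ at level $\ell$ is a boundary depends only on a window of $\Ot(1)$ characters around $i$ at that level, together with randomness drawn from the hash families $C_\ell$ and $H_\ell$. This locality is what drives the prefix case of~\cite{DBLP:conf/stoc/Bhattacharya023}, and it is also what will drive our suffix analogue.

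Given $XU,YV$ with $\ed(X,Y)\le k$, I would fix an optimal alignment of $X$ and $Y$ and, as in the original proof, track level by level the positions where the two intermediate strings ``contextually agree,'' meaning that their level-$\ell$ neighbourhoods (of radius $\Ot(1)$) coincide. At any such position the same random predicate fires on the same input and therefore produces identical boundary decisions and identical compressed symbols. Since $X$ is a prefix of $XU$ and $Y$ is a prefix of $YV$, every position strictly inside $X$ whose alignment partner in $Y$ is far both from an edit operation and from the cut $|X|$ vs.\ $|Y|$ has its entire context lying within the agreeing prefixes $X$ and $Y$. Hence such positions propagate the same block boundaries at every level, yielding $G_i=G'_i$ with the same index $i$ on both sides (this is why a single~$t$, rather than two indices~$t,t'$, suffices in the statement, in contrast with~\cref{t-decomposition-rolling-prefix}). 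Each of the $k$ edit operations in the optimal alignment spoils $O(1)$ indices, and the transition from $X,Y$ into the differing tails $U,V$ spoils one more boundary region; with the constants tuned as in~\cite{DBLP:conf/stoc/Bhattacharya023}, summing yields the bound of at most $k+1$ exceptional indices. Part~3 follows by the same additivity argument as in \cite[Theorem~3.12]{DBLP:conf/stoc/Bhattacharya023}: the matching blocks cut the optimal alignment into independent sub-alignments whose costs sum to $\ed(X,Y)$, with the partial blocks $\eval(G_t)[\dd r]$ and $\eval(G'_t)[\dd r']$ absorbing the alignment tail where $X,Y$ meet $U,V$.

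The main obstacle is the word ``almost'' in ``almost symmetric.'' The BK boundary predicate is defined via forward-looking rolling-hash anchors that are not literally invariant under reversal, so one cannot simply feed the reverses of $XU,YV$ into \cref{t-decomposition-rolling-prefix} and reverse back. Instead, I need to revisit each level of the construction and verify that the asymmetry of the predicate only affects an $O(1)$-radius neighbourhood around each anchor, and that these neighbourhoods do not compound across the $L=O(\log n)$ levels beyond the $\polylog n$-sized windows the proof already allows. Concretely, the only portion of the argument that is not a mechanical copy of \cite{DBLP:conf/stoc/Bhattacharya023} is the tail analysis around position $|X|$ vs.\ $|Y|$: one must show that the index $t$ at which the ``good'' agreement ends is well-defined and identical in both decompositions with the stated probability. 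This requires a small adaptation of the boundary bookkeeping together with a fresh union bound over the $O(\log n)$ levels; the overall success probability $1-1/10$ follows from the same argument as in \cref{t-decomposition-rolling-prefix}, tuning the constant $D$ in the hash families as noted in the footnote to that theorem.
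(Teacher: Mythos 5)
Your plan correctly identifies the overall strategy (mirror the prefix case by exploiting locality of the boundary predicate, handle the $X$/$U$ junction separately, union-bound over the $O(\log n)$ levels) and correctly observes that a single index $t$ suffices because both $X$ and $Y$ are prefixes of their respective strings. But the proposal stops exactly where the proof actually begins. You write that the only non-mechanical part is the tail analysis around position $|X|$ versus $|Y|$, and that it ``requires a small adaptation of the boundary bookkeeping'' — and then you do not carry out that adaptation. That adaptation is the entire technical content of the theorem.

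Concretely, the paper's argument rests on a new lemma (the suffix analogue of \cref{l-compress-rolling}, namely \cref{l-compress-rolling-suffix}) stating that if $X'Z' = \Compress(XZ,\ell)$ and $Y'T' = \Compress(YT,\ell)$ with $X'$, $Y'$ the compressions of $X$, $Y$, then a block partition of $X$ and $Y$ into matching blocks $W_i$ and short mismatching blocks $U_i,V_i$ (with mismatching blocks at the \emph{right} end, not the left) survives one level of compression with the same block sizes. Proving this requires re-doing the $\FL$-coloring bookkeeping from \cite[Lemma 3.10]{DBLP:conf/stoc/Bhattacharya023}, reassigning maximal runs of repeated characters so they sit inside a single block, and showing that the up-to-$R$ boundary positions that could be colored differently in $X$ versus $Y$ are absorbed into the $U_i,V_i$ blocks while keeping their length at most $4R+24$. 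Your sketch never engages with any of this: it asserts that ``the asymmetry of the predicate only affects an $O(1)$-radius neighbourhood around each anchor'' and leaves it at a promissory note. Likewise, the induction itself needs the formal block-partition framework $A^Z(\ell,i)$, $B^Z(\ell,i)$, the notion of descendants, and a precise definition of ``undesirable split'' in order for the probability calculation $2(4R+28)(k+1)/D \le 1/11(L+1)$ to make sense; your proposal gestures at all of these but instantiates none. In short: right approach, no proof — the step you call ``small'' is the missing lemma that the paper spends a page proving, and without it the induction has nothing to stand on.
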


Let us defer the proof and show that \cref{t-decomposition-rolling-prefix} and \cref{t-decomposition-rolling-suffix} imply \cref{cor:gram_both_ends}. Let $\GG(XU) = G_1 \cdots G_s$, $\GG(VY) = G'_1 \cdots G'_{s'}$ and $\GG(X) = G''_1 \cdots G''_{s''}$. First, we apply \cref{t-decomposition-rolling-prefix} to $X$ and $VY$ and obtain that with probability $9/10$, the following properties are satisfied:
\begin{itemize}
	\item $G'_{s'-s''+i} = G''_i$ except for at most $k+1$ indices $1\leq i \leq s''$;
	\item There exists $r'$ such that $ \eval(G'_{s'-s''+1})[r'\dd]\eval(G'_{s'-s''+2} \dots G'_{s'}) = Y$;
	\item $\ed(X,Y) = \ed(\eval(G''_1),\eval(G'_{s'-s''+1})[r'\dd]) + \sum_{i=2}^{s''} \ed(\eval(G''_i),\eval(G'_{s'-s''+i}))$.
\end{itemize}
Then, by \cref{t-decomposition-rolling-suffix} applied to $XU$ and $X$ we have that with probability $9/10$, $G_i = G''_i$ for $1 \le i \leq s''-1$ and there exists $r$ such that $\eval(G_1 \dots G_{s''-1})\cdot \eval(G_{s''})[\dd r] = X$, which in particular implies that $\eval(G_{s''})[\dd r] = \eval(G''_{s''})$. This proves \cref{cor:gram_both_ends}. 

In the rest of the section we prove \cref{t-decomposition-rolling-suffix}. We start by recalling the BK-decomposition algorithm and showing its properties.  

\subsection{BK-decomposition}
The BK-decomposition is parametrised by two integers $k,n$, where $k \le n$. It receives as an input a string $T \in \Sigma^{\le n}$, where the size of the alphabet $\Sigma$ is polynomial in $n$. BK-decomposition is a recursive algorithm of depth at most $L := \lceil \log_{3/2} n \rceil  + 3 $, which performs a compression of $T$ in stages. Each stage of the algorithm compresses $T$ into a shorter string on a bigger (but of size still polynomial in $n$)  alphabet $\Gamma \supset \Sigma $. The algorithm starts by fixing $ D = 22(4R+28)(k+1)(L+1)$ and chooses a family of hash functions $H_0, \dots, H_L:\Gamma^2 \to \mathbb{N}$, such that for all $a,b \in \Gamma$, $\Pr_H(H(a,b) = 0) \leq 1/D$. It also chooses a family of random compression functions $C_1, \dots, C_L : \Gamma^2 \to \Gamma$, and after these functions are picked, the rest of the algorithm is deterministic and relies on functions $\splitl(., \ell)$ and $\compress(., \ell)$ defined below, where $1 \leq \ell \leq L$. 

\textbf{The Compress function. } First, we recall the definition of a ``coloring'' function:

\begin{fact}[\cite{DBLP:conf/stoc/Bhattacharya023}]\label{fact:color_reduction}
	There exists a function $\FL:\Gamma^* \rightarrow \{1,2,3\}^*$ with the following properties. Let $R= \log^* |\Gamma| + 20$. For each string $S \in \Gamma^*$ in which no two consecutive characters are the same:
	\begin{enumerate}
		\item $|\FL(S)|=|S|$ and $\FL(S)$ can be computed in time $O(R\cdot |S|)$. 
		\item For $i\in \{1,\dots, |S|\}$, the $i$-th character of $\FL(S)$ is a function of characters of $S$ only in positions $\{i-R,i-R+1\dots,i+R\}$. We call $S[i-R \dd i + R]$ the \emph{context} of $S[i]$.
		\item No two consecutive characters of $\FL(S)$ are the same.
		\item At least one out of every three consecutive characters of $\FL(S)$ is 1.
		\item If $|S|=1$, then $\FL(S)=3$; and otherwise $\FL(S)$ starts by 1 and ends by either 2 or~3.  
	\end{enumerate}
\end{fact}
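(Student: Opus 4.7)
The construction of $\FL$ proceeds in two stages. The first stage is the classical Cole--Vishkin ``deterministic coin tossing'' alphabet reduction. Given a string $S$ over an alphabet of size $m$ whose adjacent characters are distinct, produce a new string $S'$ of the same length by the following rule: for $i \ge 2$, write $S[i]$ and $S[i-1]$ in binary and let $b$ be the least-significant bit position at which they differ; set $S'[i] = (b, \text{bit } b \text{ of } S[i])$; for $i = 1$, use a fixed convention based on $S[1]$ alone. Because $S[i-1]$ and $S[i]$ differ in bit $b$, the resulting pair $S'[i]$ differs from $S'[i-1]$, and the new alphabet has size at most $2 \lceil \log m \rceil + O(1)$. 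Each character of $S'$ depends only on $S[i-1\dd i]$, so after iterating this reduction $t$ times each output character depends on a window of radius $t$ in $S$. The iterated logarithm shrinks the alphabet below $7$ after $t = \log^* |\Gamma| + O(1)$ rounds, yielding a string over a constant-size alphabet, say $\{1,\ldots,6\}$, at a context radius of $\log^* |\Gamma| + O(1) \le R - 10$.

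The second stage turns this $6$-coloring into a $\{1,2,3\}$-valued output that satisfies properties $3$--$5$. First reduce from six colors to three by the standard local rule: for each position carrying a color $c \in \{4,5,6\}$, replace $c$ by the element of $\{1,2,3\}$ that differs from both current neighbors; iterating this rewriting a constant number of times produces a proper $3$-coloring using only $O(1)$ extra context. Next enforce density (property 4): a maximal substring that avoids $1$ must alternate between $2$ and $3$, so relabel every third position inside such a run (counted canonically, for example from the nearest $1$ on the left or from the string boundary) to be $1$, while permuting the two other colors in a small neighborhood to keep adjacency. Finally impose property 5: set $\FL(S) = 3$ directly if $|S| = 1$, and otherwise fix the first character to $1$ and the last one to $2$ or $3$ by a constant-size adjustment near each endpoint. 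All operations in Stage 2 consult an $O(1)$-radius window, so the cumulative context radius stays within $R$, and the overall running time is $O(R \cdot |S|)$ because each round of Stage 1 and each local sweep of Stage 2 costs $O(1)$ per position.

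The main obstacle is arranging the Stage 2 rewrites so that properties $3$, $4$, and $5$ hold simultaneously after the deterministic pass: a naive density enforcement easily creates new equal-pair violations or drifts outside $\{1,2,3\}$. The remedy is to fix the relabelling schedule on each constant-length $1$-free run by a case analysis on the pattern of its two endpoints (each being either $1$ or the string boundary), precomputing a valid $\{1,2,3\}$-coloring of every such run, and verifying that the chosen schedules agree on overlapping windows. Once the look-up table is designed, correctness of all five properties and the $O(R|S|)$ time bound follow by induction on the number of alphabet-reduction rounds and a routine check of each rewriting rule.
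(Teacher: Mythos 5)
The paper itself does not prove this statement; it is imported verbatim as a fact from Bhattacharya and Kouck\'y~\cite{DBLP:conf/stoc/Bhattacharya023}, so there is no in-paper proof to compare against. Your overall strategy is nevertheless the right one: the ``CVL'' in $\FL$ is indeed Cole--Vishkin deterministic coin tossing, and Stage~1 of your construction (iterated least-significant-differing-bit reduction, each round adding $1$ to the context radius, terminating after $\log^*|\Gamma|+O(1)$ rounds at a constant-size alphabet, then a constant number of local recolorings to get down to $\{1,2,3\}$ with adjacent characters distinct) is correct and standard.

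The gap is in the density-enforcement step of Stage~2. After Stage~1 you have a proper $3$-coloring with no adjacent repeats, but no guarantee whatsoever about how often color~$1$ occurs; for example, the intermediate coloring could be $2,3,2,3,2,3,\dots$ with no occurrence of $1$ at all. Your proposed rule ``relabel every third position inside a maximal $1$-free run, counting from the nearest~$1$ on the left or from the string boundary'' is therefore \emph{not} a bounded-radius local computation: the nearest~$1$ (or the boundary) can be arbitrarily far away, violating property~2. Your closing remark that one can ``fix the relabelling schedule on each constant-length $1$-free run by a case analysis'' silently assumes these runs have constant length, which is exactly what has not been established. The standard (and, as far as I know, the actual Bhattacharya--Kouck\'y) fix is to mark $1$-positions using local extrema of the coloring rather than by counting to the nearest existing~$1$: a position is made a~$1$ if its color is a local maximum among its two neighbors, and in addition a local minimum is promoted to a~$1$ if neither of its neighbors is already a~$1$. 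With a proper $\{1,2,3\}$-coloring this rule is computable from an $O(1)$-radius window, produces $1$s with gap at most~$2$, and the remaining positions can then be filled with $2$s and $3$s by a purely local alternating rule (plus the $O(1)$-radius endpoint adjustments you describe for property~5). Until the density step is replaced by a genuinely local rule of this kind, the proof of properties~2 and~4 does not go through.
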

The function $\compress(S, \ell)$ takes as an input a string $S\in \Gamma^*$ of length at least two, and an integer $\ell \geq 1$, which denotes the level number. The function first decomposes $S$ into minimum number of blocks $S_1, \dots, S_m$ such that $S = S_1S_2 \cdots S_m$ and each maximal substring $a^r$ of~$S$, for $r \geq 2$, is one of the blocks. Next, each block of repeated characters $a^r$ is replaced by a string $\rr_{a,r} \# \in \Gamma^2$ (the second character is a padding that will make the analysis easier, see the proof of \cref{l-compress-rolling}). As a consequence, either $S_i$ is of the form~$\rr_{a,r} \#$ for a character $a$ and $r \geq 2$, or $S_i$ has no consecutive characters that are the same. In the first case, we color  $\rr_{a,r}$ with $1$ and $ \#$ with $2$, and in the second, we color the block according to $\FL$. Each block $S_i$ is then divided into subblocks $S'^{(i)}_1, \dots, S'^{(i)}_{t_i}$ such that the only character colored by $1$ is the first character for each subblock. Finally, each of the subblocks is treated separately in the following way: 
\begin{itemize}
	\item If $S'^{(i)}_{t_i} = \rr_{a,r} \#$, replace it with $\rr_{a,r}$.
	\item Else, $S'^{(i)}_{t_i}$ is of length $2$ or $3$ by \cref{fact:color_reduction}. If it is equal to $abc$, replace it with $C_\ell(ab)\cdot c$. If it is equal to $ab$, replace it with $C_\ell(ab)$.
\end{itemize}
To be able to decompress the compressed string, note that each replacement induces a decompression rule. Thus we also memorize all the decompression rules necessary to recover the original string.

\textbf{The Split function. } The $\splitl$ function takes a string $S$ as input and splits it into blocks: If $H_\ell(S[j\dd j+1]) = 0$, it starts a new block at position $j$. It then outputs the sequence of the resulting string blocks.

\textbf{Decomposition}. The main recursive function of BK-decomposition is the function $\process$. For $0 \le \ell \leq L$, $\process$ takes as an input an integer $\ell$ and a string $T$. The string might have already been compressed previously, so the function also gets all the decompression rules memorized in the compression stage. If $T$ is of length $1$, then the function builds a grammar out of $T$ and the rules memorized so far, and outputs it. (The exact algorithm that rebuilds the grammar is beyond the scope of this paper.) Else, $T$ is compressed by the function $\compress(T,\ell)$ into a string $T'$ and produces the new rules necessary to decompress~$T'$. The function then calls $\splitl(T', \ell)$ that splits $T'$ into blocks~$T_1, \dots, T_m$. Finally, $\process$ is called recursively on each of the blocks, and $\process(.,\ell)$ outputs the concatenation $\process(T_1,\ell + 1)\dots \process(T_m,\ell + 1)$. When called on a string~$T$, the decomposition algorithm calls $\splitl(T, 0)$, and then $\process(., 1)$ on each of the blocks. At the end of the recursion, the algorithm outputs the BK-decomposition of $T$, $\GG(T)$. 

\subsection{Block partition} 
\noindent\textbf{Notation. } To analyse BK-decomposition, it is convenient to introduce the following notation. Let $Z \in \Sigma^*$. We define  sequences  
$B^{Z}(\ell,1),\dots, B^{Z}(\ell,s^{Z}_{\ell}) \in \Gamma^*$ and $A^{Z}(\ell,1),\dots, A^{Z}(\ell,{s^{Z}_{\ell}}) \in \Gamma^*$ as well as integer sequences $s^{Z}_{\ell} \in \mathbb{N}$ and  $t^{Z}(\ell,1),\dots, t^{Z}(\ell,{s^{Z}_{\ell}+1}) \in \mathbb{N}$. 
Their meaning is: $B^Z(\ell,i)$ is compressed into $A^Z(\ell,i)$ and then $A^Z(\ell,i)$ is split into blocks $B^Z(\ell+1,j)$ for $t^Z(\ell+1,i) \le j < t^Z(\ell+1,i+1)$.

\begin{figure}[ht]
	\begin{center}
		\begin{tikzpicture}[scale=1.112]
	\begin{scope}[xshift=-0.3cm]

	
	\draw (0.5,0) rectangle (12.27,0.4);
	\node[label={right: $X$}] at (-0.1,0.2) {};

	\foreach \x in {0,1,2} {
		\draw[] (\x*3.5+0.5,0) -- (\x*3.5+0.5,0.4);
	}
	\draw[] (9,0) -- (9,0.4);

	\node[label={center: \scriptsize{$B^X(0,1)$}}] at (2.25,0.175) {};
	\node[label={center: \scriptsize{$B^X(0,2)$}}] at (5.75,0.175) {};
	\node[label={center: \scriptsize{$B^X(0,s_0^X)$}}] at (10.75,0.175) {};
	
	\node at (8.25,0.1) {$\ldots$};
	
	\begin{scope}[yshift=-1.25cm]

		\foreach \x in {0,1} {
			\draw (\x*3.5+0.8,0) rectangle (\x*3.5+0.8+2.9,0.4);
			\draw[dashed] (\x*3.5+0.8,0.4) -- (\x*3.5+0.5,1.3);
			\draw[dashed] (\x*3.5+0.8+2.9,0.4) -- (\x*3.5+0.5+3.5,1.3);
		}
		
		\draw[dashed] (9.3,0.4) -- (9,1.3);
		\draw[dashed] (12.1,0.4) -- (12.27,1.3);
		\draw[] (9.3,0) rectangle (12.1,0.4);

		\node[label={center: \scriptsize{$A^X(0,1)$}}] at (2.25,0.175) {};
		\node[label={center: \scriptsize{$A^X(0,2)$}}] at (5.75,0.175) {};
		\node[label={center: \scriptsize{$A^X(0,s_0^X)$}}] at (10.75,0.175) {};
		
		\node at (8.25,0.1) {$\ldots$};
		
		\node[fill=white, draw=none] at (0.6,0.75) {\scriptsize{Compress}};
	\end{scope}

	\begin{scope}[yshift=-2.5cm]

	\foreach \x in {0,1} {
		\draw (\x*3.5+0.8,0) rectangle (\x*3.5+0.8+2.9,0.4);
		\draw[dashed] (\x*3.5+0.8,0.4) -- (\x*3.5+0.8,1.5);
		\draw[dashed] (\x*3.5+0.8+2.9,0.4) -- (\x*3.5+0.8+2.9,1.5);
	}
	\node[label={right: \scriptsize{Split}}] at (-0.1,0.75) {};
	\draw[] (9.3,0) rectangle (12.1,0.4);
	\draw[dashed] (9.3,0.4) -- (9.3,1.5);
	\draw[dashed] (12.1,0.4) -- (12.1,1.5);
	
	\draw[dashed] (2.25,0) -- (2.25, 0.4);
	\node[label={center: \scriptsize{$B^X(1,2)$}}] at (2.25+0.7125,0.175) {};
	\node[label={center: \scriptsize{$B^X(1,1)$}}] at (1.5125,0.175) {};
	
	\draw[dashed] (5.25,0) -- (5.25, 0.4);
	\draw[dashed] (5.75,0) -- (5.75, 0.4);
	\node[label={center: \tiny{$B^X \!(1 \!,\!3 )$}}] at (4.8,0.175) {};
	\node[label={center: \tiny{$\ldots$}}] at (5.5,0.175) {};
	\node[label={center: \tiny{$B^X\!(1\!,\!t^X\!(1,2))$}}] at (6.475,0.175) {};

	\draw[dashed] (10.75,0) -- (10.75, 0.4);
	\node[label={center: \tiny{$B^X \!(1,s_1^X)$}}] at (10.75+0.7125,0.175) {};
	\node[label={center: \tiny{$B^X \!(1,s_1^X\!-\!1)$}}] at (10.75-0.7125,0.175) {};
	
	\node at (8.25,0.1) {$\ldots$};

\end{scope}
\end{scope}
\end{tikzpicture}
	\end{center}
	
	\caption{The first two steps of the grammar decomposition algorithm.}
	\label{fig:decomp_gram}
\end{figure}
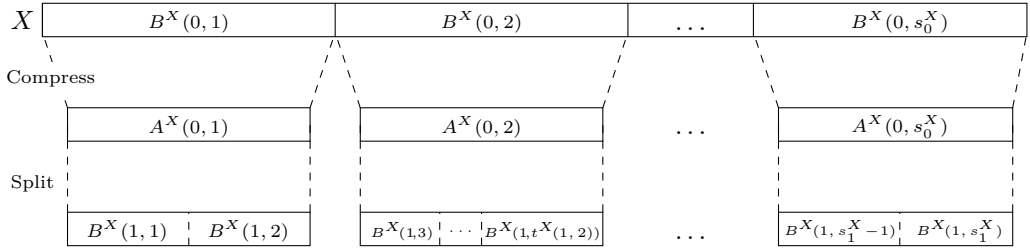

Formally, set $A^{Z}(0,1) = Z$, $B^{Z}(1,1), B^{Z}(1,2), \ldots, B^{Z}(1,s^Z_0) = \Split(A^{Z}(0,1),1)$, as well as $t^Z(1,1) = 1$, and $t^Z(1,2) = s^Z_0+1$. 
For $\ell=1,\dots,L$, define $A^{Z}(\ell,1),\dots, A^{Z}(\ell,s^Z_\ell)$ and $B^{Z}(\ell,1),\dots, B^{Z}(\ell,s^Z_\ell)$ inductively. Set $t^{Z}(\ell,1)=1$.
For $i=1,\dots, s^{Z}_{\ell-1}$, if $|B^{Z}(\ell-1,i)| \ge 2$, then
$$A^{Z}(\ell-1,i) = \Compress(B^{Z}(\ell-1,i),\ell),$$ 
and for $(B_0,B_1,\dots,B_s)= \Split(A^{Z}(\ell-1,i),\ell)$ set
$$B^{Z}(\ell,{t^{Z}(\ell,i)}) = B_0, B^{Z}(\ell,{t^{Z}(\ell,i)+1}) = B_1, \ldots, B^{Z}(\ell,{t^{Z}(\ell,i)+s}) = B_s$$
and $t^{Z}(\ell,{i+1}) = t^{Z}(\ell,i) + s +1$.
If $|B^{Z}(\ell-1,i)| < 2$, then set
$B^{Z}(\ell,{t^{Z}(\ell,i)})$ and $A^{Z}(\ell-1,i)$ to  $B^{Z}(\ell-1,i),$ and $t^{Z}(\ell,{i+1}) = t^{Z}(\ell,i) + 1$.
For $j=s^{Z}_{\ell-1}$, set $s^{Z}_{\ell} = t^{Z}(\ell,j+1)$.

\noindent\textbf{Compressions and descendants.} Let $Z = PQR \in \Gamma^\ast, \ell \leq L$ and $Z' = P'Q'R' = \compress(Z, \ell)$. We say that a character $c \in \Gamma$ in $Z'$ comes from the compression of $Q$ if:
\begin{enumerate}
	\item Either $c$ is directly copied from $Q$ by $\Compress$;
	\item Or $c = C_\ell(ab)$, where $a, b \in \Gamma$ and $a$ belongs to $Q$;
	\item Or $c = \rr_{a,r}$ replaced a block $a^r$ where the first character of $a^r$ belongs to $Q$. 
\end{enumerate} 
Furthermore, we say that $Q'$ is the compression of $Q$ if it is comprised of exactly the characters that come from the compression of $Q$. 

Now suppose that $Z \in \Sigma^\ast$. $Z$ will undergo multiple compression phases, and we want to trace how $Q$ gets compressed along the multiple levels. To do so, we define the \emph{descendants} of~$Q$ inductively. 
We say that the substring of $A^{Z}(0,1)[a_{0,Z}\dd a_{0,Z}']$ which equals $Q$ is descendant of $Q$. Assume now that 
$$A^{Z}(\ell,j_{\ell, Z})[a_{\ell, Z} \dd], A^{Z}(\ell,j_{\ell, Z}+1), \ldots, A^{Z}(\ell,j_{\ell, Z}')[ \dd a_{\ell, Z}']$$
is a descendant of $Q$. We say that 
$$B^Z(\ell+1,j_{\ell+1, Z})[b_{\ell+1, Z}\dd], B^Z(\ell+1,j_{\ell+1, Z}+1), \ldots, B^Z(\ell+1,j_{\ell+1, Z}')[1\dd b_{\ell+1, Z}']$$
is a descendant of $Q$ if the following four conditions are satisfied:
\begin{align*}
&\left\{
\begin{array}{l}
t^Z(\ell +1, j_{\ell, Z}) \leq j_{\ell + 1, Z} < t^Z(\ell +1, j_{\ell, Z}+1)\\
t^Z(\ell +1, j_{\ell, Z}') \leq j_{\ell + 1, Z}' < t^Z(\ell +1, j_{\ell, Z}'+1)\\
a_{\ell,Z} = b_{\ell+1,Z} + \sum_{i = j_{\ell+1, Z}+1}^{t^Z(\ell +1, j_{\ell, Z}+1)-1} |B^{Z}(\ell+1,i)|\\
a_{\ell,Z}' = \sum_{i = t^Z(\ell +1, j_{\ell, Z}')}^{j_{\ell+1, Z}'-1} |B^{Z}(\ell+1,i)| +b_{\ell+1,Z}'
\end{array}
\right.
\end{align*}
Finally, we say that 
$$A^{Z}(\ell+1,j_{\ell + 1, Z})[a_{\ell+1, Z} \dd], A^{Z}(\ell,j_{\ell + 1, Z}+1), \ldots, A^{Z}(\ell+1,j_{\ell+1, Z}')[1 \dd a_{\ell+1, Z}']$$
is a descendant of $Q$, where $A^{Z}(\ell+1,j_{\ell + 1, Z})[a_{\ell+1, Z} \dd]$ is the compression of $B^Z(\ell+1,j_{\ell+1,Z})[b_{\ell+1, Z} \dd], \ell+1)$ and $A^{Z}(\ell+1,j_{\ell+1, Z}')[1 \dd a_{\ell+1, Z}']$ is the compression of $B^Z(\ell+1,j_{\ell+1,Z}'), \ell+1)[1 \dd b_{\ell+1, Z}']$.

Bhattacharya and \koucky~\cite{DBLP:conf/stoc/Bhattacharya023} showed that if two strings can be partitioned into two sequences of similar blocks, then these partitions are preserved by the $\Compress$ function:

\begin{lemma}[{\cite[Lemma 3.10]{DBLP:conf/stoc/Bhattacharya023}}]\label{l-compress-rolling}
	Let $X,Y\in \Gamma^*$, $X'=\Compress(X,\ell)$, and  $Y'=\Compress(Y,\ell)$. 
	Let $X=W_0 U_0 W_1 U_1  \cdots W_q$ and $Y=W_0 V_0 W_1 V_1 \cdots W_q$ for some strings $W_i$, $U_i$ and $V_i$ where for $i\in \{0,\dots,q\}$, $|U_i|,|V_i| \le 4R + 24$. 
	Then there are  $W'_0,W'_1, \dots, W'_q$, $U'_0,U'_1, \dots, U'_q$, $V'_0,V'_1, \dots, V'_q \in \Gamma^*$ such that 
	for $i\in \{0,\dots,q\}$, $|U'_i|,|V'_i| \le 4R + 24$, $X'=W'_0 U'_0 W'_1 U'_1 \cdots W'_q U'_q$ 
	and $Y' = W'_0 V'_0 W'_1 V'_1 \cdots W'_q V'_q $. 
	Moreover, each $W'_i$ is the compression of the same substring of $W_i$ in both $X$ and $Y$. 
\end{lemma}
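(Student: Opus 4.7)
The plan is to exploit the \emph{locality} of each of the five substeps of $\Compress(\cdot,\ell)$: (i) partition the input into maximal same-character runs $a^r$; (ii) substitute each $a^r$ by $\rr_{a,r}\#$; (iii) color the non-run material via $\FL$, which by \cref{fact:color_reduction} depends only on a window of radius $R$; (iv) split each block at color-$1$ positions into subblocks of length $2$ or $3$; and (v) replace each subblock by applying $C_\ell$ to its first two characters (and copying the trailing character, if any). Substeps (i), (iii), (iv), (v) are local with radius $O(R)$, and (ii) is ``run-local'': the interior of a run contributes only to the integer parameter $r$ of a single two-character token of the output.

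My first step will show that at each junction between $W_i$ and the adjacent $U_i$ in $X$ (resp.\ $V_i$ in $Y$), the only characters of $W_i$ whose images under (i)--(ii) can depend on the adjacent piece are those belonging to the leading same-character run $w_i^L$ or trailing run $w_i^R$ of $W_i$; every other run of $W_i$ is self-contained, and substeps (i)--(ii) applied there give the same output tokens in $X$ and $Y$. Since each boundary-straddling run collapses under (ii) to a single pair $\rr_{a,r}\#$, the post-(ii) ``boundary segment'' around a $W_i\, U_i\, W_{i+1}$ junction has length at most $|U_i|+O(1)$ in $X$ (and $|V_i|+O(1)$ in $Y$), while the rest of the post-(ii) image of $W_i$ is produced identically in both strings.

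My second step will propagate this identical region through substeps (iii)--(v). By the $\pm R$-locality of $\FL$, every position in the post-(ii) string at distance greater than $R$ from any differing character is colored identically in the $X$- and $Y$-processings. Because subblocks have length at most $3$ and $C_\ell$ processes subblocks independently, the two outputs agree at every position whose input-side subblock lies entirely in the identically-colored region. Substep (v) strictly shrinks each subblock (a subblock of length $2$ or $3$ produces at most $2$ output characters), which offsets the $O(R)$ growth of the boundary segment introduced by $\FL$'s context; a careful accounting then gives that the output-side boundary segment around each junction has length at most $4R+24$ whenever $|U_i|,|V_i|\le 4R+24$.

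Finally, I define $W_i'$ to be the substring of $X'$ (equivalently of $Y'$) that is the $\Compress$-image of an inner portion $W_i[a_i\dd b_i]$ of $W_i$, where $a_i$ and $b_i$ are chosen to exclude $w_i^L$, $w_i^R$, and the $R$-window of $\FL$-dependence; $U_i'$ and $V_i'$ are then defined as the leftover compressed material filling the gaps between consecutive $W_j'$'s in $X'$ and $Y'$, respectively. The ``moreover'' clause is immediate by construction, since the inner portion $W_i[a_i\dd b_i]$ is the \emph{same} substring of $W_i$ in both decompositions. The main obstacle will be the tight constant-accounting in the second step: one must verify that the cumulative effect of the endpoint tokens from (ii), the $R$-radius from $\FL$, and the $O(1)$ slack from subblock partitioning is absorbed by the shrinking from $C_\ell$, so as to yield exactly the $4R+24$ bound that is preserved under one application of $\Compress$ and therefore usable inductively across the $L$ levels of the decomposition.
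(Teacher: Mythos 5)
Your proposal follows essentially the same route as the paper's proof of the closely related \cref{l-compress-rolling-suffix} (the paper cites the present lemma from Bhattacharya--Kouck\'y rather than reproving it, but proves an almost identical variant and states that the proof follows the same lines): reassign the boundary runs of each $W_i$ to the adjacent $U_i$/$V_i$ so that runs no longer straddle the partition, collapse runs to two-character tokens, use the $\pm R$-locality of $\FL$ to bound the number of differently-colored positions near each junction, and rely on the $2/3$-shrinkage of subblock compression to absorb the $O(R)$ growth. Two details your sketch defers but that carry real weight in the argument: (1) the constant accounting must be pinned down — after reassignment $|U_i^{(3)}| \le |U_i| + 4 \le 4R+28$, adding up to $R+3$ trailing characters of $W_i^{(3)}$ and $R+2$ leading characters of $W_{i+1}^{(3)}$ gives at most $6R+33$ input characters, which compress to at most $(6R+33)\cdot\tfrac{2}{3}+2 = 4R+24$ (so the bound is preserved exactly, as you anticipated); and (2) the inner boundaries $a_i,b_i$ of $W_i$ cannot be chosen arbitrarily: they must be color-$1$ positions of the $\FL$-coloring so that the $\Compress$-image of $W_i[a_i\dd b_i]$ is a contiguous, well-defined substring of the output, and one must argue these color-$1$ positions coincide in the $X$- and $Y$-processings because they lie more than $R$ characters from any differing position. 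You also elide a small edge case: a single run $a^r$ can span $U_i\,W_{i+1}\,U_{i+1}\cdots$ when intermediate $W_j$'s consist entirely of $a$'s, and the paper's reassignment stage explicitly pushes the whole run into the leftmost $U_i$ it touches; without this, the claim that "each boundary-straddling run collapses to a single pair around a $W_i\,U_i\,W_{i+1}$ junction" is not quite accurate.
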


We show a slight modification of the lemma above. The reason for introducing this modification will become clear later. The proof of the lemma follows the lines of \cite[Lemma 3.10]{DBLP:conf/stoc/Bhattacharya023}.  	

\begin{lemma}\label{l-compress-rolling-suffix}
	Let $X,Y,Z,T\in \Gamma^*$, $X'Z'=\Compress(XZ,\ell)$, and  $Y'T'=\Compress(YT,\ell)$, where
	$X'$ is the compression of $X$, and $Y'$ of $Y$.
	Let $X=W_0 U_0 W_1 U_1  \cdots W_q U_q$ and $Y=W_0 V_0 W_1 V_1 \cdots W_q V_q$ for some strings $W_i$, $U_i$ and $V_i$ where for $i\in \{0,\dots,q\}$, $|U_i|,|V_i| \le 4R + 24$. 
	Then there are  $W'_0,W'_1, \dots, W'_q$, $U'_0,U'_1, \dots, U'_q$, $V'_0,V'_1, \dots, V'_q \in \Gamma^*$ such that 
	for $i\in \{0,\dots,q\}$, $|U'_i|,|V'_i| \le 4R + 24$, $X'=W'_0 U'_0 W'_1 U'_1 \cdots W'_q U'_q$ 
	and $Y' = W'_0 V'_0 W'_1 V'_1 \cdots W'_q V'_q $. 
	Moreover, each $W'_i$ is the compression of the same substring of $W_i$ in both $X$ and $Y$.
\end{lemma}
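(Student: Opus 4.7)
The proof will adapt the argument of Lemma 3.10 of Bhattacharya and \koucky, whose key ingredient is that $\Compress$ is local: each output character depends only on an $R$-window of input characters via the coloring $\FL$ and the subsequent run replacements and pair compressions $C_\ell$. Consequently, for each common block $W_i$ that lies strictly in the interior of $X$ and $Y$, there is an inner substring $W_i[\alpha_i \dd \beta_i]$, obtained by trimming $O(R)$ characters from each end, whose image under $\Compress$ is the same in $XZ$ and in $YT$; I will call this common image $W'_i$. The trimmed boundary of $W_i$ together with the differing block $U_{i-1}$ (resp.\ $V_{i-1}$) on its left and $U_i$ (resp.\ $V_i$) on its right are absorbed into the preceding and following $U'$ (resp.\ $V'$) blocks, and because $|U_j|,|V_j|\le 4R+24$ and the trimming contributes only $O(R)$ extra characters, the resulting $U'_i$ and $V'_i$ have length at most $4R+24$.

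The only case that is not already covered by Lemma 3.10 is the rightmost common block $W_q$, which in the original statement sits at the very end of $X$, but here is followed by $U_q$ (resp.\ $V_q$) and then by the extra suffix $Z$ (resp.\ $T$). Without additional information the compression of characters near the boundary between $X$ and $Z$ could depend on characters of $Z$ that are unrelated to $T$, breaking the correspondence between $X'$ and $Y'$. The hypothesis that $X'Z'=\Compress(XZ,\ell)$ with $X'$ being exactly the compression of $X$, and the analogous statement for $Y'T'$ and $Y$, is what rescues the argument: it asserts precisely that the $\Compress$ output can be partitioned cleanly at the boundary between $X$ and $Z$, so that no output character is produced by combining a character coming from $X$ with one coming from $Z$. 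Unpacking this in terms of the definition of $\Compress$ yields that the run detection, the coloring by $\FL$, and the subblock grouping treat $W_q U_q$ identically whether it is viewed inside $XZ$ or in isolation, and symmetrically for $W_q V_q$ inside $YT$.

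Given this clean boundary, the locality argument applies to $W_q$ in exactly the same way as to the interior blocks: on the left it is bordered by $U_{q-1}$ (resp.\ $V_{q-1}$) of length $\le 4R+24$, and on the right by $U_q$ (resp.\ $V_q$), again of length $\le 4R+24$, whose compression stays confined to $X'$ (resp.\ $Y'$) by the hypothesis. I can therefore define the inner substring $W_q[\alpha_q \dd \beta_q]$ as before and let $W'_q$ denote its common compression in $XZ$ and $YT$. The trimmed right boundary of $W_q$ together with the compressed image of $U_q$ (resp.\ $V_q$) forms $U'_q$ (resp.\ $V'_q$), whose length stays within $4R+24$ by the same counting as in Lemma 3.10. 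Concatenating over $i=0,\dots,q$ gives the desired factorisations of $X'$ and $Y'$.

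The main technical obstacle is making the above unpacking of $\Compress$ precise at the right boundary of $X$: I will need to verify from the hypothesis that the run detection, the coloring $\FL$, and the pair replacements $C_\ell$ act identically on the portion of $W_q U_q$ that contributes to $X'$ whether one inputs $XZ$ or restricts attention to $X$ alone, and symmetrically for $Y$. Concretely, this reduces to checking that the $R$-context windows used at positions within the last $O(R)$ characters of $X$ are pinned down by the assumption that $\Compress(XZ,\ell)$ splits at position $|X'|$. Once this verification is carried out, the remainder of the proof of Lemma 3.10 transfers essentially verbatim to the setting of the present lemma.
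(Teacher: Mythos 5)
Your proposal captures the right high-level shape (trim each common block $W_i$ by $O(R)$ on both sides, absorb the trimmed boundary plus the short differing block into the $U'_i/V'_i$ pieces, and appeal to locality of $\FL$), but a central claim you lean on is false and the argument you build on it would not go through.

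You assert that the hypothesis ``$X'$ is the compression of $X$'' implies that no output character combines characters coming from $X$ with characters coming from $Z$, and that consequently the run detection, $\FL$-coloring, and subblock grouping ``treat $W_q U_q$ identically whether it is viewed inside $XZ$ or in isolation.'' Neither implication holds. The phrase ``$X'$ is the compression of $X$'' is just the bookkeeping convention defined in the paper: an output character is attributed to $X$ if its \emph{defining} position (the copied character, the left endpoint $a$ of a pair $C_\ell(ab)$, or the first character of a run replaced by $\rr_{a,r}$) lies in $X$; it places no constraint whatsoever on whether the output depends on $Z$. A pair $C_\ell(ab)$ with $a=X[|X|]$ and $b=Z[1]$ is perfectly possible and is attributed to $X$, and a run $a^r$ straddling the boundary is attributed to whichever side its first character lies on. More importantly, $\FL$ colors each run-free block using an $R$-window of context, so the colors of the last up to $R$ positions of $X$ genuinely depend on characters of $Z$ (and likewise the last positions of $Y$ depend on $T$). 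Your proposed verification step — that ``the $R$-context windows used at positions within the last $O(R)$ characters of $X$ are pinned down by the assumption that $\Compress(XZ,\ell)$ splits at position $|X'|$'' — cannot succeed, because that assumption pins down nothing about those windows.

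The paper's proof never needs, and never claims, any such independence from $Z$. Instead it reruns the proof of Lemma~3.10 from scratch treating $Z$ and $T$ analogously to extra outer blocks: in the first reassignment, $Z$'s leading maximal run $a_{q+1}^{k_{q+1}}$ is moved into $U_q^{(1)}$ (so that a run straddling the $X$--$Z$ boundary does not invalidate the run replacement analysis), the $W_i^{(3)}$ are then trimmed via the first-position-colored-1 markers $s_i,t_i$, and $U'_q$ (resp.\ $V'_q$) is simply defined as whatever of $X'$ (resp.\ $Y'$) sits after $W'_q$ — a string whose compression is entirely permitted to depend on $Z$ (resp.\ $T$), because $U'_q$ and $V'_q$ are never required to match. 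What makes $W'_q$ well-defined and common to both sides is not the hypothesis but the trimming: the surviving interior of $W_q^{(3)}$ is at distance at least $R$ from anything that differs between $XZ$ and $YT$ (namely $U_{q-1}/V_{q-1}$, $U_q/V_q$, and $Z/T$), so its $\FL$-coloring, subblock partition, and pair compressions agree. The length bound $|U'_q|\le 4R+24$ follows from the same $2/3$-compression counting you cite. Your proof is missing both the run-boundary reassignment (necessary for the replacement step to be analysed correctly at the $X$--$Z$ seam) and the realisation that $U'_q$ may legitimately depend on $Z$; repairing it amounts to following the paper's construction rather than asserting an independence that does not exist.
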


\begin{proof}
	
	We start by modifying the strings $W_i$, $U_i$, $V_i$, $Z$ and $T$ to simplify our analysis, following the structure of the $\Compress$ function.
	The first stage of $\Compress$ takes care of maximal blocks of repeated characters. Namely, we reassign blocks of repeated characters so each maximal block of characters in $X$ and $Y$ is contained in exactly one of the strings $W_i$, $U_i$ or $V_i$. 
	
	We do so in two steps. First, for each $0 \leq i \leq q$, we define the strings $W_i^{(1)}, U_i^{(1)}, V_i^{(1)}$ in the following way: $W_i^{(1)}$ is obtained by decomposing $W_i$ into maximal blocks of repeated characters, and by removing its first and last blocks. Secondly, $U_i^{(1)}$ (resp. $V_i^{(1)}$) is obtained by adding the block removed from the end of $W_i$ to the beginning of $U_i$ (resp. $V_i$), and the block removed from the beginning of $W_{i+1}$ to the end of $U_i$ (resp. $V_i$). Formally, for $1 \leq i \leq q$ we define strings the strings $W^{(1)}_i$ and the integers $a_i, b_i \in \Gamma$ and $k_i, k'_i \in \mathbb{N}$ as follows:
	\begin{itemize}
		\item If $W_i$ contains at least two distinct characters 
		let $W_i=a_i^{k_i} W^{(1)}_i b_i^{k'_i}$ so that $k_i$ and $k'_i$ are maximal;
		\item otherwise, there exists $a_i \in \Gamma$ and $k_i \in \mathbb{N}$ (possibly zero) such that $W_i=a_i^{k_i}$, and we set $W^{(1)}_i = \eps$, $b_i=a_i$ and $k'_i=0$. 
	\end{itemize}
	Furthermore: 
	\begin{itemize}
		\item Let $W_0=W^{(1)}_0 b_0^{k'_0}$ such that $k'_0$ is maximal, and $b_0$ is the last character of $W_0$.
		\item Let $Z=a_{q+1}^{k_{q+1}}Z^{(1)}$ such that $k_{q+1}$ is maximal, and $a_{q+1}$ is the first character of $Z$.
		\item Let $U^{(1)}_i = b^{k'_{i}}_{i} U_i a^{k_{i+1}}_{i+1}$.
	\end{itemize}  
	We define $V^{(1)}_i, 0 \leq i \leq q$ and $T^{(1)}$ in a similar fashion. 
	Note that we have $XZ=W^{(1)}_0 U^{(1)}_0 W^{(1)}_1 \cdots W^{(1)}_q  U^{(1)}_q Z^{(1)}$ and $YT=W^{(1)}_0 V^{(1)}_0 W^{(1)}_1  \cdots W^{(1)}_q  V^{(1)}_q T^{(1)}$. 
	
	Next, suppose there is a maximal block of characters $a^r$ that spans multiple strings. Note that by construction, this block must start at some $U^{(1)}_s$ for $s \leq q$ and end in some $U^{(1)}_t$ with $s<t$, and $W^{(1)}_i = \varepsilon$ for all $s<i<t$. For all such indices $s$ and $t$, we add all the characters contained in the block $a^r$ to the end of $U^{(1)}_s$ and remove them from the other $U^{(1)}_i$, $i=s+1,\dots,t$. We do this for all such blocks of repeated characters, and we perform similar moves on $V^{(1)}_i$'s. We denote the resulting substrings by $W^{(2)}_i$, $U^{(2)}_i$, $V^{(2)}_i$, $Z^{(2)}$ and $T^{(2)}$. We still have $XZ=W^{(2)}_0 U^{(2)}_0 W^{(2)}_1 \cdots  W^{(2)}_q U^{(2)}_q Z^{(2)}$ and $YT=W^{(2)}_0 V^{(2)}_0 W^{(2)}_1  \cdots W^{(2)}_q  V^{(2)}_q T^{(2)}$, and at this stage, each maximal block of repeated characters in $X$ or $Y$ is contained in exactly one of the strings $W^{(2)}_i$, $U^{(2)}_i$, and $V^{(2)}_i$.
	
	Finally, $\Compress$ replaces each maximal block $a^r$, $r\ge 2$, with a string $\rr_{a,r} \#$. We apply this procedure on each substring $W^{(2)}_i$, $U^{(2)}_i$, and $V^{(2)}_i$ to obtain the substrings $W^{(3)}_i$, $U^{(3)}_i$, and $V^{(3)}_i$. 
	Since either $U^{(2)}_i$ is empty or $U^{(2)}_i =  a^r U_i b^{r'}$ for some $a,b,r,r'$, (here~$r,r'$ are not maximal) and $i=1,\dots,q$, we have $|U^{(3)}_i| \leq 2 + |U_i| + 2 \leq 4R+28$. Similarly, $|V^{(3)}_i| \le 4R + 28$. 
	
	Next, since all consecutive characters of $U^{(3)}_i, V^{(3)}_i, W^{(3)}_i$ are different, the $\Compress$ function uses the coloring function $\FL$, with the two characters replacing each repeated block are  colored by $1$ and $2$, resp. The coloring ensures that at most $R$ first and last characters of each $W^{(3)}_i$ might be colored differently in $X$ and $Y$, since only these characters have a different context in $X$ and $Y$. Hence, if we define $U'_i$ as the characters that come from the compression of characters in $U^{(3)}_i$, 
	the first up-to $R+2$ characters of $W^{(3)}_{i+1}$, and the last up-to $R+3$ characters of $W^{(3)}_{i}$, we separate these characters that might be colored differently between the two strings. More specifically, we define $U'_i$ as follows:
	\begin{itemize}
		\item If $|W^{(3)}_i| \ge R+3$, let $s^x_i$ be the position of the first character in $W^{(3)}_i$ among positions $R+1,R+2,R+3$ which is colored 1 in $X$ by the $\FL$-coloring. 
		If  $|W^{(3)}_i| < R+3$, let $s^x_i=1$. 
		\item Next, if $|W^{(3)}_i| \ge 2R+3$ set $t^x_i$ to be the first position from left colored 1 among the characters of $W^{(3)}_i$ at positions $R+1, R+2, R+3$ counting from right. 
		If $|W^{(3)}_i| < 2R+3$, set $t^x_i$ to be equal to $s^x_i$.
	\end{itemize}
	Similarly, define $s^y_i$ and $t^y_i$ based on the coloring of $Y$. 
	We define $U'_i$ as the compression of  $W^{(3)}_{i}[t^x_{i} \dd |W^{(3)}_{i}|] \cdot U^{(3)}_i \cdot W^{(3)}_{i+1}[1 \dd s^x_{i+1})$ and $U'_q$ as the compression of $W^{(3)}_{q}[t^x_{q} \dd |W^{(3)}_{q}|] \cdot U^{(3)}_q$. Similarly, let $V'_i$ be the compression of  $W^{(3)}_{i}[t^y_{i} \dd |W^{(3)}_{i}|] \cdot V^{(3)}_i \cdot W^{(3)}_{i+1}[1 \dd s^y_{i+1})$, and $V'_q$ be the compression of $W^{(3)}_{q}[t^Y_{q} \dd |W^{(3)}_{q}|] \cdot V^{(3)}_q$. Finally, we let $Z'$ be the compression of $Z^{(3)}$, $T'$ be the compression of $T^{(3)}$ and $W'_i$ be the compression of $W^{(3)}_i[s^y_i \dd t^y_i)$. Note that since the characters that are at least $R$ positions away from either end of $W^{(3)}_i$ colored the same in $X$ and $Y$, we have by construction $s^x_i=s^y_i$ and $t^x_i=t^y_i$, and $W_i'$ is indeed the compression of the same substring in both $X'$ and $Y'$.
	
	Furthermore, $U'_i$ comes from the compression of at most $|U^{(3)}_i| + 2R+5 \le 6R + 33$ characters. 
	Since each character after a character colored 1 is {\em removed} by the compression, and each consecutive triple of characters contains at least one character colored by 1, the at most $6R + 33$ characters are compressed into at most $(6R + 33)\cdot 2/3 + 2 =4R+24$ characters. So $U'_i$ is of length at most $4R+24$. Similarly for $V'_i$.
\end{proof}

\subsection{Proof of Theorem~\ref{t-decomposition-rolling-suffix}}
Let $h:= \ed(X,Y) \leq k$. We can partition 
\begin{align*}
&XU = A^{XU}(0,1) = W_0(0,1) U_0(0,1) W_1(0,1) \cdots U_{h-1}(0,1) W_h(0,1) U_h(0,1) U\\
&YV = A^{YV}(0,1) = W_0(0,1) V_0(0,1) W_1(0,1) \cdots V_{h-1}(0,1) W_h(0,1) V_h(0,1) V
\end{align*}
where $|U_i(0,1)|, |V_i(0,1)| \le 1 < 4R+24$ and $U_i(0,1) \neq V_i(0,1)$ for $0 \le i \le h-1$ correspond to the edits between $X,Y$, and $U_h(0,1) = V_h(0,1) = \varepsilon$.

We maintain similar partitions for $A^{XU}(\ell, i)$, $A^{YV}(\ell, i)$ for $\ell = 1, \ldots, L$ throughout the BK-decomposition algorithm via \cref{l-compress-rolling}. Assume that $A^{XU}(\ell, 1), \ldots, A^{XU}(\ell, j^{XU}_\ell)[1\dd a^{XU}_\ell]$ is a compression of $X$ and $A^{YV}(\ell, 1),  \ldots, A^{YV}(\ell, j^{YV}_\ell)[1\dd a^{YV}_\ell]$ is a compression of $Y$. We show that, under a condition to be specified below, each of the following is satisfied: 
\begin{enumerate}
	\item There is $j^{XU}_{\ell} = j^{YV}_\ell := j$; \label{number_blocks}
	\item For all $1 \le i < j$, $A^{XU}(\ell,i)$ can be partitioned as $W_0(\ell,i) U_0 (\ell,i) \cdots W_{q(\ell,i)} (\ell,i)$ and $A^{YV}(\ell,i)$ as $W_0(\ell,i) V_0(\ell,i)\cdots W_{q(\ell,i)}(\ell,i)$, where $|U_r(\ell,i)|, |V_r(\ell_i)| \le 4R+24$ for all $1 \le r \le q(\ell,i)$. \label{matching_decomp}
	\item $A^{XU}(\ell,j)[1 \dd a^{XU}_\ell]$ can be partitioned as $W_0(\ell,j) U_0 (\ell,j) \cdots W_{q(\ell,j)} (\ell,j) U_{q(\ell,j)} (\ell,j)$ and $A^{YV}(\ell,j)[1 \dd a^{YV}_\ell]$ as $W_0(\ell,j) V_0(\ell,j)\cdots W_{{q(\ell,j)}}(\ell,j)V_{q(\ell,j)} (\ell,j)$, where the lengths of strings $U_r(\ell,j)$, $V_r(\ell,j)$ is at most $4R+24$ for all $1 \le r \le q(\ell,j)$.\label{matching_decomp_edge} 
	\item $\sum_{i = 1}^j q(\ell,i) \le h+1$.  \label{mismatching_bound}
	\item Let $Q(\ell, i) = \sum_{t=0}^{i}q(\ell, t)$. The string $U_{r}(\ell, i)$ (resp. $V_{r}(\ell, i)$ ) contains a descendant of $U_{Q(\ell, i-1) + r}(0,1)$ (resp. $V_{Q(\ell, i-1) + r}(0,1)$), and $W_{r}(\ell, i)$ is a descendant of the same substring of $W_{Q(\ell, i-1) + r}(0,1)$ in both $A^{XU}(\ell,i)$ and $A^{YV}(\ell,i)$.
\end{enumerate}

Below we refer to	$U_r(\ell,i)$, $V_r(\ell,i)$ as \emph{mismatching blocks}, and to $W_r(\ell,i)$ as \emph{matching blocks}. 
The condition we require to show the properties above is a follows. We say that $\Split(A^{XU}(\ell,i), \ell)$ makes an {\em undesirable split} if it makes a split at a position $p$ that belongs to a mismatching block $U_r(\ell,i)$ (formally, $H_\ell(U_r(\ell,i)[p \dd p+1]) = 0$), or $p$ is the first or the last position of a matching block $W_r(\ell,i)$ (formally, $H_\ell(W_r(\ell,i)[1 \dd 2]) = 0$ or $H_\ell(W_r(\ell,i)[|W_r(\ell,i)|] U_{r}(\ell,i)) = 0$)) for some $1 \le r \le j^{XU}_{\ell}$. We define undesirable splits in a similar fashion for $A^{YV}(\ell,i)$.

Assume that there is no undesirable split during BK-decomposition. Note that the properties hold for $\ell = 0$. Assume now that the properties hold for $\ell < L$, and consider level~$\ell+1$. First, BK-decomposition applies the $\Split$ function to $A^{XU}(\ell,i)$ and $A^{YV}(\ell,i)$. By the condition, a split at a position $p$ of $A^{XU}(\ell,i)$ can occur only if $p$ belongs to a matching block and hence there is necessarily a split at a position $p$ of $A^{YV}(\ell,i)$, and vice versa. Therefore, since $j_{\ell}^{XU} = j_{\ell}^{YV}$, we have $j_{\ell+1}^{XU} = j_{\ell+1}^{YV}$, which shows \cref{number_blocks}. Furthermore, the splits straightforwardly yield the matching/mismatching block partitions for $B^{XU}(\ell+1,i)$ and $B^{YV}(\ell+1,i)$. Namely, for $1 \le i < j^{XU}_{\ell+1}$ we can partition 
\begin{align*}
&B^{XU}(\ell+1,i) = W_0'(\ell+1,i) U_1'(\ell+1,i) W_1'(\ell+1,i) \cdots W_{q(\ell+1,i)}'(\ell+1,i)\\
&B^{YV}(\ell+1,i) = W_0'(\ell+1,i) V_1'(\ell+1,i) W_1'(\ell+1,i) \cdots W_{q(\ell+1,i)}'(\ell+1,i)
\end{align*}
For $i = j^{XU}_{\ell+1}$, we can partition 
\begin{align*}
&B^{XU}(\ell+1,i) = W_0'(\ell+1,i) U_1'(\ell+1,i)  \cdots W_{q(\ell+1,i)}'(\ell+1,i) U_{q(\ell+1,i)}'(\ell+1,i)\\
&B^{YV}(\ell+1,i) = W_0'(\ell+1,i) V_1'(\ell+1,i)  \cdots W_{q(\ell+1,i)}'(\ell+1,i) V_{q(\ell+1,i)}'(\ell+1,i)
\end{align*}
(note the mismatching blocks at the end). Furthermore, for $ 1 \leq r \leq q(\ell + 1, i)$, $W_r(\ell+1,i)$ is a substring of $W_s(\ell,i')$ and $U_r'(\ell+1,i) = U_{s}(\ell, i')$ , where $Q(\ell,i-1) + s = Q(\ell+1,i'-1) + r$. This shows that the number of mismatching blocks does not change after the application of $\Split$. 

Next, in order to construct $A^{XU}(\ell+1,i)$ and $A^{YV}(\ell+1,i)$, the BK-decomposition applies the $\Compress$ function. For each $1 \le i < j_{\ell+1}^{XU}$, we apply \cref{l-compress-rolling} to obtain \cref{matching_decomp} and for $i = j_{\ell+1}^{XU}$ \cref{l-compress-rolling-suffix} to obtain \cref{matching_decomp_edge}. In addition, the two lemmas guarantee that the sizes of the block partitions do not change and hence \cref{mismatching_bound} holds for level $\ell+1$ as well. Furthermore, by \cref{l-compress-rolling} and \cref{l-compress-rolling-suffix}, $W_r(\ell+1, i)$ is the compression of the same substring of $W_r'(\ell+1,i)$ in both $B^{XU}(\ell+1,i)$ and $B^{YV}(\ell+1,i)$. By the induction hypothesis, this shows that it is also a descendant of the same substring of $W_{Q(\ell+1, i-1) + r}(0,1)$ in both $A^{XU}(\ell+1,i)$ and $A^{YV}(\ell+1,i)$. This also implies that $U_{r}(\ell +1, i)$ (resp. $V_{r}(\ell+1, i)$) contains the compression of $U_{r}'(\ell+1,i) = U_{s}(\ell, i')$, (resp. $V_{r}'(\ell+1,i)= V_{s}(\ell, i')$), where $Q(\ell,i-1) + s = Q(\ell+1,i'-1) + r$. By the induction hypothesis,  $U_{r}(\ell +1, i)$ (resp. $V_{r}(\ell+1, i)$) contains a descendant of  $U_{Q(\ell, i'-1) + s}(0,1) = U_{Q(\ell+1, i-1) + r}(0,1)$ (resp. $V_{Q(\ell, i'-1) + s}(0,1) = V_{Q(\ell+1, i-1) + r}(0,1)$). This finishes the induction.

\cref{number_blocks} implies Item 1 of \cref{t-decomposition-rolling-suffix}. \cref{mismatching_bound} guarantees that there are at most $h+1 \leq k+1$ indices $i$, $1 \leq i \leq j_{L}^{XU}$, such that $A^{XU}(L,i)  \neq A^{YV}(L,i)$. Recall also that for all $i$, $|A^{XU}(L,i)| = 1$, and that they define the grammars of the decomposition. As a result, we obtain Item 2 of \cref{t-decomposition-rolling-suffix}.
Let $r \in \{0 \dots, h\}$. By Item 5, $A^{XU}(L, i)$ contains a descendant of $U_r(0,1)$ iff $A^{YV}(L, i)$ contains a descendant of $V_r(0,1)$, and $A^{XU}(L, i)$ and $A^{YV}(L, i)$ contain the descendants of the same substring of $W_r(0,1)$. Since $A^{XU}(L, i)$ and $A^{YV}(L, i)$ define $G_i$ and $G_i'$ respectively, we have that $\eval(G_i)$ contains $U_r(0,1)$ iff  $\eval(G_i')$ contains $V_r(0,1)$, and as a result, for $0 \leq i \leq t-1$, there are integers $r_i, r'_i, a_i, b_i$ such that:
\begin{align*}
&\eval(G_i) = W_{r_i}(0,1)[a_i \dd] U_{r_i}(0,1) W_{r_i+1}(0,1) \cdots U_{r_{i+1}-1}(0,1) W_{r_{i+1}}(0,1)[\dd b_i],\\
&\eval(G_i') = W_{r_i}(0,1)[a_i \dd] V_{r_i}(0,1) W_{r_i+1}(0,1) \cdots V_{r_{i+1}-1}(0,1) W_{r_{i+1}}(0,1)[\dd b_i],
\end{align*}
where $r_1 = 0$, and for $i = t$,
\begin{align*}
&\eval(G_t)[\dd r] = W_{r_t}(0,1)[a_i \dd] U_{r_t}(0,1) W_{r_t+1}(0,1) \cdots U_{h-1}(0,1) W_{h}(0,1),\\
&\eval(G_i')[\dd r'] = W_{r_t}(0,1)[a_i \dd] V_{r_t}(0,1) W_{r_t+1}(0,1) \cdots V_{h-1}(0,1) W_{h}(0,1).
\end{align*}

For $1 \leq i \leq t-1$ , $\ed(\eval(G_i), \eval(G_i')) = r_{i+1} - r_i$, and $\ed(\eval(G_t)[\dd r], \eval(G_t')[\dd r']) = h- r_{t}$. Hence, $\sum_{i = 1}^{t-1} \ed(\eval(G_i), \eval(G_i')) + \ed(\eval(G_t)[\dd r], \eval(G_t')[\dd r']) = h = \ed(X,Y)$, which proves Item 3 of \cref{t-decomposition-rolling-suffix}.

To compute the probability of our claim being true, we start by bounding the probability of an undesirable split happening. Fix a level $\ell$. Recall that there is a split at a position $p$ of $A^{XU}(\ell,i)$ iff $H_\ell(A^{XU}(\ell,i)[p \dd p+1])=0$, and similarly for $A^{YV}(\ell,i)$. Since $H_\ell$ is chosen at random there is a split at a given position with probability $1/D$. Furthermore, by the definition of block partitions and \cref{mismatching_bound}, there are at most $2(4R+28)(k+1)$ positions where an undesirable split can happen. 
Thus, the probability of an undesirable split at a level $\ell$ is at most $2(4R+28)(k+1)/D \le 1/11(L+1)$. By applying the union bound, the probability that BK-decomposition makes an undesirable split over one of the $L+1$ levels, is $1/11$. Furthermore, the result fails if $X$ or $Y$ are not properly compressed. By Item~$1$ of \cref{t-decomposition}, this happens with probability $\leq \frac 2 n$. As a result, for $n$ big enough, \cref{t-decomposition-rolling-suffix} holds with probability $\geq 1 - \frac 1 {10}$.

\section{Proofs omitted from \cref{sec:periodic}}

\paragraph*{\texorpdfstring{Main phase: Case $j = 1,2$}{Main phase: Case j = 1,2}}
Similar to the case $j \ge 3$, we start by extending $P_j$ into $P_j'$ (see \cref{alg:prefix_extension_12}), taking care to terminate before the end of the text (the only difference with \cref{alg:prefix_extension} is the terminating condition in Line~\ref{line:stop_condition} of \cref{alg:prefix_extension_12}).  

\begin{algorithm}
	\caption{Extension of $P_j$ into $P_j'$ of length $\ell_j'$: case $j = 1,2$}\label{alg:prefix_extension_12}
	\setcounter{AlgoLine}{0}
	$\ell_j' \gets \lambda_j$, $\sk^1 \gets \skh_{3k}(T[1 \dd \lambda_j])$, $\score_1 \gets \score(T[1 \dd \lambda_j])$\;
	\While{$\ell_j' < \min\{2\ell_j,n-q+1\}$}{\label{line:stop_condition}
		$\sk^2 \gets \sk^1$, $\score_2 \gets \score_1$\;
		$\ell_j' \gets \ell_j' + q$, $\sk^1 \gets \skh_{3k}(T[1 \dd \ell_j'])$, $\score_1 \gets \score(T[1 \dd \ell_j'])$\;
		\If{$h = \MI(T[1 \dd \ell_j']) > 2k$}{\Return{$(i, \sk^1, \sk^2, \score_1, \score_2)$}\label{line:aperiodic_extension_12}\tcp*{Use $\skh_{3k}(T)$ and $\skh_{3k}(Q)$}}
	
	}
	\Return{$(\ell_j', \sk^1, \sk^2, \score_1, \score_2)$}\label{line:periodic_extension_12}\;
\end{algorithm}

\cref{alg:prefix_extension_12} can be implemented in streaming via \cref{cor:streaming_sketch} to use $O(k \log n)$ space and $O(k \log^2 n)$ time per character. Note that the algorithm knows $q$ before the position $n-q$ and it always terminates before or at the arrival of $T[n]$, and outputs $\ell_j' : = |P_j'|$, $\skh_k(P_j')$, and $\skh_k(P_j'[1\dd \ell_j'-q])$. We define $T_j' = T[\max\{\lfloor n/2 \rfloor - \ell_j,1\} \dd \min\{\lfloor n/2 \rfloor - \ell_{j+1}+\ell_j'-2, n\}]$.

We start by showing that we can compute $\per_k^j \cap [1 \dd n-\ell_j'+1]$ in the same way as in the case $j \ge 3$, i.e by filtering candidate periods that are $k$-mismatch occurrences of $P_j'$ first, and checking each of these candidates individually. Assume first that return is executed in Line~\ref{line:aperiodic_extension_12}. In this case, the inequalities of \cref{claim:inequalities_j_3} and \cref{cor:filter_extension} are satisfied, and the algorithm can compute $\per_k^j \cap [1 \dd n-\ell_j'+1]$ using the procedure described in Section~\ref{par:aperiodic_extension}: 

\begin{observation}\label{prop:filter_extension_12}
	By \cref{alg:prefix_extension_12}, $\ell_j' < n$. Furthermore, for $t \in [1 \dd n-\ell_j'+1]$, we have $t+\ell_j'-1 \le n$. As a result, for $j = 1, 2$, $\per_k^j \cap [1 \dd n-\ell_j'+1]$ is a subset of the set of the starting positions of $k$-mismatch occurrences of $P_j'$ in $T_j'$. 
\end{observation}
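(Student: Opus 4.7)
The plan is to verify the three assertions of \cref{prop:filter_extension_12} in turn; they are all essentially bookkeeping. For the bound $\ell_j' < n$, I would inspect the loop of \cref{alg:prefix_extension_12}: the guard $\ell_j' < \min\{2\ell_j, n-q+1\}$ is checked before every iteration, and the body increments $\ell_j'$ by exactly $q$. Hence, whenever the algorithm returns---either via Line~\ref{line:aperiodic_extension_12} inside the loop or via Line~\ref{line:periodic_extension_12} after the loop exits---the final value satisfies $\ell_j' \le (\min\{2\ell_j, n-q+1\} - 1) + q \le n$, which is what the subsequent claims need. The middle assertion ``$t + \ell_j' - 1 \le n$ for every $t \in [1 \dd n - \ell_j' + 1]$'' is immediate arithmetic from this bound.

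For the third claim, I would copy the pattern used in \cref{cor:filter_extension} for $j \ge 3$. Fix $t \in \per_k^j \cap [1 \dd n-\ell_j'+1]$. Because $t \le n - \ell_j' + 1$, we have $\ell_j' - 1 \le n - t$, so \cref{obs:occ_candidates} applied with $\ell = \ell_j' - 1$ shows that $t$ is the starting position of a $k$-mismatch occurrence of $T[1 \dd \ell_j'] = P_j'$ in $T$. It remains to check that the corresponding fragment $T[t \dd t + \ell_j' - 1]$ lies entirely inside the window
\[
T_j' = T\bigl[\max\{\lfloor n/2 \rfloor - \ell_j,\, 1\} \,\dd\, \min\{\lfloor n/2 \rfloor - \ell_{j+1} + \ell_j' - 2,\, n\}\bigr].
\]
Containment from the left follows because $t \ge 1$ and $t \ge \lfloor n/2 \rfloor - \ell_j$, both by the definition of $\per_k^j$. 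Containment from the right uses $t \le \lfloor n/2 \rfloor - \ell_{j+1} - 1$ to get $t + \ell_j' - 1 \le \lfloor n/2 \rfloor - \ell_{j+1} + \ell_j' - 2$, together with $t + \ell_j' - 1 \le n$ established in the middle claim.

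There is essentially no obstacle in this observation: the entire point of using $n - q + 1$ in the loop guard of \cref{alg:prefix_extension_12} is to prevent $\ell_j'$ from overshooting $n$ after the final increment by $q$, and once this is in place the remaining endpoint verifications for $T_j'$ are one-line manipulations using the definition of $\per_k^j$.
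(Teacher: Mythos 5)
Your proof is correct and is exactly the bookkeeping the observation calls for: the loop-guard argument for the length bound, the trivial arithmetic for the middle claim, and the same pattern as \cref{cor:filter_extension} (apply \cref{obs:occ_candidates} with $\ell=\ell_j'-1$, then verify $[t\dd t+\ell_j'-1]$ sits inside $T_j'$ using the endpoints in the definitions of $\per_k^j$ and $T_j'$). One small nit worth flagging: your derivation gives $\ell_j'\le n$, not the strict $\ell_j'<n$ in the statement -- for $j=1$ the guard $n-q+1$ allows the final increment by $q$ to land exactly on $n$ (e.g.\ when $q\mid n$), and the paper's own prose (``terminates before or at the arrival of $T[n]$'') confirms $\le n$ is the intended bound; since only $\le n$ is used downstream, this is harmless.
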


On the other hand, if return is executed in Line~\ref{line:periodic_extension_12}, we show the following proposition that allows us to compute the periods using the procedure described in Section~\ref{par:periodic_extension}:

\begin{proposition}[Compare \cref{claim:ham_periodic_extension_occurences}]\label{prop:periodic_extension_12}
If return is executed in Line~\ref{line:periodic_extension_12}, we have $\hd{P_j'}{Q^*} < 2k$ and for all $t \in [\lfloor n/2 \rfloor - \ell_j \dd \min\{\lfloor n/2 \rfloor - \ell_{j+1}-1, n-\ell_j'+1\}]$ there is
	$\lfloor n/2 \rfloor \le n-t \le \lfloor n/2 \rfloor -\ell_j + (\ell'_j + 2)$.
\end{proposition}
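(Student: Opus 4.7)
My plan is to closely follow the proof of \cref{claim:ham_periodic_extension_occurences} for $j \ge 3$, adjusting for the extra termination clause in the \textbf{while} condition of \cref{alg:prefix_extension_12}. For the first part, I would argue that executing \textbf{return} in Line~\ref{line:periodic_extension_12} means the \textbf{while} condition became false without the inner $h > 2k$ test firing; in particular, $h = |\MI(T[1 \dd \ell_j'], Q^\ast)| \le 2k$ for the final value of $\ell_j'$, which is precisely $\hd{P_j'}{Q^\ast} \le 2k$.

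For the second part, the lower bound $\lfloor n/2 \rfloor \le n - t$ is immediate from $t \le \lfloor n/2 \rfloor - \ell_{j+1} - 1$ and $\ell_{j+1} \ge 1$, giving $n - t \ge n - \lfloor n/2 \rfloor + 2 \ge \lfloor n/2 \rfloor$. For the upper bound, the exit condition yields $\ell_j' \ge \min\{2\ell_j, n - q + 1\}$, and I would split according to which term realises the minimum. When $\ell_j' \ge 2\ell_j$ -- which is the regime for $j = 2$, since $2\ell_2 \le 8n/9 \le n - q + 1$ using $q \le \ell_j/(128k)$ -- the chain of inequalities in \cref{claim:ham_periodic_extension_occurences} carries over verbatim: $n - t \le n - \lfloor n/2 \rfloor + \ell_j + 1 \le \lfloor n/2 \rfloor - \ell_j + (2\ell_j + 2) \le \lfloor n/2 \rfloor - \ell_j + (\ell_j' + 2)$.

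The main obstacle is the remaining sub-case $\ell_j' \ge n - q + 1$ with $\ell_j' < 2\ell_j$, which arises for $j = 1$, where $\ell_1 = \lfloor 2n/3 \rfloor$ so that $2\ell_1$ exceeds $n$ while $\ell_j'$ is only guaranteed to be of order $n$. Here the naive chain of inequalities from \cref{claim:ham_periodic_extension_occurences} breaks down. The key observation is that once the first termination clause fails the restriction $t \le n - \ell_j' + 1$ in the statement of the proposition becomes the binding one, and truncates the admissible $t$ to a short interval of length $O(q)$ near the upper endpoint $n - \ell_j' + 1$. I would finish the upper bound by combining $\ell_j' \ge n - q + 1$ with $q \le \ell_j/(128k)$ and this truncation to verify $n - t \le \lfloor n/2 \rfloor - \ell_j + \ell_j' + 2$ directly on this short sub-interval, paying attention to the floor and ceiling corrections for both parities of $n$.
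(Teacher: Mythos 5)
The first part of your proof (the easy implication from the exit condition, and the confirmation that for $j=2$ the minimum is realized by $2\ell_2$ so the chain of inequalities from \cref{claim:ham_periodic_extension_occurences} applies) is correct, and your observation that the paper's justification tacitly assumes $\ell_j'\ge 2\ell_j$ — which need not hold for $j=1$ — is a genuine catch: the paper's own proof simply asserts $\ell_j'\ge 2\ell_j$ without noticing the $j=1$ exception.

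However, your proposed handling of the remaining case $j=1$, $\ell_j'\ge n-q+1$, does not work. You claim the constraint $t\le n-\ell_j'+1$ "truncates the admissible $t$ to a short interval of length $O(q)$ near the upper endpoint". This mischaracterises the interval. For $j=1$ the lower endpoint $\lfloor n/2\rfloor - \ell_1$ is roughly $-n/6$, so after intersecting with the positive integers the admissible $t$ range is $[1\dd n-\ell_j'+1]$, which is indeed of length at most $q$ but is anchored at $1$, not at $n-\ell_j'+1$. Taking $t=1$, one gets $n-t=n-1$, while the claimed upper bound satisfies
\[
\lfloor n/2\rfloor - \ell_1 + \ell_j' + 2 \;\le\; \lfloor n/2\rfloor - \lfloor 2n/3\rfloor + n + 2 \;\approx\; \tfrac{5n}{6} + 2,
\]
which is strictly smaller than $n-1$ once $n>18$. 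Since the periodic branch for $j=1$ only fires when $P_1$ has at least $576k$ $k$-mismatch occurrences in $T_1$, and the length of $T_1$ forces $\lfloor n/2\rfloor - \ell_2 - 1 \ge 576k$, i.e.\ $n\gg 18$, the inequality you are trying to verify genuinely fails on the interval in question. No combination of $\ell_j'\ge n-q+1$ and $q\le \ell_j/(128k)$ can close this, because $\ell_j'$ is bounded above by $n$ while the target would require $\ell_j'\gtrsim 7n/6$. The correct conclusion is that for $j=1$ the displayed upper bound cannot be proved as written (a weaker bound such as $n-t+1\le\lfloor n/2\rfloor+\ell_j'$, which is what the downstream algorithm in \cref{sec:main_phase} actually uses, would still hold), and the case split should end with this observation rather than with a promised direct verification.
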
 
\begin{proof}
	The first part of the proposition is true by construction. For the second part, we have $\ell_j' \ge 2 \ell_j$. Hence, 
	$$\lfloor n/2 \rfloor \le n-t \le n-\lfloor n/2 \rfloor + \ell_j \le \lfloor n/2 \rfloor -\ell_j + (2\ell_j + 2) \le \lfloor n/2 \rfloor -\ell_j + (\ell'_j + 2)$$\qedhere
\end{proof}

It remains to explain how to compute the set $\per_k^j \cap [n-\ell_j'+2 \dd n]$. We maintain the $3k$-mismatch sketch of $T$ via \cref{cor:streaming_sketch} and its weight via \cref{prop:weight}. In particular, we memorise the $3k$-mismatch sketch and the weight of $P_j = T[1 \dd \ell_j]$ when $T[\ell_j]$ arrives. The following proposition will be crucial for the rest of the algorithm. 

\begin{observation}\label{obs:ham_aperiodic_j12}
Let $p \in \per_k^j$ be the leftmost position such that $p > n-\ell_j'+1$ (if defined). Denote $ n' = \lfloor (n-p+1)/q \rfloor \cdot q$ (see \cref{fig:case_12}). By \cref{obs:occ_candidates}, $p$ must be the starting position of a $k$-mismatch occurrence of $T[1\dd n']$. Furthermore, since $p > n-\ell_j'+1$, if 
	\begin{align*}
		\hd{T[p \dd p+ n' -1]}{Q^*}  &\le ~\hd{T[p \dd p+ n'-1]}{T[1\dd n']} + \hd{T[1\dd n']}{Q^*}  \\
		&\le~ \hd{T[p \dd p+ n'-1]}{T[1\dd  n']} + \hd{T[1\dd \lfloor (\ell_j' - 1)/q \rfloor \cdot q]}{Q^*} \\
		&\le~ \hd{T[p \dd p+ n'-1]}{T[1\dd n']} + \hd{T[1\dd \ell_j']}{Q^*} \\
		&\le~  3k
	\end{align*} 
\end{observation}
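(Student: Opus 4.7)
The plan is to verify the two assertions of the observation separately: the identification of $p$ as a $k$-mismatch occurrence position of $T[1 \dd n']$, and the chain of Hamming distance bounds culminating in $\hd{T[p \dd p+n'-1]}{Q^*} \le 3k$.

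The first assertion follows directly from \cref{obs:occ_candidates}. Since $p$ is a $k$-mismatch period of $T$ and $n' = \lfloor (n-p+1)/q \rfloor \cdot q \le n-p+1$, setting $\ell = n'-1 \le n-p$ in that observation shows that $p$ is the starting position of a $k$-mismatch occurrence of $T[1 \dd n']$, i.e.\ $\hd{T[p \dd p+n'-1]}{T[1 \dd n']} \le k$.

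For the chain of inequalities, the first line is the triangle inequality for the Hamming distance, routed through the intermediate string $T[1 \dd n']$. For the second inequality, I would use the hypothesis $p > n-\ell_j'+1$, which gives $n-p+1 \le \ell_j'-1$, hence $n' = \lfloor (n-p+1)/q \rfloor \cdot q \le \lfloor (\ell_j'-1)/q \rfloor \cdot q$. Since $n'$ is a multiple of $q$, both $T[1 \dd n']$ and $Q^\infty[1 \dd n']$ are prefixes of $T[1 \dd \lfloor (\ell_j'-1)/q \rfloor \cdot q]$ and $Q^\infty[1 \dd \lfloor (\ell_j'-1)/q \rfloor \cdot q]$, respectively, so truncating at $n'$ can only remove mismatches. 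The third inequality applies the same prefix-monotonicity, noting $\lfloor (\ell_j'-1)/q \rfloor \cdot q \le \ell_j'$. The final step substitutes the bound $\hd{T[p \dd p+n'-1]}{T[1 \dd n']} \le k$ from the first assertion together with $\hd{T[1 \dd \ell_j']}{Q^*} = \hd{P_j'}{Q^*} < 2k$, which holds by construction of $P_j'$ in the branch where return is executed in Line~\ref{line:periodic_extension_12} of \cref{alg:prefix_extension_12} (cf.\ \cref{prop:periodic_extension_12}).

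The only subtle point is the monotonicity of the Hamming distance to $Q^\infty$ when truncating at a multiple of $q$: this works precisely because the copies of $Q$ in $Q^\infty$ align at positions that are multiples of $q$, and this is exactly why $n'$ is defined as $\lfloor (n-p+1)/q \rfloor \cdot q$ rather than just $n-p+1$. Everything else in the observation reduces to the triangle inequality and direct substitution of earlier bounds, so I do not expect any serious obstacle beyond making this multiple-of-$q$ alignment explicit.
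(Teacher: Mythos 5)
Your annotation of the chain is faithful to what is written in the observation, and the first three steps are correctly justified: the triangle inequality routed through $T[1\dd n']$, the deduction $n-p+1\le \ell_j'-1$ hence $n'\le \lfloor(\ell_j'-1)/q\rfloor\cdot q$, and the prefix-monotonicity of $\ham(\cdot,Q^*)$. One small mischaracterisation: the multiple-of-$q$ alignment is \emph{not} what makes the prefix-monotonicity work (removing any positions from a Hamming comparison can only decrease the count, aligned or not); the reason $n'$ is a multiple of $q$ is so that $T[p\dd p+n'-1]$ can later be compared against the whole power $Q^{n'/q}$ in the streaming subroutine.

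The substantive gap is in the last line. You invoke $\ham(T[1\dd\ell_j'],Q^*)=\ham(P_j',Q^*)\le 2k$ and explicitly restrict this to the branch where \cref{alg:prefix_extension_12} returns in Line~\ref{line:periodic_extension_12}. But the observation and the surrounding procedure for $\per_k^j\cap[n-\ell_j'+2\dd n]$ are invoked for both branches of \cref{alg:prefix_extension_12} (see the use of \cref{obs:ham_aperiodic_j12} inside the proof of \cref{lm:analysis_j12}, with no branch distinction). In the branch that returns in Line~\ref{line:aperiodic_extension_12} the algorithm returns precisely \emph{because} $\ham(T[1\dd\ell_j'],Q^*)>2k$, so your justification of the last inequality collapses there. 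Note that the displayed chain in the paper has the same latent weakness, since its third line also detours through $\ham(T[1\dd\ell_j'],Q^*)$. The fix is to stop at the second line: $\ell_j'$ is always a multiple of $q$, so $\lfloor(\ell_j'-1)/q\rfloor\cdot q=\ell_j'-q$, and the loop invariant of \cref{alg:prefix_extension_12} guarantees $\ham(T[1\dd\ell_j'-q],Q^*)\le 2k$ in \emph{both} branches (initially $\ham(T[1\dd\lambda_j],Q^*)\le\ham(P_j,Q^*)<2k$, and the Line~5 check preserves the bound for the value of $\ell_j'$ \emph{before} the final increment). Combined with $\ham(T[p\dd p+n'-1],T[1\dd n'])\le k$ this yields $\le 3k$ uniformly.
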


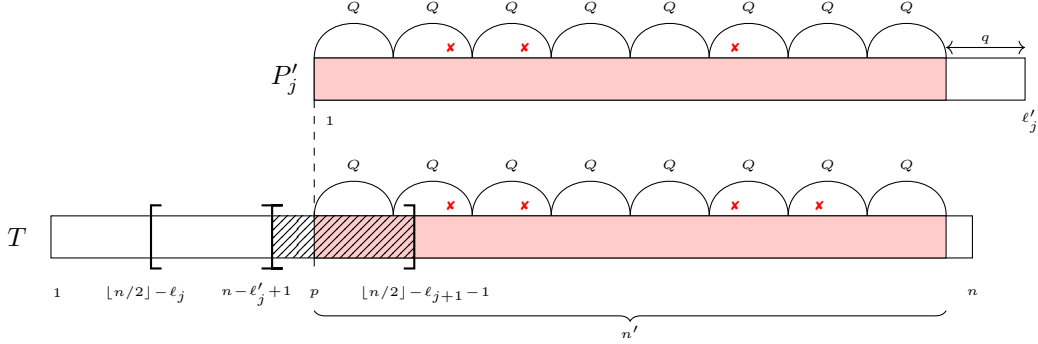
\begin{figure}[ht]
	\begin{center}
		\begin{tikzpicture}[scale=1.4]
	\begin{scope}[xshift = 2.5cm]

	\draw (1,0) rectangle (7.75,.4);
	\draw[fill=red!20] (1, 0) rectangle (7, 0.4);
	\draw[thin,<->] (7,0.5) -- (7.75,0.5) node[yshift=3, pos=0.5] {\tiny{$q$}};
	
	\foreach \x in {0,...,7} {		
		\pgfmathsetmacro{\xa}{0.75*\x+1}
		\pgfmathsetmacro{\xb}{0.75*(\x+1)+1}
		\path (\xa, 0.4) coordinate (A) (\xb, 0.4) coordinate (B);
		\draw [bend left=90, looseness=1.50] (A) to node[above] {\tiny{$Q$}} (B);
	}

	\node[right]  at (0.5,0.2) {$P_j'$};
	
	\node at (3,0.5) {\tiny{\textcolor{red}{\ding{56}}}};
	
	\node at (5,0.5) {\tiny{\textcolor{red}{\ding{56}}}};
	\node at (2.3,0.5) {\tiny{\textcolor{red}{\ding{56}}}};

		
	\end{scope}

	\draw[dashed] (3.5,0) -- (3.5, -1.2);
	
	\begin{scope}[yshift=-1.5cm]
		\draw (1,0) rectangle (9.75,0.4);
		\draw[fill=red!20] (3.5, 0) rectangle (9.5, 0.4);
		
		\foreach \x in {4,...,11} {		
			\pgfmathsetmacro{\xa}{0.75*\x + 0.5}
			\pgfmathsetmacro{\xb}{0.75*(\x+1) + 0.5}
			\path (\xa, 0.4) coordinate (A) (\xb, 0.4) coordinate (B);
			\draw [bend left=90, looseness=1.50] (A) to node[above] {\tiny{$Q$}} (B);
		}
		
		\node[right]  at (0.5,0.2) {$T$};
		
		\node at (7.5,0.5) {\tiny{\textcolor{red}{\ding{56}}}};
		\node at (4.8,0.5) {\tiny{\textcolor{red}{\ding{56}}}};
		\node at (5.5,0.5) {\tiny{\textcolor{red}{\ding{56}}}};			
		\node at (8.3,0.5) {\tiny{\textcolor{red}{\ding{56}}}};

		\draw[thick] (2.05,-0.1) -- (1.95,-0.1) -- (1.95,0.5) -- (2.05,0.5);
		\draw[thick] (3,-0.1) -- (3.2,-0.1) --(3.1,-0.1) -- (3.1,0.5) -- (3.2,0.5) -- (3,0.5);
		\draw[thick] (4.35,-0.1)--(4.45,-0.1) -- (4.45,0.5) -- (4.35,0.5);
		\node[label = {above: \tiny{$1$}}]  at (1.06,-0.55) {};
		\node[label = {above: \tiny{$\lfloor n/2\rfloor \!-\! \ell_j$}}]  at (1.9,-0.6) {};
		
		\node[label = {above: \tiny{$n \! - \! \ell_j' \! + \! 1$}}]  at (2.95,-0.63) {};
		\node[label = {above: \tiny{$p$}}]  at (3.5,-0.55) {};
		\draw (3.5,-0.08) -- (3.5,0.05);
		\node[label = {above: \tiny{$\lfloor n/2\rfloor \!-\! \ell_{j+1}\!-\!1$}}]  at (4.55,-0.6) {};
		\node[label = {above: \tiny{$n$}}]  at (9.75,-0.525) {};
		\fill[pattern=north east lines] (3.1,0) rectangle (4.45,0.4);
	\end{scope}

	\node[fill=white, draw=none]  at (3.65,-0.2) {\tiny{$1$}};
	\node at (10.3,-0.2) {\tiny{$\ell_j'$}};

\draw[decorate,decoration={brace,mirror,amplitude=4pt},yshift=-2cm]
  (3.5,0) -- (9.5,0) node[midway,yshift=-8pt] {\tiny{$n'$}};

\end{tikzpicture}
	\end{center}
	
	\caption{If $p \in \per_k^j \cap [1 \dd n-\ell_j'+1]$, then $p$ is the starting point of an occurrence of $P_j'$ in $T_j'$ and we adopt the same strategy as for the case $j \geq 3$. If $p \in \per_k^j \cap [n-\ell_j'+2 \dd n]$, then $T[p \dd p + \lfloor (n-p+1)/q \rfloor \cdot q-1]$ has small Hamming distance to $Q^\infty$.}
	\label{fig:case_12}
\end{figure}

The high-level idea of our algorithm is to find the leftmost position $p \in \per_k^j$ such that $p > n-\ell_j'+1$ and to encode $T[p \dd p+ \lfloor (n-p+1)/q \rfloor \cdot q-1]$ via the mismatch information between it and $Q^{\lfloor (n-p+1)/q \rfloor}$. We then use this encoding to compute $\per_k^j \cap [n-\ell_j'+2 \dd n]$. Unfortunately, we cannot find $p$ exactly. Instead, we compute the leftmost position $p'$ such that $p \ge p' > n-\ell_j'+1$, $p'$ is the starting position of a $k$-mismatch occurrence of $P_j$ in $T_j$, and $\hd{T[p' \dd p'+ \lfloor (n-p'+1)/q \rfloor \cdot q-1]}{Q^*} \le 3k$. 

To implement this idea, we run the $k$-mismatch streaming pattern matching algorithm for a pattern $P_j$ and the text $T_j$. Assume that $p'+\ell_j-1$ is the endpoint of the first $k$-mismatch occurrence of $P_j$ in $T_j$ such that $p' > n-\ell_j'+1$. We retrieve $\skh_{3k}(T[1 \dd p'-1])$ from $\skh_{3k}(T[1\dd p'+\ell_j-1])$, $\MI(P_j, T[p' \dd p'+\ell_j-1])$, and $\skh_{3k}(P_j)$ via \cref{thm:ham_sketch} in $O(k \log n)$ space and $O(k \log^3n)$ time. Furthermore, we compute $\score(T[p' \dd p'+\ell_j'-1])$ with $\score(P_j)$ and the mismatch information between the two strings, and from this weight and $\score(T[1 \dd p'+\ell_j-1])$, we deduce $\score(T[1 \dd p'-1])$ in $O(k \cdot t_\score(n))$ time and $s_\score(n)$ space.  Next, for each position of the form $p'+i \cdot q-1 > p'+\ell_j-1$, we do the following when $T[p'+i \cdot q-1]$ arrives: 
	
	\begin{enumerate}
		\item First, compute $\skh_{3k}(T[p' \dd p'+i \cdot q-1])$ from $\skh_{3k}(T[1 \dd p'+i \cdot q-1])$ and $\skh_{3k}(T[1 \dd p'-1])$;
		\item If $\hd{T[p' \dd p'+i \cdot q-1]}{Q^i} \le 3k$, compute $\MI(T[p' \dd p'+i \cdot q-1], Q^i)$ and store it for the next $q$ positions;
		\item Else, compute $\skh_{3k}(T[1 \dd p'+q-1])$ from $\skh_{3k}(T[1 \dd p'-1])$, $\skh_{3k}(Q)$, and $\MI(T[p' \dd p'+(i-1) \cdot q-1], Q^{i-1})$. Also, compute $\score(T[1 \dd p'+q-1])$ from $\score(T[1\dd p'-1])$, $\score(Q)$, and $\MI(T[p' \dd p'+ (i-1) \cdot q-1], Q^{i-1})$. Finally, update $p'=p'+q$. 
	\end{enumerate}
	
If $i$ becomes equal to zero at any moment of the execution of the algorithm, we conclude that $p$ does not exist, i.e. $\per_k^j \cap [n-\ell_j'+2 \dd n] = \emptyset$ (otherwise, since $p'+\ell_j-1 \ge \lfloor \frac{n}{2} \rfloor$, we have $p \in [p', p'+\ell_j-1]$, and $i \ge 1$ for the duration of the algorithm). The steps are implemented via \cref{thm:ham_sketch} and \cref{prop:weight} in $O(k \log n + s_\score(n))$ space and $O(k (\log^3 n + t_\score(n)))$ total time. Finally, when $T[p'+ \lfloor (n-p'+1)/q \rfloor \cdot q+1]$ arrives, we continue running the algorithm of \cref{cor:streaming_sketch} to compute $\skh_{3k}(T[p' \dd p'+ \lfloor (n-p'+1)/q \rfloor \cdot q + r])$, where $r$ is defined as in \cref{lm:sketching_suffix_Q}.

When $T[n]$ arrives, we compute $\per_k^j \cap [n-\ell_j'+2 \dd n]$ as follows. By \cref{obs:ham_aperiodic_j12}, each position $t$ in the set must have form $p' + i \cdot q$ for some integer $i$. We test each such position as follows:

\begin{enumerate}
	\item \textbf{Compute $\skh_{3k}(T[t \dd ])$.} First, retrieve $\skh_k(T[1 \dd t-1])$ from sketches $\skh_{3k}(T[1\dd p'-1])$, $\skh_{3k}(Q^{(t-p')/q})$, and the mismatch information $\MI(T[p' \dd t-1], Q^*)$ in $O(k \log^2 n)$ time and $O(k \log n)$ space via \cref{thm:ham_sketch}. Second, compute $\skh_{3k}(T[t \dd ])$ from $\skh_{3k}(T)$ and $\skh_{3k}(T[1 \dd t-1])$ in $O(k \log^2 n)$ time and $O(k \log n)$ space via another application of \cref{thm:ham_sketch}. 
	\item \textbf{Compute $\score(T[t \dd ])$.} First, retrieve $\score(T[1 \dd t-1])$ from $\score(T[1\dd p'-1])$, $\score(Q^{(t-p')/q})$, and the mismatch information $\MI(T[p' \dd t-1], Q^*)$ in $O(k \cdot t_\score(n))$ time and $s_\score(n)$ space. Second, compute $\score(T[t \dd ])$ from $\score(T)$ and $\score(T[1 \dd t-1])$. 
	\item \textbf{Compute $\skh_{3k}(T[1 \dd n-t+1])$.} Since $p'$ is a starting position of a $k$-mismatch occurrence, by definition of $r$, we have $n-2(p'-1) = r \pmod q$. As a result, we have $(n-t+1) - p' +1 = j \cdot q + r$. If $n-t+1 > \lfloor (n-p'+1)/q \rfloor \cdot q$, we compute the sketch from $\skh_{3k}(T[1\dd p'-1])$ and $\skh_{3k}(T[p' \dd p'+ \lfloor (n-p'+1)/q \rfloor \cdot q + r])$. Otherwise, $\skh_{3k}(T[1 \dd n-t+1])$ can be computed via \cref{thm:ham_sketch}: start by computing $\skh_{3k}(T[p'\dd n-t+1])$ from $\skh_{3k}(Q^i)$, $\skh_{3k}(Q[1\dd r])$, and the mismatch information $\MI(T[p'\dd n-t+1], Q^iQ[1 \dd r])$, and then use $\skh_{3k}(T[1\dd p'-1])$ to compute $\skh_{3k}(T[1 \dd n-t+1])$. 
	\item \textbf{Compute $\hdk{T[t\dd n]}{T[1\dd n-t+1]}{k}$} using the computed sketches via \cref{thm:ham_sketch}. If it is at most $k$, output $t$ as a $k$-mismatch period, and return $\score(T[t \dd])$ and $\MI(T[t \dd], T[\dd n-t+1])$.
\end{enumerate}

\analysisonetwo*
\begin{proof}
The correctness follows from \cref{prop:filter_extension_12}, \cref{prop:periodic_extension_12} and  \cref{obs:ham_aperiodic_j12}. Running the $k$-mismatch algorithm takes $O(n k \log^4 n)$ time and $O(k \log^2 n)$ space, and we maintain a constant number of sketches, taking $O(n \log^2n)$ time and $O(k \log n)$ space and a constant number of weights in $O(n t_\score(n))$ time and $s_\score(n)$ space. To encode $T[p' \dd p'+\lfloor (n-p'+1)/q \rfloor \cdot q-1]$, we need $O(k \log n + s_\score(n))$ space and $O(n \cdot k t_\score(n) \log^3 n)$ time. The process of testing the candidate $k$-periods uses $O(k \log n + s_\score(n))$ space and $O(k t_\score(n) \log^3 n)$ time per candidate, and it is iterated $\le n$ times. The algorithm is correct w.h.p.
\end{proof}

\section{\texorpdfstring{A gap in the previous streaming algorithm for computing $k$-mismatch periods}{A gap in the previous streaming algorithm for computing k-mismatch periods}}
\label{sec:counterexample}
In this section, we point to the specific claim in the correctness proof of the previous streaming algorithm for computing $k$-mismatch periods~\cite{ergun2017streamingperiodicitymismatches} which is, in our opinion, not true.

Let $S$ be a string length $n$, and $1 \leq p < q \leq \frac n 2$ two $k$-mismatch periods of $S$ such that $k \cdot (p + q) \leq i \leq \frac n 2 - k \cdot (p + q)$, where $i$ is a position of $S$, and $q \geq (2k+1) \cdot \textsf{gcd}(p,q)$. The construction of \cite{ergun2017streamingperiodicitymismatches} introduces a  grid defined over a set $\{-k, \dots, k\}^2$. A node $(a,b)$ of the grid represents a position $i + ap + bq$ of $S$. For a node representing a position $j$, we add edges connecting it to nodes representing $j+p$, $j+q$, $j-p$, and $j-q$ (if they exist in the grid). Finally, we say that an edge $(i,j)$ of the gird is \emph{bad} if $S[i] \neq S[j]$. 

The proof of \cite[Theorem 28]{ergun2017streamingperiodicitymismatches}, one of the key elements of the streaming algorithm of Ergün et al., relies on the fact that there are a few bad edges in the grid. One of the steps in their proof of this fact is the following claim: 

\begin{proposition}[{\cite[Claim 20]{ergun2017streamingperiodicitymismatches}}]\label{false-claim}
The nodes of the grid correspond to distinct positions of~$S$.
\end{proposition}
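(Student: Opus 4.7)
The plan is a straightforward number-theoretic argument exploiting the hypothesis $q \ge (2k+1)\gcd(p,q)$. Suppose, toward a contradiction, that two distinct grid nodes $(a_1,b_1) \ne (a_2,b_2) \in \{-k,\ldots,k\}^2$ map to the same position of $S$, i.e.,
\[
i + a_1 p + b_1 q = i + a_2 p + b_2 q,
\]
which rearranges to $(a_1-a_2)\,p = (b_2-b_1)\,q$.

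Setting $g=\gcd(p,q)$ and writing $p=gp'$, $q=gq'$ with $\gcd(p',q')=1$, dividing through by $g$ yields $(a_1-a_2)\,p' = (b_2-b_1)\,q'$. Since $\gcd(p',q')=1$, we have $q' \mid (a_1-a_2)$. But the hypothesis gives $q' = q/g \ge 2k+1$, while $|a_1-a_2| \le 2k$. Hence $a_1=a_2$, and then $(b_2-b_1)\,q = 0$ forces $b_1=b_2$, contradicting $(a_1,b_1) \ne (a_2,b_2)$.

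The proof as sketched is genuinely routine for the two-period setting described here; the position bounds $k(p+q) \le i \le n/2 - k(p+q)$ play no role in distinctness itself and are presumably used elsewhere to ensure each $i + a p + b q$ lies in a valid index range of $S$. The substantive question, and where the present paper seems to locate the actual gap in Ergün et al., is whether the analogous statement remains true once the grid is extended to three or more periods: there one would need a joint coprimality condition across all periods simultaneously, and pairwise hypotheses of the form $q_j \ge (2k+1)\gcd(p_j,q_j)$ do not obviously suffice to prevent coincidences among the higher-dimensional lattice of indices $i + \sum_j a_j p_j$. I would expect that the main obstacle to extending Theorem~28 of~\cite{ergun2017streamingperiodicitymismatches} is precisely the failure of the clean one-equation, two-variable Bezout argument above to survive in the multivariable setting.
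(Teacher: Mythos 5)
Your proof of the proposition itself is correct and complete: the Bezout/divisibility argument from $(a_1-a_2)p' = (b_2-b_1)q'$ with $\gcd(p',q')=1$, $q' \ge 2k+1$, and $|a_1-a_2| \le 2k$ is exactly the right way to force $a_1=a_2$ and then $b_1=b_2$. Your observation that the range constraint on $i$ is only there to keep $i+ap+bq$ inside $[1\dd n]$, not to ensure distinctness, is also accurate.

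One thing worth noting is that the paper does not actually contain a proof of this proposition. It is quoted verbatim as \cite[Claim 20]{ergun2017streamingperiodicitymismatches} and taken as given; the entire point of \cref{sec:counterexample} is not to prove or disprove the two-period case but to exhibit a counterexample to its claimed $m$-dimensional generalization. Your closing paragraph correctly anticipates where the gap lies. For the record, the hypothesis Ergün et al.\ carry into the $m>2$ setting is the single condition $p_m \ge (2k+1)\gcd(p_1,p_m)$ rather than a family of pairwise conditions, and the paper's counterexample $S = \texttt{a}^{40}\texttt{ba}^{60}$ with $k=2$, periods $1,\dots,50$, shows concretely that nodes such as $a_i=2, a_{2i}=-1$ all collapse to position $41$, giving at least $25$ coincident nodes and far more than $mk$ bad edges. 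So your intuition that the one-equation Bezout argument does not survive the jump to a many-variable lattice relation $\sum_j a_j p_j = 0$ is precisely what the paper demonstrates by example.
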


As an immediate corollary, and since $p$ and $q$ are both $k$-mismatch periods of $S$, they immediately derive that there are at most $2k$ bad edges in the grid.

The authors then extend their approach to the case when $S$ has $m \in \mathbb{N}$ $k$-mismatch periods $p_1< \dots< p_m$, where $p_m \geq (2k+1) \cdot \textsf{gcd}(p_1,p_m)$. One can construct an $m$-dimensional grid in a similar way to the case $m=2$, and it is claimed that in this grid, ``the total number of bad edges is at most $mk$'' (Page 21). However, consider the  string $S = \texttt a^{40} \texttt{ba}^{60}$. All integers smaller than $50$ are $2$-mismatch periods of $S$, and in this example we have $m = 50$, $k=2$ and $\textsf{gcd}(1, 50) = 1$, verifying the assumption $50 \geq (2k+1)\cdot \textsf{gcd}(1,50)$. Let $\{-2, \dots, 2\}^{50}$ be the grid centred around the index $41$ (i.e the only character $ \texttt b$). Similarly to the case $m=2$, a point $(a_1,\dots ,a_{50})$ of the grid represents the position $41+ \sum i \cdot a_i$, and an edge between nodes representing positions $i,j$ is bad if $S[i] \neq S[j]$. Note for any $i \leq 25$, the node $(a_1, \dots , a_{50})$ with $a_i = 2$, $a_{2i} = -1$ and $a_j = 0$ for $j \notin \{i, 2i\}$ represents the position $41 + 2i - 2i = 41$. As a result, $41$ is represented by at least distinct $25$ different nodes in the grid, which contradicts \cref{false-claim} for $m \neq 2$. Furthermore, these nodes are connected with each of their neighbours with a bad edge. Since each node has at least $5$ neighbours, there are at least $125$ distinct bad edges, contradicting the upper bound on the number of bad edges which was $mk = 100$.

\end{document}